\documentclass{article}

\usepackage{amsmath}
\usepackage{arxiv}
\usepackage{diagbox} 
\usepackage{bm}
\usepackage[utf8]{inputenc} % allow utf-8 input
\usepackage[T1]{fontenc}    % use 8-bit T1 fonts
\usepackage{url}            % simple URL typesetting
\usepackage{booktabs}       % professional-quality tables
\usepackage{amsfonts}  
\usepackage{natbib} % blackboard math symbols
\usepackage{subfigure} 
\usepackage{graphicx}
\newcommand{\diag}{\mathrm{Diag}}
\newcommand{\Tr}{\mathrm{Tr}}

\usepackage{amsmath,amssymb,mathrsfs,amsthm}
\newtheorem{proposition}{Proposition}[]
\newtheorem{assumption}{Assumption}[]
\newtheorem{theoreme}{Theorem}[]
\newtheorem{remark}{Remark}[]
\newtheorem{lemma}{Lemma}[]
\newtheorem{corollary}{Corollary}[]
\usepackage{color}
\graphicspath{ {./images/} }
\usepackage{dsfont}

\usepackage{amsmath,amssymb,mathrsfs,amsthm}

\title{Holdout cross-validation for large non-Gaussian covariance matrix estimation using Weingarten calculus}

\author{
    Lamia Lamrani$^{1}$, Benoît Collins$^{2}$ and Jean-Philippe Bouchaud$^{3,4,5}$\\ 
    $^{1}$
    Laboratoire de Math\'ematiques et Informatique pour la Complexit\'e et les Syst\`emes,\\ Universit\'e Paris-Saclay, CentraleSup\'elec, 91192 Gif-sur-Yvette, France \\
    $^{2}$ 
  Department of Mathematics, Kyoto University, Kitashirakawa Oiwake-cho, Sakyo-ku, 
Kyoto 606-8502, Japan\\
$^{3}$Chair of Econophysics and Complex Systems, Ecole Polytechnique, 91128 Palaiseau, France\\
$^{4}$Capital Fund Management, 75007 Paris, France \\
$^{5}$Académie des Sciences, Paris 75006, France
}

\usepackage{amsthm}

\theoremstyle{definition}
\newtheorem{definition}{Definition}[section]

\begin{document}

\maketitle
\begin{abstract}
Cross-validation is one of the most widely used methods for model selection and evaluation; its efficiency for large covariance matrix estimation appears robust in practice, but little is known about the theoretical behavior of its error. In this paper, we derive the expected Frobenius error of the holdout method, a particular cross-validation procedure that involves a single train and test split, for a generic rotationally invariant multiplicative noise model, therefore extending previous results to non-Gaussian data distributions. Our approach involves using the Weingarten calculus and the Ledoit-Péché formula to derive the oracle eigenvalues in the high-dimensional limit. When the population covariance matrix follows an inverse Wishart distribution, we approximate the expected holdout error, first with a linear shrinkage, then with a quadratic shrinkage to approximate the oracle eigenvalues. Under the linear approximation, we find that the optimal train-test split ratio is proportional to the square root of the matrix dimension. Then we compute Monte Carlo simulations of the holdout error for different distributions of the norm of the noise, such as the Gaussian, Student, and Laplace distributions and observe that the quadratic approximation yields a substantial improvement, especially around the optimal train-test split ratio. We also observe that a higher fourth-order moment of the Euclidean norm of the noise vector sharpens the holdout error curve near the optimal split and lowers the ideal train-test ratio, making the choice of the train-test ratio more important when performing the holdout method. 

\end{abstract}

% keywords can be removed
%\keywords{First keyword \and Second keyword \and More}

\section{Introduction}

The estimation of covariance matrices is as the heart of many methods in multivariate statistics with a broad range of applications such as physics \cite{taylor2013putting,schlitter1993estimation}, signal processing \cite{ottersten1998covariance,aubry2017geometric}, neurosciences \cite{varoquaux2010brain,zhuang2020technical} and finance \cite{laloux1999noise,froot1989consistent}. However, in a high dimension setting, when the number of features is not negligible with respect to the number of observations, the sample covariance matrix is inefficient \cite{ledoit2003honey}. Moreover, several applications such as Markowitz portfolio selection in finance additionally require the estimation of the inverse of the covariance matrix, for which the sample covariance is also largely inefficient \cite{michaud1989markowitz}. 

To address the lack of robustness of the sample covariance estimator in the high dimension setting, many methods have been developed, some of them which can be linked or understood with random matrix theory: eigenvalue clipping \cite{laloux1999noise, bun2017cleaning}, linear shrinkage \cite{ledoit2003honey}, more recently the Non Linear Shrinkage estimator (NLS) which have been first derived by Ledoit and Péché \cite{ledoit2011eigenvectors} and then further developed by several authors such as Ledoit and Wolf\cite{ledoit2012nonlinear,ledoit2020analytical} and Bouchaud, Bun and Potters \cite{bun2016rotational,bun2017cleaning,bun2018overlaps}, cross-validation (CV)\cite{bartz2016cross,berrar2019cross}, factor models \cite{fan2008high} and Bayesian estimation \cite{yang1994estimation} to quote the most common methods. In this paper, we will focus on the CV and more precisely on the holdout, which is a particular CV procedure.

CV is one of the most widely used methods for model selection and evaluation and finds applications in many areas such as ecology \cite{hijmans2012cross,yates2023cross}, physics \cite{iwama1989phillips}, medicine \cite{bradshaw2023guide}, and finance \cite{tan2020estimation}. Generally, all CV procedures involve splitting the data between a train and a test set \cite{berrar2019cross}. The most famous CV procedures are the k-folds CV, the leave-one-out CV (LOOCV) being a particular case of the former, and the holdout, also called validation or simple cross-validation \cite{berrar2019cross}. The k-folds procedure works as follows: the data are partitioned into $k$ equally sized test sets; then the model is evaluated on the test set and trained on the remaining data. Finally, the result is averaged among all the test sets. The LOOCV is a particular case of the k-fold procedure, where the $t$ data points are spitted into $t$ test sets, each of one observation. The holdout procedure consists of exactly one split between train and test without the averaging procedure, which is why it is sometimes called simple CV. To define the split between train and test to perform the holdout, we define the train-test ratio $k$ as the ratio between the number of data points and the length of the test set.

In the framework of covariance cleaning, the most common CV procedures are also the k-folds, the LOOCV and the holdout. A common recommendation to calibrate the k-fold CV is to choose a k-fold between $5$ and $10$ \cite{bartz2016cross,bun2018overlaps}. To our knowledge, the question of the CV error for large covariance matrix estimation has been addressed theoretically in the case of the holdout method. In an article from 2016 \cite{Lam2016,lam2015supplement}, Lam defines assumptions under which the Frobenius and Stein's loss of the holdout estimator, which he calls "Nonparametric Eigenvalue-Regularized Covariance Matrix Estimator" (NERCOME), converge, in the high-dimension limit, towards the error of the NLS of Ledoit and Wolf \cite{ledoit2011eigenvectors,ledoit2012nonlinear}. From this result, he derives an optimal region for the split of the holdout to ensure asymptotic convergence to the NLS. A few authors have raised concerns about whether this asymptotic property of the optimal split was really useful in the large but finite setting \citep{engel2017overview,tan2025estimation}. In \cite{lam2024holdout}, the authors computed the expected Frobenius error between the holdout estimator and the population covariance matrix with Gaussian data as a function of the oracle eigenvalues. Then assuming an inverse Wishart population covariance matrix, they obtained an explicit formula for the expected error in the high dimension limit and derived the optimal train-test ratio $k$, which is equal to the ratio between the number of data points and the length of the test set, and find it to be proportional to the square root of the matrix dimension. Monte Carlo analysis showed an excellent agreement between the expected error and the realized one, showing that the optimal splits derived in the high-dimensional limit are not equivalent in the finite but large setting. In this paper, we use the Weingarten calculus to tackle rotationally invariant multiplicative noises and also consider settings that do not satisfy the assumptions of Ledoit-Péché and Lam\cite{ledoit2011eigenvectors,Lam2016}.

Weingarten calculus can be seen as a generalization of the Wick theorem \cite{wick1950evaluation} for compact groups \cite{collins2017weingarten} and was first introduced by Don Weingarten in \cite{weingarten1978asymptotic}. Indeed, while Wick theorem makes it possible to compute joint moments of Gaussian variables, Weingarten calculus relies on the existence of a left-invariant Haar measure and a right-invariant measure on compact groups to compute joint moments \cite{collins2022moment}. In the framework of random matrix theory, the natural applications are on the groups of unitary and orthogonal matrices and the theory allows to compute moments of square random matrices that are rotationally invariant by conjugation by orthogonal or unitary matrices. Let us mention that there are also results to compute polynomial moments of rectangular matrices that are left- or right-invariant \cite{saad2013random}. Moreover, this theory finds many applications and has already been applied to the estimation of large covariance matrices and to finance to estimate the risk of the mean variance optimal portfolio \cite{collins2014integration, saad2013random}.

This paper is organized as follows, in Section \ref{Section 2} we set the definitions, in Section \ref{Section 3} we derive the expected Frobenius error of the holdout estimator for rotationally invariant multiplicative noise using the Weingarten calculus. Then we derive a linear and a quadratic approximation of the error for an inverse Wishart population covariance matrix and compute explicitly the optimal train-test ratio that minimizes the holdout error in the linear case. Finally, in Section \ref{Section 4} we verify the relevance of our approach on different distributions of the norm of the noise vector, namely  Gaussian, Student and Laplace.

\section{Definition}
\label{Section 2}
\subsection{Large covariance matrix cleaning}
Let us first define the multiplicative noise model that we will consider and its associated sample covariance matrix.
\begin{definition}
     Let $\bm{x}_{1},\dots,\bm{x}_{t}$ i.i.d. observations of the random vector $\bm{\xi}\in\mathbb{R}^{n}$ such that 
    \begin{equation}
        \mathbb{E}(\bm{\xi}\bm{\xi}^{T})=\bm{\mathds{1}}.
    \end{equation}
    Let us denote by $\bm{X}=(\bm{x}_{1}\vert\dots\vert\bm{x}_{t})\in\mathbb{R}^{n\times t}$ the noise matrix. Let us also introduce $\bm{\Sigma}\in\mathbb{R}^{n\times n}$, a semi-definite positive symmetric matrix and $\sqrt{\bm{\Sigma}}$ its symmetric square-root. Then the sample covariance matrix over the whole data set $\bm{E}$ equals

    \begin{equation}
        \bm{E}=\frac{1}{t}\sqrt{\bm{\Sigma}}\bm{X}\bm{X}^{T}\sqrt{\bm{\Sigma}}.
    \end{equation}
\end{definition}

$\bm{E}$ is an unbiased estimator of population covariance $\bm{\Sigma}$ that corresponds to the maximum likelihood estimator when the noise vectors are Gaussian. However, when $n$ is large without being negligible with respect to $t$ this estimator lacks efficiency. 

\subsubsection{Linear shrinkage}
A first alternative method to sample covariance to estimate the population covariance matrix is to linearly shrink the sample covariance matrix towards a target matrix while minimizing a distance to the population covariance matrix $\bm{\Sigma}$. Generally, the distance considered to measure the error is the Frobenius norm, and we will take the identity as our target matrix.

\begin{definition}
    The normalized Frobenius estimation error between an estimator $\bm{\Xi}$ and a population matrix $\bm{\Sigma}$ is \cite{potters2020first}
    \begin{equation}
    \Vert \bm{\Xi} - \bm{\Sigma}\Vert^{2}_{\textrm{F}}=\tau ((\bm{\Xi} - \bm{\Sigma})^{2}) ,
    \end{equation}
    where $\tau(\bullet)=\frac{1}{n}\Tr(\bullet)$ is the normalized trace. The subscript $F$ will be dropped henceforth as we will only consider the Frobenius norm throughout this paper.
\end{definition}

Then the linear shrinkage problem amounts to
\begin{equation}
\label{eq:linear_shrinkage_formulation}
    \underset{r\in[0,1]}{\mathrm{argmin}}\Vert r\bm{E}+(1-r)\mathds{1} - \bm{\Sigma} \Vert_{F}^{2} 
\end{equation}

and the solution to this optimization problem is provided in the following proposition.

\begin{proposition}
\label{prop:linear_shrinkage}
Let us consider $\bm{\Sigma}$ a semidefinite positive matrix and $\bm{E}$ the sample covariance estimator of $\bm{\Sigma}$. We assume that both $\bm{\Sigma}$ and $\bm{E}$ are normalized such that $\mathbb{E}\left[\tau(\bm{\Sigma})\right]=\mathbb{E}\left[\tau(\bm{E})\right]$=1. Then the optimal linear shrinkage coefficient that minimizes the expected Frobenius error of estimation written in Eq.\ \eqref{eq:linear_shrinkage_formulation} is \cite{ledoit2003honey}
    \begin{equation}
        r=\frac{\mathbb{E}\left[\tau(\bm{\Sigma})^{2} \right]-1}{\mathbb{E}\left[\tau(\bm{E})^{2} \right]-1}.
    \end{equation}
\end{proposition}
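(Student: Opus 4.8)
The plan is to treat the objective in Eq.~\eqref{eq:linear_shrinkage_formulation} as a scalar quadratic in $r$ and minimize it explicitly. Writing $\bm{A} = \bm{E} - \mathds{1}$ and $\bm{B} = \mathds{1} - \bm{\Sigma}$, the argument of the norm becomes $r\bm{A} + \bm{B}$, and since $\tau$ is linear and cyclic on products of symmetric matrices,
\begin{equation}
\mathbb{E}\,\tau\big((r\bm{A}+\bm{B})^{2}\big) = r^{2}\,\mathbb{E}\,\tau(\bm{A}^{2}) + 2r\,\mathbb{E}\,\tau(\bm{A}\bm{B}) + \mathbb{E}\,\tau(\bm{B}^{2}).
\end{equation}
This is a convex parabola in $r$, its leading coefficient $\mathbb{E}\,\tau(\bm{A}^{2})=\mathbb{E}\,\tau((\bm{E}-\mathds{1})^{2})\ge 0$ being nonnegative because $\bm{A}$ is symmetric. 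Hence its unconstrained minimizer is $r^{\star} = -\mathbb{E}\,\tau(\bm{A}\bm{B})/\mathbb{E}\,\tau(\bm{A}^{2})$, and it only remains to evaluate the two traces.

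For the denominator I would expand $\tau((\bm{E}-\mathds{1})^{2}) = \tau(\bm{E}^{2}) - 2\tau(\bm{E}) + \tau(\mathds{1})$ and use $\tau(\mathds{1})=1$ together with the normalization $\mathbb{E}\,\tau(\bm{E})=1$, giving $\mathbb{E}\,\tau(\bm{A}^{2}) = \mathbb{E}\,\tau(\bm{E}^{2}) - 1$, which is the claimed denominator once $\tau(\bm{E})^{2}$ is read as $\tau(\bm{E}^{2})$. For the numerator, expanding the product gives $\bm{A}\bm{B} = \bm{E} - \bm{E}\bm{\Sigma} - \mathds{1} + \bm{\Sigma}$, and after taking the trace and the expectation, together with $\mathbb{E}\,\tau(\bm{\Sigma})=\mathbb{E}\,\tau(\bm{E})=1$, one obtains $\mathbb{E}\,\tau(\bm{A}\bm{B}) = 1 - \mathbb{E}\,\tau(\bm{E}\bm{\Sigma})$. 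The single nontrivial ingredient is therefore the cross term $\mathbb{E}\,\tau(\bm{E}\bm{\Sigma})$.

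I expect this cross term to be the crux of the argument, and the key is the unbiasedness of $\bm{E}$ noted just after Definition~1. Conditioning on $\bm{\Sigma}$ and using $\mathbb{E}(\bm{x}_{i}\bm{x}_{i}^{T})=\mathds{1}$, one gets $\mathbb{E}[\bm{E}\mid\bm{\Sigma}] = \tfrac{1}{t}\sqrt{\bm{\Sigma}}\,\mathbb{E}(\bm{X}\bm{X}^{T})\sqrt{\bm{\Sigma}} = \bm{\Sigma}$; hence $\mathbb{E}[\tau(\bm{E}\bm{\Sigma})\mid\bm{\Sigma}] = \tau(\mathbb{E}[\bm{E}\mid\bm{\Sigma}]\,\bm{\Sigma}) = \tau(\bm{\Sigma}^{2})$, and taking the outer expectation yields $\mathbb{E}\,\tau(\bm{E}\bm{\Sigma}) = \mathbb{E}\,\tau(\bm{\Sigma}^{2})$. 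Substituting back gives $\mathbb{E}\,\tau(\bm{A}\bm{B}) = 1 - \mathbb{E}\,\tau(\bm{\Sigma}^{2})$, so that
\begin{equation}
r^{\star} = \frac{\mathbb{E}\,\tau(\bm{\Sigma}^{2})-1}{\mathbb{E}\,\tau(\bm{E}^{2})-1},
\end{equation}
which is the announced formula.

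Finally I would check feasibility to confirm that the box constraint $r\in[0,1]$ in Eq.~\eqref{eq:linear_shrinkage_formulation} is inactive. By Cauchy–Schwarz applied to the eigenvalues, $\tau(\bm{\Sigma}^{2})\ge(\tau(\bm{\Sigma}))^{2}$ pointwise, and combined with Jensen and the normalization this gives $\mathbb{E}\,\tau(\bm{\Sigma}^{2})\ge(\mathbb{E}\,\tau(\bm{\Sigma}))^{2}=1$, so the numerator is nonnegative; since the sample spectrum is more dispersed than the population one, the denominator dominates the numerator, placing $r^{\star}$ in $[0,1]$. The unconstrained minimizer of the convex parabola is then the global solution of Eq.~\eqref{eq:linear_shrinkage_formulation}.
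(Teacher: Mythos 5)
Your proof is correct and arrives at the paper's formula by the same overall scheme --- treat the objective as a quadratic in $r$ and cancel the derivative --- but your handling of the one nontrivial ingredient is genuinely different and in fact stronger. The paper expands $\Vert r\bm{E}+(1-r)\bm{\mathds{1}}-\bm{\Sigma}\Vert^{2}$ directly (its Eq.~\eqref{eq:frob_error_lin_shrink_formulation}, which incidentally drops the $-2(1-r)\tau(\bm{\Sigma})$ cross term, harmless for the minimization since it is affine in $r$ after using the normalizations) and then simply \emph{asserts} the identity $\mathbb{E}\left[\tau(\bm{E\Sigma})\right]=\mathbb{E}\left[\tau(\bm{\Sigma}^{2})\right]$ as ``valid in the high dimensional limit''. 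Your conditioning argument, $\mathbb{E}[\bm{E}\mid\bm{\Sigma}]=\bm{\Sigma}$ plus the tower property, makes that identity \emph{exact} in every finite dimension, so your version of the proposition requires no asymptotics at all; the only hypothesis you should state explicitly is that the noise matrix $\bm{X}$ is independent of $\bm{\Sigma}$ when the latter is random (as in the inverse Wishart application --- for a deterministic $\bm{\Sigma}$, as in the model definition of Section~\ref{Section 2}, no conditioning is even needed). The $\bm{A}=\bm{E}-\bm{\mathds{1}}$, $\bm{B}=\bm{\mathds{1}}-\bm{\Sigma}$ grouping also keeps the bookkeeping clean and avoids precisely the dropped-term slip in the paper's expansion. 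Two small points. First, your feasibility check goes beyond the paper, which ignores the box constraint $r\in[0,1]$ entirely; but the phrase ``the sample spectrum is more dispersed'' is a hand-wave, and the same conditioning makes it rigorous: since $\bm{M}\mapsto\tau(\bm{M}^{2})$ is convex, conditional Jensen gives
\begin{equation}
\mathbb{E}\left[\tau(\bm{E}^{2})\mid\bm{\Sigma}\right]\;\geq\;\tau\bigl(\mathbb{E}[\bm{E}\mid\bm{\Sigma}]^{2}\bigr)\;=\;\tau(\bm{\Sigma}^{2}),
\end{equation}
hence $\mathbb{E}\left[\tau(\bm{E}^{2})\right]\geq\mathbb{E}\left[\tau(\bm{\Sigma}^{2})\right]$ and $r^{\star}\leq 1$, completing your argument with no appeal to intuition. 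Second, your reading of the statement's $\tau(\bm{\Sigma})^{2}$ and $\tau(\bm{E})^{2}$ as $\tau(\bm{\Sigma}^{2})$ and $\tau(\bm{E}^{2})$ is the intended one: the paper's own derivation and Proposition~\ref{prop:oraclle_linear_shrinkage} both write the moments that way, so this is a typo in the proposition statement rather than a discrepancy in your proof.
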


\begin{proof}

Let us first expand the Frobenius error of the linear shrinkage as a function of the moments of $\bm{E}$ and $\bm{\Sigma}$.
    \begin{equation}
    \label{eq:frob_error_lin_shrink_formulation}
    \Vert r\bm{E}+(1-r)\bm{\mathds{1}}-\bm{\Sigma}\Vert_{F}^{2}
    =r^{2}\tau(\bm{E}^{2}) + (1-r)^{2} + \tau(\bm{\Sigma}^{2}) + 2r(1-r)\tau(\bm{E})-2r\tau(\bm{E}\bm{\Sigma}).
    \end{equation}

    Then using the assumptions  
    \begin{equation}
        \mathbb{E}\left[\tau(\bm{\Sigma)} \right]=\mathbb{E}\left[\tau(\bm{E)} \right]=1,
    \end{equation}
    and using the following equality, valid in the high dimensional limit, 
    \begin{equation}
        \mathbb{E}\left[\tau(\bm{E\Sigma)} \right]=\mathbb{E}\left[\tau(\bm{\Sigma}^{2})\right],
    \end{equation}
the expected Frobenius error from Eq.\ \eqref{eq:frob_error_lin_shrink_formulation} simplifies to 
\begin{equation}
 \mathbb{E}\left[(r\bm{E}+(1-r)\mathds{1}-\bm{\Sigma})^{2} \right] = r^{2}(\mathbb{E}\left[\tau(\bm{E}^{2})\right]-1)-2r(\mathbb{E}\left[\tau(\bm{\Sigma}^{2}) \right]-1) +1 + \mathbb{E}\left[\tau(\bm{\Sigma}^{2})\right]. 
\end{equation}

Then by cancelling the derivative of the expected Frobenius error, the $r_{\textrm{opt}}$ that minimizes Eq.\ \eqref{eq:frob_error_lin_shrink_formulation} is equal to 
\begin{equation}
    r_{\textrm{opt}}=\frac{\mathbb{E}\left[\tau(\bm{\Sigma})^{2} \right]-1}{\mathbb{E}\left[\tau(\bm{E})^{2} \right]-1}.
\end{equation}
\end{proof}

\subsubsection{Oracle estimator}
Although the linear shrinkage already achieves very satisfactory performances for large covariance matrix estimation in terms of minimizing the Frobenius error with the population covariance matrix \cite{ledoit2003honey}, it has been argued that the shrinkage of large covariance matrices should be treated in essence as a nonlinear problem \cite{ledoit2011eigenvectors}. 

The NLS consists in defining a rotationally invariant estimator (RIE) that minimizes the Frobenius error with the population covariance matrix. Let us first define the notion of a RIE.

\begin{definition}
$\bm{\Xi}(\bm{E})$ is a rotationally invariant estimator (RIE) of the population covariance matrix $\bm{\Sigma}$, where $\bm{E}$ denotes the sample covariance estimator. This estimator satisfies the following distributional equality for all orthogonal matrices $\bm{O}$\cite{bun2016rotational}
\begin{equation}
 \bm{\Xi}(\bm{O}\bm{E}\bm{O}^{T})\overset{\textrm{distr}}{=}\bm{O}\Xi(\bm{E}) \bm{O}^{T}.   
\end{equation}
\end{definition}

The introduction of an RIE comes from the idea that when no information is known about the eigenvector basis of the population covariance matrix, one can consider that all eigenvector bases are equiprobable and therefore keep the sample eigenvectors.

Then, the RIE which minimizes the Frobenius error with population covariance $\bm{\Sigma}$, commonly called the oracle estimator in the literature, equals \cite{ledoit2011eigenvectors}

\begin{equation}
    \bm{\Xi}^{O}(\bm{E},\bm{\Sigma})=\bm{V}\diag(\bm{V}^{T}\bm{\Sigma V})\bm{V}^{T},
\end{equation}

where $\bm{V}$ are the sample eigenvectors of $\bm{E}$ and $\diag(\bullet)$ is the operator that sets the out-of-diagonal entries to zero.

This estimator still depends on the unknown quantity $\bm{\Sigma}$, however, under some regularity properties listed in the following theorem, Ledoit and Péché were able to derive the oracle eigenvalues in the high-dimensional limit \cite{ledoit2011eigenvectors}.

\begin{theoreme}
Let us call $\bm{E}$ the sample covariance of size $n\times n$ generated by $t$ i.i.d. observations with covariance population matrix $\bm{\Sigma}$. Let us assume that the noise used to generate $\bm{E}$ is independent of $\bm{\Sigma}$, has a finite 12th absolute central moment, and that the distribution of $\Sigma$ converges for $n$ tends to $\infty$ at every point of continuity to a nonrandom distribution which defines a probability measure and whose support is included in a compact interval.  Let $q=\frac{n}{t}$ and $g_{\bm{E}}$ the Stieltjes transform of $\bm{E}$. For $\lambda$, an eigenvalue of $\bm{E}$, let $\xi_{\lambda}(\bm{E})$ be the corresponding eigenvalue of the oracle estimator. Then, in the high-dimensional limit,  for $z=\lambda+i\eta$ with $\lambda\in Sp(\bm{E})$ and $\eta>0$ \cite{ledoit2011eigenvectors}
\begin{equation}
\label{ledoit_peche}
\xi_{\lambda}(\bm{E})=\lim\limits_{\eta \rightarrow 0^{+}} \frac{\lambda}{\vert 1-q+q\lambda g_{\bm{E}}(\lambda+i\eta)\vert^2}.
\end{equation}
\begin{flushright}
(Ledoit-P\'ech\'e formula)
\end{flushright}
\end{theoreme}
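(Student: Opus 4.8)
The plan is to realize the oracle eigenvalue $\xi_\lambda(\bm{E})=\langle\bm{v}_\lambda,\bm{\Sigma}\bm{v}_\lambda\rangle$, where $\bm{v}_\lambda$ is the sample eigenvector attached to $\lambda\in Sp(\bm{E})$, as a ratio of imaginary parts of two resolvent traces, and then to evaluate those traces in the high-dimensional limit through the Marchenko--Pastur--Silverstein self-consistent equation. First I would introduce the resolvent $\bm{G}_{\bm{E}}(z)=(z\bm{\mathds{1}}-\bm{E})^{-1}$ and the twisted trace
\[
\Theta(z):=\tau[\bm{\Sigma}\bm{G}_{\bm{E}}(z)]=\frac{1}{n}\sum_{\mu\in Sp(\bm{E})}\frac{\langle\bm{v}_\mu,\bm{\Sigma}\bm{v}_\mu\rangle}{z-\mu}.
\]
Writing $z=\lambda+i\eta$ and invoking the Sokhotski--Plemelj relation $\mathrm{Im}\,(\lambda+i\eta-\mu)^{-1}\to-\pi\delta(\lambda-\mu)$, the same limiting density of states $\rho(\lambda)$ factors out of $\mathrm{Im}\,\Theta$ and of $\mathrm{Im}\,g_{\bm{E}}$, so that
\[
\xi_\lambda(\bm{E})=\lim_{\eta\to0^+}\frac{\mathrm{Im}\,\tau[\bm{\Sigma}\bm{G}_{\bm{E}}(\lambda+i\eta)]}{\mathrm{Im}\,g_{\bm{E}}(\lambda+i\eta)}.
\]
This reduces the theorem to computing the single scalar function $\Theta(z)$.

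Second, I would establish a deterministic equivalent for $\bm{G}_{\bm{E}}$. Writing $\bm{E}=AA^{T}$ with $A=t^{-1/2}\sqrt{\bm{\Sigma}}\bm{X}$, the self-consistent equation for the sample covariance model yields, with $h(z):=1-q+qzg_{\bm{E}}(z)$,
\[
\bm{G}_{\bm{E}}(z)\simeq\frac{1}{h(z)}\left(\frac{z}{h(z)}\bm{\mathds{1}}-\bm{\Sigma}\right)^{-1},
\]
in the sense that normalized traces against deterministic matrices coincide asymptotically. Taking $\tau$ gives the scalar subordination relation $g_{\bm{E}}(z)=h(z)^{-1}g_{\bm{\Sigma}}(z/h(z))$. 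Feeding this into $\Theta$ and using the identity $\bm{\Sigma}(\zeta\bm{\mathds{1}}-\bm{\Sigma})^{-1}=-\bm{\mathds{1}}+\zeta(\zeta\bm{\mathds{1}}-\bm{\Sigma})^{-1}$ with $\zeta=z/h(z)$ collapses everything to
\[
\Theta(z)=\frac{zg_{\bm{E}}(z)-1}{1-q+qzg_{\bm{E}}(z)}.
\]

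Third, it remains to extract the imaginary part and let $\eta\to0^{+}$. Setting $N=zg_{\bm{E}}(z)-1$ one has $h=1+qN$ and $\Theta=N/(1+qN)=q^{-1}(1-(1+qN)^{-1})$, whence $\mathrm{Im}\,\Theta=\mathrm{Im}\,N/|1+qN|^{2}$. Since $\mathrm{Im}\,N=\mathrm{Im}(zg_{\bm{E}})\to\lambda\,\mathrm{Im}\,g_{\bm{E}}$ and $|1+qN|=|1-q+qzg_{\bm{E}}|\to|1-q+q\lambda g_{\bm{E}}(\lambda)|$ as $\eta\to0^{+}$, dividing by $\mathrm{Im}\,g_{\bm{E}}$ reproduces the announced Ledoit--P\'ech\'e formula \eqref{ledoit_peche}.

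The main obstacle is not the algebra but the rigorous justification of the two limiting steps. One needs self-averaging of $\Theta(z)$ and $g_{\bm{E}}(z)$, i.e.\ concentration of bilinear resolvent forms around their deterministic equivalents, together with the validity of the self-consistent equation down to a vanishing imaginary part, which is a local-law statement near the edge of the bulk; one also needs positivity of the limiting density $\rho(\lambda)$ so that its cancellation in the ratio is legitimate. This is exactly where the hypotheses enter: the finite $12$th absolute central moment controls the tails in the concentration estimates, while the compact-support and independence assumptions ensure that $g_{\bm{E}}$ and $\Theta$ converge to well-defined analytic functions admitting boundary values on the real axis.
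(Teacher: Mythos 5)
The paper offers no proof of this theorem: it is quoted verbatim from Ledoit and P\'ech\'e \cite{ledoit2011eigenvectors}, so there is no in-paper argument to compare against, and your attempt must be judged against the literature. Your sketch is the standard resolvent/subordination derivation (the route popularized by Bun, Bouchaud and Potters \cite{bun2017cleaning}), and its algebra is correct: with $h(z)=1-q+qzg_{\bm{E}}(z)$, the deterministic equivalent $\bm{G}_{\bm{E}}(z)\simeq h(z)^{-1}\left(\tfrac{z}{h(z)}\bm{\mathds{1}}-\bm{\Sigma}\right)^{-1}$ together with $g_{\bm{\Sigma}}(z/h)=h\,g_{\bm{E}}(z)$ does collapse $\Theta(z)$ to $(zg_{\bm{E}}(z)-1)/h(z)$, and your imaginary-part computation then reproduces Eq.~\eqref{ledoit_peche}; a quick null-case check ($\bm{\Sigma}=\bm{\mathds{1}}$, where the self-consistent equation gives $\Theta=g_{\bm{E}}$ and hence $\xi_\lambda=1$) confirms the identity. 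Where you genuinely diverge from the original proof: Ledoit and P\'ech\'e never take a pointwise Sokhotski--Plemelj limit on a ratio of imaginary parts; they prove almost-sure convergence, at \emph{fixed} $\eta>0$, of the functional $\Theta_n^{g}(z)=\tfrac{1}{n}\Tr\left[(\bm{E}-z\bm{\mathds{1}})^{-1}g(\bm{\Sigma})\right]$ by Bai--Silverstein-type truncation and martingale arguments (this is exactly where the 12th absolute moment hypothesis is consumed), deduce convergence of the cumulative functional $\Phi_n(x)=\tfrac{1}{n}\sum_{\mu\le x}\langle\bm{v}_\mu,\bm{\Sigma}\bm{v}_\mu\rangle$, and only afterwards differentiate almost everywhere and send $\eta\to0^{+}$ --- precisely the ordering of limits encoded in the theorem's display, which sidesteps the local-law-scale issue (the delta-function heuristic needs $\eta$ large compared with the $O(1/n)$ eigenvalue spacing) that a literal reading of your first step would face. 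You correctly flag that self-averaging, the anisotropic validity of the deterministic equivalent (against the test matrix $\bm{\Sigma}$, not merely in normalized trace), and positivity of the limiting density are the real content; as written these are asserted rather than proved, so your proposal should be read as a correct and well-organized derivation skeleton whose rigorous filling-in is exactly the Bai--Silverstein work of the original paper, rather than as a complete alternative proof.
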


Let us mention that although this formula is exact only in the high dimension limit, i.e. $n,t\to\infty$ and $\frac{n}{t}\to q>0$, it remains a good approximation for large but finite dimensions \cite{ledoit2012nonlinear}.

Moreover, let us recall that when $\bm{\Sigma}$ is an inverse Wishart covariance matrix and the data are Gaussian, the oracle estimator is equal in the high dimension limit to the optimal linear shrinkage defined in Proposition\ \eqref{prop:linear_shrinkage}. Indeed, with $\lfloor x \rfloor$ being the integer part of $x$,
\begin{definition}
\label{def:inverse_wishart}
    let $n\in\mathbb{N}^{*}$ and $p>0$. Let us define $q^{*}=\frac{p}{1+p}$ and $t=\lfloor \frac{n}{q^{*}}\rfloor$. Consider $(\bm{m_{i}})_{1\leq i\leq t}$ i.i.d. vectors of $\mathbb{R}^{n}$ such that $\bm{m_{i}}\sim\mathcal{N}(0,\bm{C})$. Let us define $\bm{M}=(\bm{m_{1}},\dots,\bm{m_{t}})\in\mathbb{R}^{n\times t}$. 
Then the following matrix is Wishart $(n,p)$
\begin{equation}
\bm{W}=\frac{1}{t}\bm{MM}^{T},
\end{equation}

and an inverse Wishart $(n,p)$ is then
\begin{equation}
\bm{\mathcal{W}}^{-1}_{np}=(1-q^{*})\bm{W}^{-1}.
\end{equation}
\end{definition}
In this paper, we will only consider the situation $\bm{C}=\bm{\mathds{1}}$, which corresponds to the white inverse Wishart case.

\begin{proposition}
\label{prop:oraclle_linear_shrinkage}
When the population matrix is a white inverse Wishart and the data are Gaussian, the  oracle estimator recovers the optimal linear shrinkage in the Bayesian sense in the high dimension limit \cite{potters2020first}
\begin{equation}
\label{eq:linear_shrink}
\bm{\Xi}^{\textrm{O}}=r\bm{E}+(1-r)\bm{\mathds{1}} \hspace{4mm} \mbox{, with:} \hspace{4mm} r=\frac{\mathbb{E}\left[\tau(\bm{\Sigma}^{2})\right]-1}{\mathbb{E}\left[\tau(\bm{E}^{2})\right]-1}.
\end{equation}
\end{proposition}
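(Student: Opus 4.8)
The plan is to reduce the statement to a single structural fact: that for a white inverse Wishart population under Gaussian noise, the oracle eigenvalue map $\lambda \mapsto \xi_\lambda(\bm E)$ supplied by the Ledoit--P\'ech\'e formula \eqref{ledoit_peche} is \emph{affine} in $\lambda$. Indeed, a linear shrinkage $r\bm E + (1-r)\bm{\mathds 1}$ keeps the sample eigenvectors and applies the affine map $\lambda \mapsto r\lambda + (1-r)$ to the eigenvalues, so it is one particular RIE; the family of linear shrinkages therefore sits inside the larger family of RIEs over which the oracle $\bm{\Xi}^{\textrm O}$ is Frobenius-optimal. If $\bm{\Xi}^{\textrm O}$ is itself affine in the eigenvalues, it must coincide with the \emph{best} linear shrinkage, and Proposition~\ref{prop:linear_shrinkage} then identifies the coefficient as $r = (\mathbb E[\tau(\bm\Sigma^2)]-1)/(\mathbb E[\tau(\bm E^2)]-1)$. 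Hence the whole content of the claim is the affineness of $\xi_\lambda$.

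To establish affineness I would compute $g_{\bm E}$ in closed form and substitute it into \eqref{ledoit_peche}. Since $\bm E = \tfrac1t\sqrt{\bm\Sigma}\,\bm X\bm X^{T}\sqrt{\bm\Sigma}$ with Gaussian $\bm X$, the limiting spectral law of $\bm E$ is the free multiplicative convolution of the Marchenko--Pastur law of aspect ratio $q$ with the spectral law of the inverse Wishart $\bm\Sigma$, so the $S$-transforms multiply, $S_{\bm E} = S_{\bm\Sigma}\,S_{\mathrm{MP}}$ with $S_{\mathrm{MP}}(u) = 1/(1+qu)$. The feature that makes the inverse Wishart special is that its $S$-transform is \emph{affine}: as the multiplicative free inverse of a Wishart matrix, whose $S$-transform is the reciprocal of an affine function, it satisfies $S_{\bm\Sigma}(u) = 1 - p\,u$ after imposing $\tau(\bm\Sigma)=1$, where $p$ is the parameter of Definition~\ref{def:inverse_wishart}. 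Consequently $S_{\bm E}(u) = (1-p\,u)/(1+qu)$ is a M\"obius (bilinear) function of $u$, and inverting the standard chain from $S_{\bm E}$ back to the moment generating function and then to $g_{\bm E}$ yields a Stieltjes transform solving a quadratic, hence available in closed form with an explicit branch of the square root on $\mathrm{Sp}(\bm E)$.

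The crux of the argument — and the step I expect to be the main obstacle — is then the algebraic collapse of the Ledoit--P\'ech\'e expression. Writing $h(z) = 1 - q + q z\,g_{\bm E}(z)$ and setting $z=\lambda+i\eta$ with $\eta\to 0^{+}$, one must show that $\lambda/|h(\lambda)|^{2}$ simplifies to an affine function of $\lambda$. This requires extracting $\Re\,h$ and $\Im\,h$ from the closed-form $g_{\bm E}$ and verifying that the $\lambda$-dependent pieces of $|h(\lambda)|^{2} = (\Re h)^2 + (\Im h)^2$ conspire exactly so that the ratio is linear; correctly tracking the square-root branch on the bulk of the spectrum is where the care is needed, and it is the one place where the inverse-Wishart structure must be used in full rather than schematically.

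Finally, the coefficients are pinned down without extra work. The oracle preserves the trace, $\tau(\bm{\Xi}^{\textrm O}) = \tau\!\big(\bm V \diag(\bm V^{T}\bm\Sigma \bm V)\bm V^{T}\big) = \tau(\bm\Sigma) = 1$, and with $\tau(\bm E)=1$ this forces any affine map $\lambda\mapsto a\lambda + b$ to satisfy $a+b=1$, so the oracle has the linear-shrinkage form $a\bm E + (1-a)\bm{\mathds 1}$. The optimality argument of the first paragraph then yields $a=r$ with $r$ as in Proposition~\ref{prop:linear_shrinkage}, completing the proof. As a conceptual cross-check, note that the inverse Wishart is the conjugate prior of the Gaussian likelihood, so in the high-dimensional limit the oracle coincides with the rotationally invariant posterior mean of $\bm\Sigma$, which is an exact linear combination of $\bm E$ and $\bm{\mathds 1}$; this is the ``Bayesian sense'' of the statement and explains a priori why $\xi_\lambda$ must be affine.
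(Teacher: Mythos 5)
Your argument is sound, but there is nothing in the paper to compare it against line by line: the paper offers no proof of this proposition, stating it with a citation to Potters and Bouchaud, where the result follows from Bayesian conjugacy --- the inverse Wishart is the conjugate prior of the Gaussian likelihood, so the posterior mean $\mathbb{E}[\bm{\Sigma}\mid\bm{E}]$ is exactly of the form $r\bm{E}+(1-r)\bm{\mathds{1}}$ already at finite $n$, and the oracle matches it in the high-dimensional limit; that is the ``Bayesian sense'' of the statement, which you correctly identify but relegate to a closing cross-check. Your free-probability route is a genuinely different and valid derivation within the paper's own Ledoit--P\'ech\'e framework: the key structural claim $S_{\bm{\Sigma}}(u)=1-pu$ is correct for the normalization of Definition~\ref{def:inverse_wishart} (it follows from $S_{\bm{W}}(u)=1/(1+q^{*}u)$, the inversion rule $S_{\bm{A}^{-1}}(u)=1/S_{\bm{A}}(-1-u)$, the rescaling by $1-q^{*}$, and $q^{*}/(1-q^{*})=p$), so $g_{\bm{E}}$ solves an explicit quadratic and the announced collapse of $\lambda/\vert 1-q+q\lambda g_{\bm{E}}\vert^{2}$ to an affine map does occur, yielding $\xi(\lambda)=r\lambda+(1-r)$ with $r=p/(p+q)$ --- consistent with Proposition~\ref{prop:linear_shrinkage}, since for Gaussian data $\mathbb{E}[\tau(\bm{\Sigma}^{2})]-1=p$ and $\mathbb{E}[\tau(\bm{E}^{2})]-1=p+q$ (the $\gamma=1$ case of Lemma~\ref{lemma_sample_cov_square}). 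What your route buys is a self-contained structural explanation (affine $S$-transform implies affine oracle eigenvalue map) that stays inside the Ledoit--P\'ech\'e machinery the paper already sets up; what the conjugacy route buys is exactness at finite $n$ and no square-root branch bookkeeping. Two caveats: your crux step is announced rather than executed, which is acceptable only because that computation is documented in the literature (e.g.\ in the Bun--Bouchaud--Potters review, where the same affine shrinkage intensity appears); and your final coefficient identification silently replaces the pathwise optimality defining the oracle with the expected-error optimality of Proposition~\ref{prop:linear_shrinkage}, which is legitimate only via the self-averaging of $\tau(\bm{E})$ and $\tau(\bm{E}^{2})$ in the high-dimensional limit and deserves an explicit sentence.
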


\subsubsection{Holdout and CV estimators}
A third remedy to the weaknesses of the sample covariance matrix for the estimation of the large covariance matrix is the CV estimator family. Let us introduce the holdout estimator, which will be the object of our study, and the k-fold, which is the most widely used CV procedure for covariance estimation \cite{berrar2019cross,lam2024holdout}.

\begin{definition}
\label{def:holdout_estimator}
   Let us consider $X=\{\bm{x}_{1}, \dots, \bm{x}_{t}\}$ an i.i.d. set of centered noise observations and $\bm{\Sigma}$ the population covariance.
    Let $t_{\textrm{out}}$ an integer between $1$ and $t-1$, $k= \frac{t}{t_{\textrm{out}}}$, $t_{\textrm{in}}=t-t_{\textrm{out}}$, $q_{\textrm{in}}=\frac{n}{t_{\textrm{in}}}$ and $q_{\textrm{out}}=\frac{n}{t_{\textrm{out}}}$. Let us call $(\mathcal{I}_{\textrm{in}},\mathcal{I}_{\textrm{out}})$ a partition of $[1,t]$ such that $\mathcal{I}_{\textrm{in}}$ is of size $t_{\textrm{in}}$ and $\mathcal{I}_{\textrm{out}}$ of size $t_{\textrm{out}}$.  Let us call $\bm{E}_{\textrm{in}}$ the sample covariance over the noise observations $\bm{x}_{\mathcal{I}_{\textrm{in}}(1)},\dots,\bm{x}_{\mathcal{I}_{\textrm{in}}(t_{\textrm{in}})}$ (train set) and $\bm{E}_{\textrm{out}}$ the sample covariance over the noise observations $\bm{x}_{\mathcal{I}_{\textrm{out}}(1)}, \dots, \bm{x}_{\mathcal{I}_{\textrm{out}}(t_{out})}$ (test set). Let us call $\bm{V}_{\textrm{in}}$ the orthogonal matrix containing the eigenvectors of $\bm{E}_{\textrm{in}}$. Then the holdout estimator is defined as \cite{Lam2016,lam2015supplement}
    
\begin{equation}    \bm{\Xi}^H(\bm{V}_{\textrm{in}},\bm{E}_{\textrm{out}}) := \bm{\Xi}^O(\bm{E},\bm{E}_{\textrm{out}})=\bm{V}_{\textrm{in}}\diag(\bm{V}_{\textrm{in}}^{T}\bm{E}_{\textrm{out}}\bm{V}_{\textrm{in}})\bm{V}_{\textrm{in}}^{T}.
    \end{equation}

    \begin{flushright}
        (holdout estimator)
    \end{flushright}
\end{definition}

Let us also define the k-fold partition.

\begin{definition}\label{def:kpartition}
    Let $t $ be the total number of observations, and let $t_{\textrm{out}}$ be an integer between $1$ and $t-1$ such that $t_{\textrm{out}}$ divides $t$. Define $k = t/t_{\textrm{out}}$ and $t_{\textrm{in}} = t - t_{\textrm{out}}$. We partition the set of indices $\mathcal{I} = \{1, 2, \dots, t\}$ into $k$ disjoint subsets $\{\mathcal{I}_{\textrm{out}, l}\}_{l=1}^{k}$, each of size $t_{\textrm{out}}$.Then, for each $l$, the training set indices are defined as $\mathcal{I}_{\textrm{in}, l} = \mathcal{I} \setminus \mathcal{I}_{\textrm{out}, l}$.
\begin{flushright}
($k$-fold index partition)
\end{flushright}

\end{definition}

\begin{definition}
    Consider $X=\{\bm{x}_{1}, \dots, \bm{x}_{t}\}$ an i.i.d. set of centered noise observations and $\bm{\Sigma}$ the population covariance of the data generated with the noise matrix $\bm{X}$. Let $t_\textrm{out}$ be an integer between $1$ and $t-1$ such that $t_\textrm{out}$ divides $t$, and let $\{\mathcal{I}_{\textrm{out}, l}, \mathcal{I}_{\textrm{in}, l}\}$ be a $k$-fold index partition (definition\ ~\eqref{def:kpartition}). For each $1 \leq l \leq k$, let $\bm{E}_{\textrm{in}, l}$ be the sample covariance of $\bm{\Sigma}$ over the noise observations $\bm{X}_{\textrm{in}, l} = \{ \bm{x}_i | i \in \mathcal{I}_{\textrm{in}, l} \}$ (train set) and $\bm{E}_{\textrm{out}, l}$ the sample covariance over s $\bm{X}_{\textrm{out}, l} = \{ \bm{x}_i | i \in \mathcal{I}_{\textrm{out}, l} \}$ (test set). Let $\bm{V}_{\textrm{in}, l}$ the eigenvectors of $\bm{E}_{\textrm{in}, l}$ ranked by increasing eigenvalues. Then the $k$-fold CV estimator is \cite{Lam2016,lam2015supplement}
\begin{equation}
\label{k_fold_cv_def}
\bm{\Xi}^\textrm{CV} = \frac{1}{k} \sum_{l=1}^{k} \bm{\Xi}_l^\textrm{H} = \frac{1}{k} \sum_{l=1}^k \bm{V}_{\textrm{in}, l}\diag(\bm{V}_{\textrm{in}, l}^{T}\bm{E}_{\textrm{out}, l}\bm{V}_{\textrm{in}, l})\bm{V}_{\textrm{in}, l}^{T}.
\end{equation}

\begin{flushright}
    ($k$-fold CV estimator)
\end{flushright}

\end{definition}

Let us now state the convergence result of Lam with the associated assumptions \cite{Lam2016,lam2015supplement}.

\begin{assumption}\label{assumptions_lam}\item[(H1)] The observations can be written $\bm{X}=\sqrt{\bm{\Sigma}}\bm{Y}$ where $\bm{Y}$ is a $n\times t$ matrix of real or complex random variables with zero mean and unit variance such that $\mathbb{E}\left(\vert Y_{ij} \vert^{l} \right)\leq B<\infty$ for some constant $B$ and for $2<l\leq 20$. 
        \item[(H2)] The population covariance matrix $\bm{\Sigma}$ is non-random and $\Vert \bm{\Sigma}\Vert_{L^{2}}=\mathcal{O}(1)$ and $\bm{\Sigma}\neq \sigma^{2}\bm{\mathds{1}}$
    \item [(H3)] $\frac{n}{t}\to q>0$ as $n\to\infty$

    \item [(H4)] Let us consider $(\tau_{1}\dots \tau_{n})$ the eigenvalues of $\bm{\Sigma}_{n}$, the empirical spectral distribution of $\bm{\Sigma}_{n}$
    \begin{equation}
        H_{n}(\tau)=\frac{1}{n}\sum_{j=1}^{n}\mathds{1}_{[\tau_{j},+\infty[}(\tau)
    \end{equation}
    converges to a non-random limit $H(\tau)$ at every point of continuity of $H$. Moreover, H defines a probability distribution function whose support $Supp(H)$ is included in a compact interval $[h_{1},h_{2}]$ where $0<h_{1}\leq h_{2}<+\infty$.    
\end{assumption}

\begin{theoreme}
    Under the assumptions listed in \ref{assumptions_lam}, the holdout estimator error converges towards the error of the NLS, with respect to the Frobenius error and Stein's loss provided that the split satisfies $\frac{t_{\textrm{in}}}{t}\xrightarrow[t \to \infty]{} 1$, $t_{\textrm{out}} \xrightarrow[t \to \infty]{} \infty$ and
\begin{equation}
\label{eq:lam_condition}
   \sum_{t\geq 1}\frac{n}{t_{\textrm{out}}^{5}}<\infty. 
\end{equation}
\end{theoreme}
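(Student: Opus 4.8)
The plan is to insert the \emph{train oracle} estimator $\bm{\Xi}^{O,\textrm{in}} := \bm{V}_{\textrm{in}}\diag(\bm{V}_{\textrm{in}}^{T}\bm{\Sigma}\bm{V}_{\textrm{in}})\bm{V}_{\textrm{in}}^{T}$ as a bridge between the holdout estimator and the NLS, and to control three successive gaps: (i) between the holdout eigenvalues $[\bm{V}_{\textrm{in}}^{T}\bm{E}_{\textrm{out}}\bm{V}_{\textrm{in}}]_{ii}$ and the train-oracle eigenvalues $[\bm{V}_{\textrm{in}}^{T}\bm{\Sigma}\bm{V}_{\textrm{in}}]_{ii}$; (ii) between the train-oracle error and the NLS error at aspect ratio $q_{\textrm{in}}=n/t_{\textrm{in}}$; and (iii) between the NLS errors at $q_{\textrm{in}}$ and at $q=n/t$. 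Since the normalized Frobenius norm is a genuine (rescaled Hilbert-Schmidt) norm, the triangle inequality reduces the claim to showing that each gap vanishes.

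The decisive step is (i), and it is where the split conditions enter. Conditioning on the train set, $\bm{E}_{\textrm{out}}$ is an independent, unbiased sample covariance, so $\mathbb{E}[[\bm{V}_{\textrm{in}}^{T}\bm{E}_{\textrm{out}}\bm{V}_{\textrm{in}}]_{ii}\mid\bm{V}_{\textrm{in}}]=[\bm{V}_{\textrm{in}}^{T}\bm{\Sigma}\bm{V}_{\textrm{in}}]_{ii}$ because $\mathbb{E}[\bm{\xi}\bm{\xi}^{T}]=\bm{\mathds{1}}$. Writing $\bm{v}_i$ for the $i$-th column of $\bm{V}_{\textrm{in}}$ and $D_i=\frac{1}{t_{\textrm{out}}}\sum_{j\in\mathcal{I}_{\textrm{out}}}((\sqrt{\bm{\Sigma}}\bm{v}_i)^{T}\bm{x}_j)^2-\bm{v}_i^{T}\bm{\Sigma}\bm{v}_i$, each $D_i$ is a centred average of $t_{\textrm{out}}$ i.i.d. terms. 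Under (H1) the noise entries have moments bounded up to order $20$; since $D_i$ is quadratic in $\bm{x}_j$, a Rosenthal-type bound then gives $\mathbb{E}[D_i^{10}\mid\bm{V}_{\textrm{in}}]=\mathcal{O}(t_{\textrm{out}}^{-5})$ uniformly in $i$, the prefactor being controlled by $\|\bm{\Sigma}\|$ which is finite by (H2) and (H4). A union bound over the $n$ directions yields $\mathbb{P}(\max_i|D_i|>\varepsilon)=\mathcal{O}(n\,t_{\textrm{out}}^{-5})$, and the hypothesis $\sum_{t\geq 1}n/t_{\textrm{out}}^{5}<\infty$ lets Borel-Cantelli deliver $\max_i|D_i|\to 0$ almost surely. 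Because $\bm{\Xi}^{H}$ and $\bm{\Xi}^{O,\textrm{in}}$ share the eigenvectors $\bm{V}_{\textrm{in}}$, their squared Frobenius distance is exactly $\frac{1}{n}\sum_i D_i^{2}\leq\max_i D_i^{2}\to 0$.

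For gaps (ii) and (iii) I would invoke the Ledoit-P\'ech\'e theorem directly. The train-oracle eigenvalue $\bm{v}_i^{T}\bm{\Sigma}\bm{v}_i$ is, by construction, the oracle eigenvalue of the train sample covariance $\bm{E}_{\textrm{in}}$, so by Eq.\ \eqref{ledoit_peche} (with aspect ratio $q_{\textrm{in}}$) it converges to the deterministic NLS value $\xi_{\lambda}$ of the corresponding train eigenvalue; this closes gap (ii), the oracle being optimal and asymptotically equal to the NLS. For gap (iii), $t_{\textrm{in}}/t\to 1$ forces $q_{\textrm{in}}=q\,(t/t_{\textrm{in}})\to q$, and the empirical spectral distribution of $\bm{E}_{\textrm{in}}$ shares the nonrandom limit of that of $\bm{E}$ (both are the Marchenko-Pastur map of $H$); continuity of the Ledoit-P\'ech\'e functional in $q$ and in the Stieltjes transform then makes the NLS error at $q_{\textrm{in}}$ converge to that at $q$.

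Finally, transferring eigenvalue control to the losses uses the compact support with $h_1>0$ in (H4), which keeps all eigenvalues of $\bm{\Xi}^{H}$ and $\bm{\Xi}^{O,\textrm{in}}$ bounded away from $0$ and $\infty$. For the Frobenius loss, $x\mapsto x^{2}$ is Lipschitz on that range, so the triangle inequality and $\frac{1}{n}\sum_i D_i^2\to 0$ suffice. For Stein's loss, the trace part obeys $|\tau(\bm{\Sigma}^{-1}\bm{\Delta})|\leq\|\bm{\Sigma}^{-1}\|_{F}\|\bm{\Delta}\|_{F}\to 0$ with $\bm{\Delta}=\bm{\Xi}^{H}-\bm{\Xi}^{O,\textrm{in}}$, while the $\log\det$ part equals $\frac{1}{n}\sum_i(\log[\bm{V}_{\textrm{in}}^{T}\bm{E}_{\textrm{out}}\bm{V}_{\textrm{in}}]_{ii}-\log[\bm{V}_{\textrm{in}}^{T}\bm{\Sigma}\bm{V}_{\textrm{in}}]_{ii})\leq C\max_i|D_i|\to 0$ since $\log$ is Lipschitz away from $0$. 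I expect gap (i) to be the crux: obtaining the $\mathcal{O}(t_{\textrm{out}}^{-5})$ tenth-moment bound \emph{uniformly} over all $n$ eigendirections simultaneously is precisely what forces both the high-moment assumption (H1) (moments up to order $20$, since $D_i$ is quadratic in the noise) and the summability condition $\sum_t n/t_{\textrm{out}}^{5}<\infty$ through the union bound.
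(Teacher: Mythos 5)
The paper states this theorem without proof---it is imported verbatim from Lam \cite{Lam2016,lam2015supplement}---so there is no internal proof to compare against, and your blind reconstruction follows essentially the same route as Lam's original argument. In particular, bridging through the train oracle $\bm{V}_{\textrm{in}}\diag(\bm{V}_{\textrm{in}}^{T}\bm{\Sigma}\bm{V}_{\textrm{in}})\bm{V}_{\textrm{in}}^{T}$, obtaining the conditional tenth-moment Rosenthal bound $\mathbb{E}\left[D_{i}^{10}\,\vert\,\bm{V}_{\textrm{in}}\right]=\mathcal{O}(t_{\textrm{out}}^{-5})$ (which is exactly why (H1) requires moments up to order $20$, $D_{i}$ being quadratic in the noise), and converting the union bound $\mathcal{O}(n\,t_{\textrm{out}}^{-5})$ into almost-sure uniform control via Borel--Cantelli under $\sum_{t\geq 1} n/t_{\textrm{out}}^{5}<\infty$ is precisely the mechanism of Lam's supplement, with the Ledoit--P\'ech\'e limit and $q_{\textrm{in}}\to q$ closing the remaining gaps as you describe.
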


The last two assumptions are included in the assumptions of the Ledoit and Péché's formula. For example, choosing $t_{\textrm{out}}$ proportional to the square root of $n$ would satisfy the conditions written in the previous theorem. However, the first assumption written in \ref{assumptions_lam} is restrictive on the moments of the noise. In this paper, we will consider, as an example, a student norm for the noise distribution for which neither the assumptions on the finite moments of Lam nor those of Ledoit and Péché's formula will be satisfied \cite{ledoit2011eigenvectors,Lam2016}.

\subsection{Orthogonal Weingarten calculus}
Weingarten calculus can be seen as a generalization of the Wick theorem to compute joint moments rotationally invariant quantities. Let us first recall some definitions and results, which can be found in \cite{collins2017weingarten}, that will be needed to study the holdout error. 

\begin{definition}
    Let $\bm{A}$ be a random square matrix of size $n$, $\bm{A}$ is said to be rotationally invariant if the following equality is verified in distribution for every $\bm{O}\in\mathcal{O}(n)$,  the group of orthogonal matrices of size $n$,
    \begin{equation}
        \bm{A}=\bm{O}\bm{A}\bm{O}^{T}.
    \end{equation}
\end{definition}

Although we will only focus on square matrices in this paper, let us mention that a similar notion exists for rectangular matrices called the left-right invariance. 

Let us note $S_{2k}$ the group of permutations in $\{1,2,\dots,2k\}$ and $M_{\textrm{2k}}$ the group of pair partitions, i.e. the group of partitions of $\{1,2,\dots,2k\}$ with each subset being of size two; each pair partition can be uniquely expressed in the form 
\begin{equation}
\{ \{m(1),m(2)\},\{m(3),m(4)\},...,\{m(2n-1),m(2n)\}\} 
\end{equation}
 with $m(2i-1)<m(2i)$ for $1\leq i\leq n$ and with $m(1)<m(2)<...<m(2n-1)$.

\begin{definition}
    For each $\sigma\in M_{\textrm{2k}}$, let us define its associated graph $\Gamma(\sigma)$ whose vertices are $1,2,\dots,2k$ and edge set
    \begin{equation}
        \left\{ \{2i-1,2i \}\vert i=1,2,..,k \right\} \hspace{2mm} \cup \hspace{2mm} \left\{ \{\sigma(2i-1),\sigma(2i) \}\vert i=1,2,..,k \right\}.
    \end{equation}
If two pairs $\{2i-1,2i \}$ and $\{\sigma(2i-1),\sigma(2i) \}$ were to coincide, two edges must be added to the graph.

Then the Gram matrix associated to the group $M_{\textrm{2k}}$, $G^{\mathcal{O}(n)}_{k}=(G^{\mathcal{O}(d)}(\sigma,\tau))_{\sigma,\tau\in M_{\textrm{2k}}}$ is given by
\begin{equation}
    G^{O(n)}(\sigma,\tau)=d^{\textrm{loop}(\sigma,\tau)},
\end{equation}
where $\textrm{loop}(\sigma,\tau)$ is the number of connected components of $\Gamma(m,n)$. 

The orthogonal Weingarten matrix of $M_{2k}$, $Wg_{k}^{\mathcal{O}(n)}=(Wg^{\mathcal{O}(n)}(\sigma,\tau))_{\sigma,\tau\in M_{2k}}$ is the pseudo inverse of the Gram matrix.
    
\end{definition}

\begin{definition}
    Let $\Gamma(\sigma)$ be the graph associated with the permutation $\sigma\in S_{2k}$. 
    In the connected component of the graph, if the number of vertices are 
    \begin{equation}
        2\mu_{1}\geq 2\mu_{2}\geq \dots\geq 2\mu_{l}
    \end{equation}
    then $\mu=(\mu_{1},\dots,\mu_{l})$ is referred to as the coset type of $\Gamma(\sigma)$; $l$ represents the number of connected components of $\Gamma(\sigma)$.
\end{definition}

\begin{definition}
    Let $\bm{A}\in\mathbb{R}^{n\times n}$ and $\sigma\in S_{2k}$ of coset type $(\mu_{1},\dots,\mu_{l})$, we define the following operator
    \begin{equation}
        \Tr_{\sigma}'(\bm{A})=\prod_{j=1}^{l}\Tr(\bm{A}^{\mu_{j}}).
    \end{equation}
\end{definition}

\begin{definition}
    For $\sigma\in S_{2k}$ and a $2k$-tuple $i=(i_{1},\dots,i_{2k})$ of positive integers, we define
    \begin{equation}
        \delta_{\sigma}'(i)=\prod_{s=1}^{k}\delta_{i_{\sigma(2s-1)}i_{\sigma(2s)}}.
    \end{equation}
If $\sigma\in M_{2k}$, then $\delta_{\sigma}$ simplifies to 
\begin{equation}
    \delta_{\sigma}'(i)=\prod_{\{a,b\}\in\sigma}\delta_{i_{a}i_{b}}
\end{equation}
where the products run over all the pairs contained in the pair permutation $\sigma$.
\end{definition}

\begin{theoreme}
Let $\bm{A}\in\mathbb{R}^{n\times n}$ a symmetric and rotationally invariant matrix. Then for any sequence $i=(i_{1},\dots,i_{2k})$ we have
\begin{equation}
\label{eq:orthogonal_weingarten_formula}
    \mathbb{E}\left( A_{i_{1}i_{2}}A_{i_{3}i_{4}}\dots A_{i_{2k-1}i_{2k}} \right) = \sum_{\sigma,\tau\in M_{2k}} \delta_{\sigma}'(i)Wg^{O}(\sigma^{-1}\tau;n)\mathbb{E}(\Tr_{\tau}'(\bm{A})).
\end{equation}
\end{theoreme}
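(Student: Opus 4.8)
The plan is to reduce the mixed moment of entries of $\bm{A}$ to a Haar integral over $\mathcal{O}(n)$, apply the classical orthogonal Weingarten integration formula for Haar matrices, and then collapse the resulting sum over eigenvalue labels into the operator $\Tr_\tau'$. The starting point is that a symmetric rotationally invariant matrix admits a spectral decomposition $\bm{A}=\bm{O}\bm{D}\bm{O}^{T}$ in which $\bm{O}$ is Haar-distributed on $\mathcal{O}(n)$, the diagonal matrix $\bm{D}$ carries the (possibly random) eigenvalues, and $\bm{O}$ is independent of $\bm{D}$. This independence together with the Haar character of $\bm{O}$ is precisely what rotational invariance buys us, and establishing it carefully (the eigenvector distribution is Haar up to eigenvalue multiplicities) is the first thing I would nail down.

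Next I would expand each entry as $A_{i_{2s-1}i_{2s}}=\sum_{j_s} O_{i_{2s-1}j_s}\,D_{j_sj_s}\,O_{i_{2s}j_s}$ and multiply the $k$ factors, producing a sum over $(j_1,\dots,j_k)$ of the eigenvalue weight $\prod_{s}D_{j_sj_s}$ times a product of $2k$ entries of $\bm{O}$. Using independence, the expectation factorizes and leaves the Haar average
\begin{equation}
\mathbb{E}\Big(\prod_{s=1}^{k}O_{i_{2s-1}j_s}\,O_{i_{2s}j_s}\Big),
\end{equation}
to which I would apply the orthogonal Weingarten formula for Haar matrices recalled in \cite{collins2017weingarten}. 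The bookkeeping here is delicate: the $2k$ row labels are $a=(i_1,\dots,i_{2k})$, while the spectral expansion forces the column labels to carry the intrinsic pairing $b=(j_1,j_1,j_2,j_2,\dots,j_k,j_k)$, so that the identity pair partition is automatically compatible with $b$.

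The Haar integration formula then returns $\sum_{\sigma,\tau\in M_{2k}}\delta_\sigma(a)\,\delta_\tau(b)\,Wg^{O}(\sigma^{-1}\tau;n)$. On the row side one reads off $\delta_\sigma(a)=\prod_{\{p,q\}\in\sigma}\delta_{i_pi_q}=\delta_\sigma'(i)$ immediately. The substantive step is the column side: I would sum $\prod_s D_{j_sj_s}$ against $\delta_\tau(b)$ over all $(j_1,\dots,j_k)$. Because of the built-in pairing of $b$, the constraint $\delta_\tau(b)=1$ forces all eigenvalue labels lying in one connected component of the graph $\Gamma(\tau)$ to coincide; a component of coset type $\mu_r$ then contributes a free sum $\sum_j D_{jj}^{\mu_r}=\Tr(\bm{A}^{\mu_r})$, and running over all $l$ components yields exactly
\begin{equation}
\prod_{r=1}^{l}\Tr(\bm{A}^{\mu_r})=\Tr_\tau'(\bm{A}).
\end{equation}
Taking the remaining expectation over $\bm{D}$ produces $\mathbb{E}(\Tr_\tau'(\bm{A}))$ and assembles the claimed identity.

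The main obstacle I anticipate is this last combinatorial identification: matching the connected components of $\Gamma(\tau)$ (equivalently its coset type) to the index collapse in the eigenvalue sum, while keeping the row/column conventions of the Haar Weingarten formula consistent with the definitions of $\delta'$, $\Tr'$ and the coset type set up earlier. The spectral-decomposition step is conceptually essential but standard; the index accounting is where labelling errors are most likely to creep in.
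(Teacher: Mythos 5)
Your proposal is correct and takes the same route as the paper's source for this theorem: the paper states the result without proof, deferring to \cite{collins2017weingarten}, and the canonical argument there is exactly yours — write $\bm{A}=\bm{O}\bm{D}\bm{O}^{T}$ with $\bm{O}$ Haar-distributed and independent of $\bm{D}$ (most cleanly obtained by replacing $\bm{A}$ with $\bm{U}\bm{A}\bm{U}^{T}$ for an independent Haar $\bm{U}$, which sidesteps the multiplicity/measurability issue you flag), apply the Haar orthogonal Weingarten formula with row labels $(i_{1},\dots,i_{2k})$ and intrinsically paired column labels $(j_{1},j_{1},\dots,j_{k},j_{k})$, and collapse the eigenvalue sum. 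Your key combinatorial step checks out: the $\delta_{\tau}$ constraint forces the labels $j_{s}$ to be constant on each connected component of $\Gamma(\tau)$, a component with $2\mu_{r}$ vertices contributes a free sum $\Tr(\bm{D}^{\mu_{r}})=\Tr(\bm{A}^{\mu_{r}})$, and the sum factorizes exactly over components, yielding $\mathbb{E}(\Tr_{\tau}'(\bm{A}))$ as claimed.
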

This theorem allows us to access the expectation of joint local entries of any rotationally invariant squared random matrix by linking them with the expectation of global tracial moments.

\subsubsection{Computation of the orthogonal Weingarten matrix in $M_{4}$}

To derive the expected Frobenius error of estimation between the holdout estimator and the population covariance matrix, the Weingarten function on $M_{4}$ will be needed. Therefore, in preparation for the main result of this paper, let us derive the Weingarten matrix in $M_{4}$.

$M_{4}$ includes three pair-partitions: $(12)(34)$,$(13)(24)$ and $(14)(23)$.

\begin{table}
\centering
\begin{equation}
\begin{tabular}{c|ccc}
    \diagbox{$\sigma$}{$\tau$}& $(12)(34)$ & $(13)(24)$ & $(14)(23)$ \\ \hline
    $(12)(34)$ & $n^{2}$ & $n$ & $n$ \\
    $(13)(24)$ & $n$ & $n^{2}$ & $n$ \\
    $(14)(23)$ & $n$ & $n$ & $n^{2}$ \\
\end{tabular}
\end{equation}
\caption{Gram matrix of $M_{4}$} % Caption (title) of the array
\label{tab:gram} % Label for referencing
\end{table}

Then the Weingarten matrix in $M_{4}$ is equal to the inverse of the Gram matrix written in table \ref{tab:gram} 
\begin{equation}
    W^{\textrm{O}}=\begin{bmatrix}
n^{2} & n & n \\
n & n^{2}  & n   \\
n & n & n^{2} 
\end{bmatrix}^{-1},
\end{equation}

and by inverting it we obtain
\begin{equation}
\label{eq:weingarten_matrix}
    W^{\textrm{O}}=\begin{bmatrix}
\alpha & \beta & \beta \\
\beta & \alpha & \beta  \\
\beta & \beta & \alpha
\end{bmatrix}
\end{equation}
where 
\begin{equation}
    \alpha=\frac{n+1}{n(n-1)(n+2)}
\end{equation}
and 
\begin{equation}
    \beta=-\frac{1}{n(n-1)(n+2)}.
\end{equation}

The inversion of large formal Weingarten matrices can become a computational challenge. A table of values of the orthogonal Weingarten function up to $M_{12}$ has been computed in \cite{collins2009some}.

\section{Frobenius error of the holdout method and optimal split}
\label{Section 3}

In this section, we compute the expected Frobenius error of the holdout estimator with respect to the population covariance matrix in the high dimension limit. We consider data generated by a rotationally invariant multiplicative noise model, defined in section \ref{Section 2}. The first step of the proof involves the use of Weingarten calculus and more precisely of the Weingarten matrix computed in Eq. \eqref{eq:weingarten_matrix} to link the moments of the out-of-sample covariance matrix with those of the population covariance matrix and of the noise matrix. Then, as in the holdout error proof in the Gaussian case \cite{lam2024holdout}, the error is written as a function of the oracle eigenvalues which can be computed in the high-dimension limit using the Ledoit-Péché formula \cite{ledoit2011eigenvectors} .
\subsection{General holdout error}

\begin{proposition}
\label{prop:general_holdout_error}
Let us consider $\bm{\Sigma}$ a positive-defined matrix in $\mathbb{R}^{n \times n}$ and $(\bm{x}_{\textrm{i}})_{1\leq i\leq t}$ i.i.d. observations of the rotationally invariant random vector $\bm{\xi}\in\mathbb{R}^{n}$ such that
\begin{equation}
\label{eq:noise_assumption}
    \mathbb{E}(\bm{\xi}\bm{\xi}^{T})=\mathds{1},
\end{equation}
and 
\begin{equation}
    \mathbb{E}(\Vert\bm{\xi}\Vert^{4})<\infty,
\end{equation}

where $\Vert \bullet \Vert$ is the Euclidean norm.
 
Let $\bm{\Xi}^\textrm{H}$ the holdout estimator associated with the index partition $\{\mathcal{I}_\textrm{in},\mathcal{I}_\textrm{out}\}$ defined from $t_\textrm{out}$ integer between $1$ and $t-1$, with $\bm{V}_\textrm{in}$ the eigenvector of the sample covariance $\bm{E}_\textrm{in}$ on the train partition $\mathcal{I}_\textrm{in}$.
 
 Then the expected Frobenius error between the holdout estimator and the population covariance $\bm{\Sigma}$ is in the high dimension limit
    \begin{equation}
    \label{eq:general_holdout_error_formula}
        \mathbb{E}\left(\Vert \bm{\Xi}^{H} - \bm{\Sigma} \Vert^{2}\right) = \left[ \frac{k}{t}\left(\frac{3 \mathbb{E}\left[ \Tr((\bm{\xi}\bm{\xi}^{T})^{2})\right]}{n(n+2)} - 1\right) - 1 \right] \mathbb{E}\left[\tau(\diag(\bm{V}_{\textrm{in}}^{t}\bm{\Sigma} \bm{V}_{\textrm{in}})^{2})\right]+ \mathbb{E}\left[\tau(\bm{\Sigma}^{2})\right].
    \end{equation}
\end{proposition}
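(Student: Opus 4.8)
The plan is to treat this as a bias--variance decomposition in disguise, reducing everything to the diagonal matrices $\bm{D} := \diag(\bm{V}_{\textrm{in}}^{T} \bm{E}_{\textrm{out}} \bm{V}_{\textrm{in}})$ and $\bm{D}_{\bm{\Sigma}} := \diag(\bm{V}_{\textrm{in}}^{T} \bm{\Sigma} \bm{V}_{\textrm{in}})$, and then performing one genuinely nontrivial moment computation with Weingarten calculus. First I would expand
\begin{equation}
\Vert \bm{\Xi}^{H} - \bm{\Sigma} \Vert^{2} = \tau\big((\bm{\Xi}^{H})^{2}\big) - 2\tau(\bm{\Xi}^{H}\bm{\Sigma}) + \tau(\bm{\Sigma}^{2}).
\end{equation}
Since $\bm{\Xi}^{H} = \bm{V}_{\textrm{in}}\bm{D}\bm{V}_{\textrm{in}}^{T}$ with $\bm{V}_{\textrm{in}}$ orthogonal and $\bm{D}$ diagonal, orthogonality gives $\tau((\bm{\Xi}^{H})^{2}) = \tau(\bm{D}^{2})$, while cyclicity of the trace together with the diagonality of $\bm{D}$ collapses the cross term to $\tau(\bm{\Xi}^{H}\bm{\Sigma}) = \tau(\bm{D}\,\diag(\bm{V}_{\textrm{in}}^{T}\bm{\Sigma}\bm{V}_{\textrm{in}})) = \tau(\bm{D}\bm{D}_{\bm{\Sigma}})$. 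This turns the whole problem into computing the first two conditional moments of $\bm{D}$ given the train eigenbasis $\bm{V}_{\textrm{in}}$.

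The key structural observation is that $\bm{E}_{\textrm{out}}$ is built from observations disjoint from those defining $\bm{E}_{\textrm{in}}$, hence it is independent of $\bm{V}_{\textrm{in}}$, and it is unbiased, $\mathbb{E}[\bm{E}_{\textrm{out}}] = \bm{\Sigma}$, by Eq.~\eqref{eq:noise_assumption}. Conditioning on $\bm{V}_{\textrm{in}}$ therefore yields $\mathbb{E}[\bm{D}\mid \bm{V}_{\textrm{in}}] = \bm{D}_{\bm{\Sigma}}$, and since $\bm{D}_{\bm{\Sigma}}$ is a deterministic function of $\bm{V}_{\textrm{in}}$ this immediately gives $\mathbb{E}[\tau(\bm{D}\bm{D}_{\bm{\Sigma}})] = \mathbb{E}[\tau(\bm{D}_{\bm{\Sigma}}^{2})]$ for the cross term. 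All remaining work is in the second moment $\mathbb{E}[\tau(\bm{D}^{2})]$. Writing $\bm{v}_{i}$ for the $i$-th column of $\bm{V}_{\textrm{in}}$ and $\bm{w}_{i} := \sqrt{\bm{\Sigma}}\bm{v}_{i}$, I would expand
\begin{equation}
D_{ii} = \bm{v}_{i}^{T}\bm{E}_{\textrm{out}}\bm{v}_{i} = \frac{1}{t_{\textrm{out}}}\sum_{j\in\mathcal{I}_{\textrm{out}}}(\bm{w}_{i}^{T}\bm{x}_{j})^{2},
\end{equation}
square it, and split the resulting double sum over test indices into the $t_{\textrm{out}}(t_{\textrm{out}}-1)$ off-diagonal pairs $j\neq k$ and the $t_{\textrm{out}}$ diagonal terms $j=k$. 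Independence of the $\bm{x}_{j}$ and $\mathbb{E}[\bm{x}\bm{x}^{T}]=\mathds{1}$ reduce each off-diagonal pair to $(\bm{w}_{i}^{T}\bm{w}_{i})^{2} = (\bm{D}_{\bm{\Sigma}})_{ii}^{2}$, while the diagonal terms demand the fourth moment $\mathbb{E}[(\bm{w}_{i}^{T}\bm{\xi})^{4}]$.

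This fourth moment is where the Weingarten calculus is essential. I would apply the orthogonal Weingarten formula \eqref{eq:orthogonal_weingarten_formula} at $k=2$ to the symmetric rotationally invariant rank-one matrix $\bm{A}=\bm{\xi}\bm{\xi}^{T}$, noting that $A_{ab}A_{cd} = \xi_{a}\xi_{b}\xi_{c}\xi_{d}$. Because $\Tr(\bm{A})=\Vert\bm{\xi}\Vert^{2}$ and $\Tr(\bm{A}^{2})=\Vert\bm{\xi}\Vert^{4}$ coincide up to squaring, every $\Tr'_{\tau}(\bm{A})$ over $\tau\in M_{4}$ equals $\Tr((\bm{\xi}\bm{\xi}^{T})^{2})$, so the $M_{4}$ Weingarten matrix of Eq.~\eqref{eq:weingarten_matrix} enters only through its row sum $\alpha+2\beta = \tfrac{1}{n(n+2)}$. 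This produces the isotropic tensor
\begin{equation}
\mathbb{E}[\xi_{a}\xi_{b}\xi_{c}\xi_{d}] = \frac{\mathbb{E}[\Tr((\bm{\xi}\bm{\xi}^{T})^{2})]}{n(n+2)}\big(\delta_{ab}\delta_{cd}+\delta_{ac}\delta_{bd}+\delta_{ad}\delta_{bc}\big),
\end{equation}
and contracting it four times against $\bm{w}_{i}$ gives $\mathbb{E}[(\bm{w}_{i}^{T}\bm{\xi})^{4}] = \tfrac{3\mathbb{E}[\Tr((\bm{\xi}\bm{\xi}^{T})^{2})]}{n(n+2)}(\bm{D}_{\bm{\Sigma}})_{ii}^{2}$. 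Combining the diagonal and off-diagonal contributions, summing over $i$, dividing by $n$, and using $t_{\textrm{out}}^{-1}=k/t$ yields
\begin{equation}
\mathbb{E}[\tau(\bm{D}^{2})] = \mathbb{E}[\tau(\bm{D}_{\bm{\Sigma}}^{2})]\left[1 + \frac{k}{t}\left(\frac{3\mathbb{E}[\Tr((\bm{\xi}\bm{\xi}^{T})^{2})]}{n(n+2)} - 1\right)\right],
\end{equation}
and substituting this together with the cross term $\mathbb{E}[\tau(\bm{D}_{\bm{\Sigma}}^{2})]$ into the initial expansion gives exactly the claimed formula. The main obstacle is precisely the Weingarten bookkeeping of this last step: correctly matching the three pair partitions of $M_{4}$ to their coset types and to the traces $\Tr'_{\tau}(\bm{\xi}\bm{\xi}^{T})$, and verifying that the row sum of $W^{\textrm{O}}$ collapses the $\alpha,\beta$ entries into the single prefactor $1/(n(n+2))$. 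Everything else is elementary conditioning and index counting; note that this route actually produces the identity at finite $n$, with the high-dimensional limit entering only later when the eigenvector term $\mathbb{E}[\tau(\diag(\bm{V}_{\textrm{in}}^{T}\bm{\Sigma}\bm{V}_{\textrm{in}})^{2})]$ is evaluated through the Ledoit--P\'ech\'e formula.
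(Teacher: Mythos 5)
Your proposal is correct and follows essentially the same route as the paper: the same three-term expansion of the Frobenius error, the same diagonal/off-diagonal split over test observations after exploiting train--test independence, and the identical Weingarten step in $M_{4}$ where the rank-one identity $\Tr((\bm{\xi}\bm{\xi}^{T})^{2})=(\Tr(\bm{\xi}\bm{\xi}^{T}))^{2}$ collapses the Weingarten matrix to its row sum $\alpha+2\beta=\tfrac{1}{n(n+2)}$, yielding the isotropic fourth-moment tensor and the coefficient $\tfrac{k}{t}\bigl(3\gamma-1\bigr)-1$. Your packaging via conditioning on $(\bm{V}_{\textrm{in}},\bm{\Sigma})$ and the quadratic forms $(\bm{w}_{i}^{T}\bm{x}_{j})^{2}$ is a modest streamlining of the paper's entrywise tensor computation of $\mathbb{E}\left[(E_{\textrm{out}})_{jk}(E_{\textrm{out}})_{lm}\right]$ --- it sidesteps the paper's explicit factorization/asymptotic-freeness step for random $\bm{\Sigma}$ and makes transparent your (correct) closing observation that the identity holds at finite $n$, with the high-dimensional limit needed only to evaluate $\mathbb{E}\left[\tau(\diag(\bm{V}_{\textrm{in}}^{T}\bm{\Sigma}\bm{V}_{\textrm{in}})^{2})\right]$.
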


\begin{remark}
    Although Eq. \eqref{eq:general_holdout_error_formula} is valid only in the high-dimension limit, that is, when $n,t\to \infty$ with $\frac{n}{t}=q>0$ fixed. It should be noted that $k$, a real number between $1$ and $t$, is kept in the equation. Indeed, an interesting application of deriving the holdout error is being able to compute the optimal $k$ which minimizes it. 
\end{remark}

\begin{proof}
The expected Frobenius error of estimation between the holdout estimator and the population covariance matrix can be expanded as followed
\begin{equation}
    \label{eq:frob_error_expansion}
    \mathbb{E}\left[\tau((\bm{\Xi}^{\textrm{H}}- \bm{\Sigma})^{2}) \right]= \mathbb{E}\left[\tau((\bm{\Xi}^{\textrm{H}})^{2}) \right]- 2\mathbb{E}\left[ \tau(\bm{\Xi}^{\textrm{H}}\bm{\Sigma})\right]+ \mathbb{E}\left[\tau(\bm{\Sigma}^{2}) \right].
\end{equation}
Let us firstly focus on the first term of Eq.\ \eqref{eq:frob_error_expansion}

\begin{align}
    \mathbb{E}\left[\tau((\bm{\Xi}^{\textrm{H}})^2)\right]&=\mathbb{E}\left[\tau(\diag(\bm{V}_{\textrm{in}}^{t}\bm{E}_{\textrm{out}}\bm{V_{\textrm{in}}})^{2})\right]\\
    &=\frac{1}{n}\sum_{i,j,k,l,m} \mathbb{E}\left[(\bm{V}_{\textrm{in}})_{ji}(\bm{V}_{\textrm{in}})_{ki}(\bm{V}_{\textrm{in}})_{li}(\bm{V}_{\textrm{in}})_{mi}\right]\mathbb{E}\left[(\bm{E}_\textrm{out})_{jk}(\bm{E}_{\textrm{out}})_{lm}\right],
\end{align}
where the last equality has been obtained using the independence between the train and test data.

 Let us remind that the sample covariance $\bm{E}_{\textrm{out}}$ can be expressed as a function of the population covariance $\bm{\Sigma}$ and of the noise matrix $\bm{X}_\textrm{out}\in\mathbb{R}^{n\times t_{\textrm{out}}}$. To simplify the notations, $\bm{X}_\textrm{out}\in\mathbb{R}^{n\times t_{\textrm{out}}}$ will be noted as $\bm{X}$ and $\bm{E}_{\textrm{out}}$ can be written
\begin{equation}
    \bm{E}_{\textrm{out}}=\frac{1}{t_{out}}\sqrt{\bm{\Sigma}}\bm{X}\bm{X}^{T}\sqrt{\bm{\Sigma}},
\end{equation}

where we assume the noise matrix $\bm{X}\bm{X}^{T}$ to be asymptotically free from the population covariance $\bm{\Sigma}$.

To lighten the notations, let us define 
\begin{equation}
    \sqrt{\bm{\Sigma}} =\Tilde{\bm{\Sigma}}, 
\end{equation}

and let us focus on computing $\mathbb{E}\left[(\bm{E}_\textrm{out})_{jk}(\bm{E}_{\textrm{out}})_{lm}\right]$.

\begin{align*}
    \mathbb{E}\left[(E_\textrm{out})_{jk}(E_{\textrm{out}})_{lm}\right]&=\frac{1}{(t_{\textrm{out}})^{2}}\mathbb{E}\left(\sum_{i_{1},i_{2},i_{3},i_{4}=1}^{n}\sum_{t_{1},t_{2}=1}^{t_{\textrm{out}}} \Tilde{\Sigma}_{ji_{1}}X_{i_{1}t_{1}}X_{i_{2}t_{1}} \Tilde{\Sigma}_{i_{2}k} \Tilde{\Sigma}_{li_{3}}X_{i_{3}t_{2}}X_{i_{4}t_{2}} \Tilde{\Sigma}_{i_{4}m} \right) \\
    &= \frac{1}{(t_{\textrm{out}})^{2}}\sum_{i_{1},i_{2},i_{3},i_{4}=1}^{n}\sum_{t_{1},t_{2}=1}^{t_{\textrm{out}}} \mathbb{E} \left( \Tilde{\Sigma}_{ji_{1}}\Tilde{\Sigma}_{i_{2}k} \Tilde{\Sigma}_{li_{3}}\Tilde{\Sigma}_{i_{4}m}\right) \mathbb{E}\left( X_{i_{1}t_{1}}X_{i_{2}t_{1}} X_{i_{3}t_{2}}X_{i_{4}t_{2}} \right)\\
    &= \frac{1}{(t_{\textrm{out}})^{2}}\sum_{i_{1},i_{2},i_{3},i_{4}=1}^{n}\sum_{t_{1}\neq t_{2}}^{t_{\textrm{out}}} \mathbb{E} \left( \Tilde{\Sigma}_{ji_{1}}\Tilde{\Sigma}_{i_{2}k} \Tilde{\Sigma}_{li_{3}}\Tilde{\Sigma}_{i_{4}m}\right) \mathbb{E}\left( X_{i_{1}t_{1}}X_{i_{2}t_{1} }\right) \mathbb{E}\left(X_{i_{3}t_{2}}X_{i_{4}t_{2}} \right)\\
    &+ \frac{1}{t_{\textrm{out}}}  \sum_{i_{1},i_{2},i_{3},i_{4}=1}^{n} \mathbb{E} \left( \Tilde{\Sigma}_{ji_{1}}\Tilde{\Sigma}_{i_{2}k} \Tilde{\Sigma}_{li_{3}}\Tilde{\Sigma}_{i_{4}m}\right) \mathbb{E}\left( X_{i_{1}1}X_{i_{2}1} X_{i_{3}1}X_{i_{4}1} \right).
\end{align*}

The last equality was obtained using the fact that the columns of the noise matrix $\bm{X}$ are i.i.d.

Moreover by Eq.\ \eqref{eq:noise_assumption}, for $1\leq i_{1},i_{2}\leq n$ we have
\begin{equation}
   \mathbb{E}(X_{i_{1}t_{1}}X_{i_{2}t_{1}})=\delta_{i_{1}i_{2}}.
\end{equation}
Therefore,
\begin{align*}
    \mathbb{E}\left[(E_\textrm{out})_{jk}(E_{\textrm{out}})_{lm}\right]&=\frac{t_{\textrm{out}}-1}{t_{\textrm{out}}}\sum_{i_{1},i_{3}=1}^{n} \mathbb{E} \left( \Tilde{\Sigma}_{ji_{1}}\Tilde{\Sigma}_{i_{1}k} \Tilde{\Sigma}_{li_{3}}\Tilde{\Sigma}_{i_{3}m}\right) \\
    &+ \frac{1}{t_{\textrm{out}}} \sum_{i_{1},i_{3},i_{4},i_{6}=1}^{n} \mathbb{E} \left( \Tilde{\Sigma}_{ji_{1}}\Tilde{\Sigma}_{i_{2}k} \Tilde{\Sigma}_{li_{3}}\Tilde{\Sigma}_{i_{4}m}\right) \mathbb{E}\left( X_{i_{1}1}X_{i_{2}1} X_{i_{3}1}X_{i_{4}1} \right),
\end{align*}

which simplifies to
\begin{equation}  
\label{eq:prod_eout_before_weingarten}
   \mathbb{E}\left[(E_\textrm{out})_{jk}(E_{\textrm{out}})_{lm}\right] = \frac{t_{\textrm{out}}-1}{t_{\textrm{out}}}\Sigma_{jk}\Sigma_{lm} +  \frac{1}{t_{\textrm{out}}}  \sum_{i_{1},i_{3},i_{4},i_{6}=1}^{n} \mathbb{E} \left( \Tilde{\Sigma}_{ji_{1}}\Tilde{\Sigma}_{i_{2}k} \Tilde{\Sigma}_{li_{3}}\Tilde{\Sigma}_{i_{4}m}\right) \mathbb{E}\left( X_{i_{1}1}X_{i_{2}1} X_{i_{3}1}X_{i_{4}1} \right)
\end{equation}

Let us now use Weingarten calculus and more precisely the Weingarten matrix obtained in Eq.\ \eqref{eq:weingarten_matrix} to link the term $\mathbb{E}(X_{i_{1}1}X_{i_{2}1}X_{i_{3}1}X_{i_{4}1})$ to the expected value of the tracial moments of the noise matrix. Let us first note that we can re-write $\mathbb{E}(X_{i_{1}1}X_{i_{2}1}X_{i_{3}1}X_{i_{4}1})$ using the noise vector $\bm{\xi}=(\xi_{1},\dots \xi_{n})^{T}$ as 

\begin{equation}
    \mathbb{E}(X_{i_{1}1}X_{i_{2}1}X_{i_{3}1}X_{i_{4}1}) = \mathbb{E}\left[ \xi_{i_{1}}\xi_{i_{2}}\xi_{i_{3}}\xi_{i_{4}}\right],
\end{equation}
and let us prepare the components appearing in the formula written in Eq.\ \eqref{eq:orthogonal_weingarten_formula}
\begin{equation}
    \mathbb{E}\left[\Tr'_{\tau=(12)(34)}(\bm{\xi}\bm{\xi}^{T}) \right] = \mathbb{E}\left[ \left(\Tr(\bm{\xi}\bm{\xi}^{T})\right)^{2} \right].
\end{equation}

and 
\begin{equation}
    \mathbb{E}\left[\Tr'_{\tau=(13)(24)}(\bm{\xi}\bm{\xi}^{T}) \right] = \mathbb{E}\left[\Tr'_{\tau=(14)(23)}(\bm{\xi}\bm{\xi}^{T}) \right] = \mathbb{E}\left[ \Tr\left((\bm{\xi}\bm{\xi}^{T})^{2}\right) \right]
\end{equation}

Therefore by applying the formula written in Eq.\ \eqref{eq:orthogonal_weingarten_formula}, we obtain

\begin{equation} \label{eq:appli_weingarten}
\begin{alignedat}{2}
\mathbb{E}\!\left[\xi_{i_{1}}\xi_{i_{2}}\xi_{i_{3}}\xi_{i_{4}}\right] 
&= \mathbb{E}\!\left[\left(\Tr(\bm{\xi}\bm{\xi}^{T})\right)^{2}\right]
   &\;\Bigl(&\delta_{i_{1}i_{2}}\delta_{i_{3}i_{4}}Wg^{O}((1 2)(3 4),(1 2)(3 4)) \notag\\
& &&+ \delta_{i_{1}i_{3}}\delta_{i_{2}i_{4}}Wg^{O}((1 3)(2 4),(1 2)(3 4)) \notag\\
& &&+ \delta_{i_{1}i_{4}}\delta_{i_{2}i_{3}}Wg^{O}((1 4)(2 3),(1 2)(3 4))\Bigr) \notag\\[6pt]
&\quad + \mathbb{E}\!\left[\Tr\!\left((\bm{\xi}\bm{\xi}^{T})^{2}\right)\right]
   &\;\Bigl(&\delta_{i_{1}i_{2}}\delta_{i_{3}i_{4}}Wg^{O}((1 2)(3 4),(1 3)(2 4)) \notag\\
& &&+ \delta_{i_{1}i_{3}}\delta_{i_{2}i_{4}}Wg^{O}((1 3)(2 4),(1 3)(2 4)) \notag\\
& &&+ \delta_{i_{1}i_{4}}\delta_{i_{2}i_{3}}Wg^{O}((1 4)(2 3),(1 3)(2 4)) \notag\\
& &&+ \delta_{i_{1}i_{2}}\delta_{i_{3}i_{4}}Wg^{O}((1 2)(3 4),(1 4)(2 3)) \notag\\
& &&+ \delta_{i_{1}i_{3}}\delta_{i_{2}i_{4}}Wg^{O}((1 3)(2 4),(1 4)(2 3)) \notag\\
& &&+ \delta_{i_{1}i_{4}}\delta_{i_{2}i_{3}}Wg^{O}((1 4)(2 3),(1 4)(2 3))\Bigr)  
\end{alignedat}
\end{equation}

Let us also notice that actually 
\begin{equation}
\mathbb{E}\left[ \Tr\left((\bm{\xi}\bm{\xi}^{T})^{2}\right)\right]= \sum_{i,j=1}^{n} \mathbb{E}\left[ \xi_{i}^{2}\xi_{j}^{2}\right]  = \mathbb{E}\left[ \left(\Tr(\bm{\xi}\bm{\xi}^{T})\right)^{2}\right] 
\end{equation}

\iffalse
and 

\begin{equation}
\mathbb{E}\left[ \left(\Tr(\bm{\xi}\bm{\xi}^{T})\right)^{2}\right]=\sum_{i,j=1}^{n}\mathbb{E}\left[ \xi_{i}^{2}\xi_{j}^{2}\right],      
\end{equation}

therefore

\begin{equation}
\mathbb{E}\left[ \Tr\left((\bm{\xi}\bm{\xi}^{T})^{2}\right)\right] = \mathbb{E}\left[ \left(\Tr(\bm{\xi}\bm{\xi}^{T})\right)^{2}\right],
\end{equation}
\fi
so that Eq.\ \eqref{eq:appli_weingarten} simplifies to
\begin{equation}
   \label{eq:noise_moments}\mathbb{E}\left[\xi_{i_{1}}\xi_{i_{2}}\xi_{i_{3}}\xi_{i_{4}}\right]= \left((\alpha+2\beta)\mathbb{E}\left[ \Tr((\bm{\xi}\bm{\xi}^{T})^{2})\right] \right) \left[\delta_{i_{1}i_{2}}\delta_{i_{3}i_{4}} + \delta_{i_{1}i_{3}}\delta_{i_{2}i_{4}} + \delta _{i_{1}i_{4}}\delta_{i_{2}i_{3}}\right]
\end{equation}
with $\alpha$ and $\beta$ the coefficients of the Weingarten matrix defined in Eq.\ \eqref{eq:weingarten_matrix}.

By substituting Eq.\ \eqref{eq:noise_moments} into Eq.\ \eqref{eq:prod_eout_before_weingarten}, we obtain

\begin{align*}
   \mathbb{E}\left[(E_\textrm{out})_{jk}(E_{\textrm{out}})_{lm}\right] &=  \frac{t_{out}-1}{t_{\textrm{out}}}\Sigma_{jk}\Sigma_{lm}\\
   &+ \frac{1}{t_{\textrm{out}}} \sum_{i_{1},i_{2},i_{3},i_{4}=1}^{n} \mathbb{E} \left( \Tilde{\Sigma}_{ji_{1}}\Tilde{\Sigma}_{i_{2}k} \Tilde{\Sigma}_{li_{3}}\Tilde{\Sigma}_{i_{4}m}\right) \left((\alpha+2\beta)\mathbb{E}\left[ \Tr((\bm{\xi}\bm{\xi}^{T})^{2})\right] \right) 
   \delta_{i_{1}i_{2}}\delta_{i_{3}i_{4}}\\
   &+ \frac{1}{t_{\textrm{out}}} \sum_{i_{1},i_{2},i_{3},i_{4}=1}^{n} \mathbb{E} \left( \Tilde{\Sigma}_{ji_{1}}\Tilde{\Sigma}_{i_{2}k} \Tilde{\Sigma}_{li_{3}}\Tilde{\Sigma}_{i_{4}m}\right) \left((\alpha+2\beta)\mathbb{E}\left[ \Tr((\bm{\xi}\bm{\xi}^{T})^{2})\right] \right) 
    \delta_{i_{1}i_{3}}\delta_{i_{2}i_{4}}\\
   &+ \frac{1}{t_{\textrm{out}}} \sum_{i_{1},i_{2},i_{3},i_{4}=1}^{n} \mathbb{E} \left( \Tilde{\Sigma}_{ji_{1}}\Tilde{\Sigma}_{i_{2}k} \Tilde{\Sigma}_{li_{3}}\Tilde{\Sigma}_{i_{4}m}\right) \left((\alpha+2\beta)\mathbb{E}\left[ \Tr((\bm{\xi}\bm{\xi}^{T})^{2})\right] \right) 
   \delta _{i_{1}i_{4}}\delta_{i_{2}i_{3}}
\end{align*}

Therefore
\begin{align}
\label{eq:expected_value_product_Eout}
\mathbb{E}\left[(E_\textrm{out})_{jk}(E_{\textrm{out}})_{lm}\right] &= \frac{t_{out}-1}{t_{\textrm{out}}}\Sigma_{jk}\Sigma_{lm} + \frac{1}{t_{\textrm{out}}}\left(\frac{1}{n(n+2)}\mathbb{E}\left[ \Tr((\bm{\xi}\bm{\xi}^{T})^{2})\right] \right)\left[ \Sigma_{jk} \Sigma_{lm} + \Sigma_{jl} \Sigma_{km} + \Sigma_{jm} \Sigma_{kl}  \right],
\end{align}
and
\begin{equation}
\label{eq:holdout_error_first_term}
    \mathbb{E}\left[\tau(\bm{\Xi}^{2})\right]= \left[ \frac{t_{\textrm{out}}-1}{t_{\textrm{out}}} + \frac{1}{t_{\textrm{out}}}\frac{3\mathbb{E}\left[ \Tr((\bm{\xi}\bm{\xi}^{T})^{2})\right]}{n(n+2)}\right]\mathbb{E}\left[\tau(\diag(\bm{V}_{\textrm{in}}^{t}\bm{\Sigma} \bm{V}_{\textrm{in}})^{2})\right].
\end{equation}

Using again similar computations, we obtain the second term appearing in the expansion of the Frobenius error written in Eq.\ \eqref{eq:frob_error_expansion}

\begin{equation}
    \label{eq:holdout_error_second_term}\mathbb{E}\left[\tau(\bm{\Xi}\bm{\Sigma})\right]=\mathbb{E}\left[\tau(\diag(\bm{V}_{\textrm{in}}^{t}\bm{\Sigma} \bm{V}_{\textrm{in}})^{2})\right].
\end{equation}
Substituting Eq.\ \eqref{eq:holdout_error_first_term} and Eq.\ \eqref{eq:holdout_error_second_term} into Eq.\ \eqref{eq:frob_error_expansion} yields the desired result.
\end{proof}

\begin{remark}
    In general, the term $\mathbb{E}\left[\tau(\diag(\bm{V}_{\textrm{in}}^{t}\bm{\Sigma} \bm{V}_{\textrm{in}})^{2})\right]$ is difficult to compute in the high-dimension limit using the Ledoit-Péché formula \cite{ledoit2011eigenvectors}. However, when the population covariance matrix $\bm{\Sigma}$ is inverse Wishart and the data are Gaussian, the computations reduce to the linear shrinkage described in Proposition \eqref{prop:linear_shrinkage}. In the following section, we will approximate the oracle eigenvalues using first a linear, then a quadratic shrinkage and will consider an inverse Wishart population covariance matrix.
\end{remark}

\subsection{Linear shrinkage approximation}

In this section, we apply our findings on the expected holdout error assuming that the population covariance matrix $\bm{\Sigma}$ belongs to the inverse Wishart ensemble to derive an approximation of the expected holdout error. Then we compute the optimal $k$ that minimizes the error of estimation under this approximation. 

When the matrix $\bm{\Sigma}$ is inverse Wishart and the data are Gaussian, the computation of $\mathbb{E}\left[\tau(\diag(\bm{V}_{\textrm{in}}^{t}\bm{\Sigma} \bm{V}_{\textrm{in}})^{2})\right]$ reduces to the optimal linear shrinkage written \eqref{eq:general_holdout_error_formula}. Although the data are not assumed Gaussian but rotationally invariant, we will approximate the oracle eigenvalues by the eigenvalues of the optimal linear shrinkage given in proposition\ \eqref{prop:linear_shrinkage}, let us define $\diag(\bm{V}_{\textrm{in}}^{t}\bm{\Sigma} \bm{V}_{\textrm{in}})_{L}$ as the diagonal matrix containing the optimal linear shrinkage eigenvalues, i.e. for $i$ an integer between $1$ and $n$
\begin{equation}
  (\diag(\bm{V}_{\textrm{in}}^{t}\bm{\Sigma} \bm{V}_{\textrm{in}})_{\textrm{L}})_{ii}=r\lambda_{i}^{\bm{E}_{\textrm{in}}}+(1-r),  
\end{equation}
where $\lambda_{i}^{\bm{E}_{\textrm{in}}}$ is the i-th largest eigenvalue of $\bm{E}_{\textrm{in}}$ and $r=\frac{p}{p+q_{\textrm{in}}}$.

Let us also define the holdout error under this linear approximation as

 \begin{equation}
 \label{eq:holdout_error_general_linear_app}
        \mathbb{E}\left(\Vert \bm{\Xi}^{H} - \bm{\Sigma} \Vert_{\textrm{L}}^{2}\right) = \left[ \frac{k}{t}\left(\frac{3 \mathbb{E}\left[ \Tr((\bm{\xi}\bm{\xi}^{T})^{2})\right]}{n(n+2)} - 1\right) - 1 \right] \mathbb{E}\left[\tau(\diag(\bm{V}_{\textrm{in}}^{t}\bm{\Sigma} \bm{V}_{\textrm{in}})_{\textrm{L}}^{2})\right]+ \mathbb{E}\left[\tau(\bm{\Sigma}^{2})\right].
    \end{equation}
    
For the purpose of computing Eq.\ \eqref{eq:holdout_error_general_linear_app} explicitly for an inverse Wishart population covariance matrix, a necessary step is the computation of the trace of the square of the sample covariance.
\begin{lemma}
\label{lemma_sample_cov_square}
Let $\bm{\xi}\in\mathbb{R}^{n}$ be a rotationally invariant noise vector such that
\begin{equation}
    \mathbb{E}\left[\bm{\xi\xi}^{T} \right]=\bm{\mathds{1}},
\end{equation}
and let $\bm{E}$ the sample covariance of $\bm{\Sigma}$ generated over $t$ data points. Let us assume that $\bm{\Sigma}$ is an inverse Wishart of parameters $n$ and $p$ such that $p$ is negligible in front of $n$, then in the limit of high dimension \begin{equation}
        \mathbb{E}\left[\tau(\bm{E}^{2}) \right]=  1+p + q\frac{\mathbb{E}\left[ \Tr((\bm{\xi}\bm{\xi}^{T})^{2})\right]}{n(n+2)}.
    \end{equation}
\end{lemma}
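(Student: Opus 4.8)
The plan is to reduce $\tau(\bm{E}^{2})$ to the entrywise second moments of $\bm{E}$ that were already obtained inside the proof of Proposition~\ref{prop:general_holdout_error}, and only then to inject the spectral moments of the inverse Wishart ensemble. First I would use the symmetry of $\bm{E}$ to write $\Tr(\bm{E}^{2})=\sum_{j,k}E_{jk}E_{kj}$, so that $\mathbb{E}[\Tr(\bm{E}^{2})]=\sum_{j,k}\mathbb{E}[E_{jk}E_{kj}]$. This is exactly the object evaluated in Eq.~\eqref{eq:expected_value_product_Eout}, now applied to the full sample (replace $t_{\textrm{out}}$ by $t$ and $\bm{E}_{\textrm{out}}$ by $\bm{E}$) and specialised to the index choice $l=k$, $m=j$. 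The Weingarten step therefore requires no new computation: it is the same rotationally invariant fourth moment of the noise, governed by $\mathbb{E}[\Tr((\bm{\xi}\bm{\xi}^{T})^{2})]$.

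Next I would contract the indices. Setting $l=k$, $m=j$ turns the three pairings $\Sigma_{jk}\Sigma_{lm}+\Sigma_{jl}\Sigma_{km}+\Sigma_{jm}\Sigma_{kl}$ into $2\Sigma_{jk}^{2}+\Sigma_{jj}\Sigma_{kk}$, and summing over $j,k$ gives $\sum_{j,k}\Sigma_{jk}^{2}=\Tr(\bm{\Sigma}^{2})$ and $\sum_{j,k}\Sigma_{jj}\Sigma_{kk}=(\Tr\bm{\Sigma})^{2}$. Dividing by $n$ and using $\tfrac1n(\Tr\bm{\Sigma})^{2}=n\,\tau(\bm{\Sigma})^{2}$ produces the exact pre-limit identity
\begin{equation}
\mathbb{E}[\tau(\bm{E}^{2})]=\frac{t-1}{t}\,\mathbb{E}[\tau(\bm{\Sigma}^{2})]+\frac{1}{t}\frac{\mathbb{E}[\Tr((\bm{\xi}\bm{\xi}^{T})^{2})]}{n(n+2)}\Bigl(2\,\mathbb{E}[\tau(\bm{\Sigma}^{2})]+n\,\mathbb{E}[\tau(\bm{\Sigma})^{2}]\Bigr),
\end{equation}
which holds for any $\bm{\Sigma}$; nothing specific to the inverse Wishart has been used up to this point.

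It then remains to feed in the spectral moments of the inverse Wishart of Definition~\ref{def:inverse_wishart}. Writing $\bm{\Sigma}=(1-q^{*})\bm{W}^{-1}$ with $\bm{W}$ Wishart of ratio $q^{*}=p/(1+p)$, I would extract the first two inverse moments of the Marchenko--Pastur law from the self-consistent equation for its Stieltjes transform expanded at $z=0$, namely $\tau(\bm{W}^{-1})=(1-q^{*})^{-1}$ and $\tau(\bm{W}^{-2})=(1-q^{*})^{-3}$. These give $\tau(\bm{\Sigma})=(1-q^{*})\tau(\bm{W}^{-1})=1$ and $\mathbb{E}[\tau(\bm{\Sigma}^{2})]=(1-q^{*})^{2}\tau(\bm{W}^{-2})=(1-q^{*})^{-1}=1+p$, since $1-q^{*}=(1+p)^{-1}$.

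Finally I would let $n,t\to\infty$ with $q=n/t$ fixed. The prefactor $\tfrac{t-1}{t}\to1$ leaves $\mathbb{E}[\tau(\bm{\Sigma}^{2})]\to1+p$; the term carrying $n\,\mathbb{E}[\tau(\bm{\Sigma})^{2}]$ becomes $\tfrac{n}{t}\tfrac{\mathbb{E}[\Tr((\bm{\xi}\bm{\xi}^{T})^{2})]}{n(n+2)}=q\,\tfrac{\mathbb{E}[\Tr((\bm{\xi}\bm{\xi}^{T})^{2})]}{n(n+2)}$, the announced noise contribution; and the leftover cross term $\tfrac{2}{t}\tfrac{\mathbb{E}[\Tr((\bm{\xi}\bm{\xi}^{T})^{2})]}{n(n+2)}(1+p)$ is of order $(1+p)/n$, because $\mathbb{E}[\Tr((\bm{\xi}\bm{\xi}^{T})^{2})]/(n(n+2))=O(1)$ and $1/t=O(1/n)$. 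This is precisely where the hypothesis that $p$ is negligible in front of $n$ enters: it forces this last term to vanish, and collecting the three surviving pieces yields the claim. The main obstacle is the inverse Wishart second moment $\mathbb{E}[\tau(\bm{\Sigma}^{2})]=1+p$, i.e.\ correctly invoking the inverse Marchenko--Pastur moments (a sign/coefficient slip here changes the answer); the remainder is bookkeeping of which $O(1/t)$ and $O(p/n)$ terms survive the high-dimensional limit.
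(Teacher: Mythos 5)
Your proposal is correct and follows essentially the same route as the paper: both reduce $\mathbb{E}\left[\tau(\bm{E}^{2})\right]$ to the entrywise second moments already computed in Eq.~\eqref{eq:expected_value_product_Eout}, contract the indices to obtain the identical pre-limit identity $\frac{t-1}{t}\mathbb{E}\left[\tau(\bm{\Sigma}^{2})\right]+\frac{\gamma}{t}\left(2\,\mathbb{E}\left[\tau(\bm{\Sigma}^{2})\right]+n\,\mathbb{E}\left[\tau(\bm{\Sigma})^{2}\right]\right)$, insert $\mathbb{E}\left[\tau(\bm{\Sigma}^{2})\right]=1+p$ and $\mathbb{E}\left[\tau(\bm{\Sigma})^{2}\right]=1$, and then discard the same $\mathcal{O}\left((1+p)/t\right)$ cross term in the high-dimensional limit. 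The only cosmetic difference is that you re-derive the inverse Wishart spectral moments from the inverse Marchenko--Pastur law, where the paper simply cites them.
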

\begin{proof}
    \begin{equation}
        \mathbb{E}\left[ \tau(\bm{E}^{2}) \right]=\frac{1}{n}\sum_{i,j=1}^n\mathbb{E}\left[ E_{ij}^{2}\right]
    \end{equation}
Then by applying Eq.\ \eqref{eq:expected_value_product_Eout} in which we substitute $\bm{E}_{\textrm{out}}$ by $\bm{E}$ we obtain
\begin{equation}
        \mathbb{E}\left[ \tau(\bm{E}^{2}) \right]=\frac{t-1}{t}\sum_{i,j=1}^{n}\Sigma_{ij}\Sigma_{ij} + \frac{1}{t}\left(\frac{1}{n(n+2)}\mathbb{E}\left[ \Tr((\bm{\xi}\bm{\xi}^{T})^{2})\right] \right) \sum_{i,j=1}^{n}\left[ 2\Sigma_{ij} \Sigma_{ij} + \Sigma_{ii} \Sigma_{jj}  \right],
    \end{equation}

which simplifies to 
\begin{equation}
  \mathbb{E}\left[ \tau(\bm{E}^{2}) \right]  = \left(\frac{t-1}{t} + \frac{1}{t}\frac{2\mathbb{E}\left[ \Tr((\bm{\xi}\bm{\xi}^{T})^{2})\right]}{n(n+2)} \right) \mathbb{E}\left[\tau(\bm{\Sigma}^{2}) \right] + \left(\frac{n}{t}\frac{\mathbb{E}\left[ \Tr((\bm{\xi}\bm{\xi}^{T})^{2})\right]}{n(n+2)} \right) \mathbb{E}\left[\tau(\bm{\Sigma})^{2} \right]
\end{equation}

For an inverse Wishart, the following equalities hold in the high dimension limit \cite{haff1979identity,potters2020first}
\begin{equation}
    \mathbb{E}\left[\tau(\bm{\Sigma}^{2}) \right]=1+p,
\end{equation}
and
\begin{equation}
    \mathbb{E}\left[\tau(\bm{\Sigma})^{2} \right]=1.
\end{equation}

Therefore 

\begin{equation}
  \mathbb{E}\left[ \tau(\bm{E}^{2}) \right]  = \frac{t-1}{t}(1+p)  + \frac{\mathbb{E}\left[ \Tr((\bm{\xi}\bm{\xi}^{T})^{2})\right]}{n(n+2)}  \left(\frac{n}{t} + \frac{2(1+p)}{t}\right)
\end{equation}

and in the high dimension limit

\begin{equation}
  \mathbb{E}\left[ \tau(\bm{E}^{2}) \right]  = 1+p + q\frac{\mathbb{E}\left[ \Tr((\bm{\xi}\bm{\xi}^{T})^{2})\right]}{n(n+2)}.
\end{equation}
\end{proof}

Now the approximation of the holdout error written in Eq.\ \eqref{eq:holdout_error_general_linear_app} can be explicitly computed. 
\begin{corollary}
\label{cor:holdout_error_inv_wishart}
When $\Sigma$ is an inverse Wishart of parameters $n$ and $p$ such that $p$ is negligible in front of $n$ and under the assumptions of proposition\ \eqref{prop:general_holdout_error}, the expected holdout Frobenius error written in Eq.\ \eqref{eq:holdout_error_general_linear_app} in the high dimension limit equals
    \begin{equation}
    \label{eq:holdout_error_inv_wish}
        \mathbb{E}\left(\Vert \bm{\Xi}^{H} - \bm{\Sigma} \Vert_{{\textrm{L}}}^{2}\right) =\left[ \frac{k}{t}\left(\frac{3 \mathbb{E}\left[ \Tr((\bm{\xi}\bm{\xi}^{T})^{2})\right]}{n(n+2)} - 1\right) - 1 \right] \left(\frac{p^{2}}{p+\frac{kn}{kt-1}\frac{\mathbb{E}\left[ \Tr((\bm{\xi}\bm{\xi}^{T})^{2})\right]}{n(n+2)} } + 1 \right)+ 1+p.
    \end{equation}
\end{corollary}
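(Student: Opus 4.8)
The plan is to feed the two inverse-Wishart-specific moment identities and the linear-shrinkage ansatz into the general formula \eqref{eq:holdout_error_general_linear_app} and simplify. The prefactor $\left[\frac{k}{t}\left(\frac{3\mathbb{E}[\Tr((\bm{\xi}\bm{\xi}^{T})^{2})]}{n(n+2)}-1\right)-1\right]$ is already in the desired closed form, and the additive term $\mathbb{E}[\tau(\bm{\Sigma}^{2})]=1+p$ is exactly the inverse-Wishart second-moment identity already invoked in the proof of Lemma \ref{lemma_sample_cov_square}. Hence the whole task collapses to evaluating the single scalar $\mathbb{E}[\tau(\diag(\bm{V}_{\textrm{in}}^{t}\bm{\Sigma}\bm{V}_{\textrm{in}})_{\textrm{L}}^{2})]$ under the linear approximation and showing it equals $\frac{p^{2}}{p+q_{\textrm{in}}m}+1$, where I write $m:=\frac{\mathbb{E}[\Tr((\bm{\xi}\bm{\xi}^{T})^{2})]}{n(n+2)}$ and $q_{\textrm{in}}=n/t_{\textrm{in}}$.

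First I would expand the shrinkage trace directly from the definition of the linear-approximation eigenvalues $r\lambda_{i}^{\bm{E}_{\textrm{in}}}+(1-r)$, writing $\tau(\diag(\cdots)_{\textrm{L}}^{2})=\frac{1}{n}\sum_{i}(r\lambda_{i}^{\bm{E}_{\textrm{in}}}+(1-r))^{2}=r^{2}\tau(\bm{E}_{\textrm{in}}^{2})+2r(1-r)\tau(\bm{E}_{\textrm{in}})+(1-r)^{2}$. Taking expectations and using the normalization $\mathbb{E}[\tau(\bm{E}_{\textrm{in}})]=1$, the cross and constant terms collapse through $2r(1-r)+(1-r)^{2}=1-r^{2}$, leaving the compact identity $\mathbb{E}[\tau(\diag(\cdots)_{\textrm{L}}^{2})]=r^{2}\bigl(\mathbb{E}[\tau(\bm{E}_{\textrm{in}}^{2})]-1\bigr)+1$.

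Next I would supply the two missing ingredients. Applying Lemma \ref{lemma_sample_cov_square} to the training sample covariance $\bm{E}_{\textrm{in}}$ (that is, with $t$ replaced by $t_{\textrm{in}}$ and $q$ by $q_{\textrm{in}}$) gives $\mathbb{E}[\tau(\bm{E}_{\textrm{in}}^{2})]-1=p+q_{\textrm{in}}m$. Taking $r$ to be the train-set optimal shrinkage coefficient of Proposition \ref{prop:linear_shrinkage}, $r=\frac{\mathbb{E}[\tau(\bm{\Sigma}^{2})]-1}{\mathbb{E}[\tau(\bm{E}_{\textrm{in}}^{2})]-1}=\frac{p}{p+q_{\textrm{in}}m}$, the product $r^{2}(\mathbb{E}[\tau(\bm{E}_{\textrm{in}}^{2})]-1)$ telescopes to $\frac{p^{2}}{p+q_{\textrm{in}}m}$, so that $\mathbb{E}[\tau(\diag(\cdots)_{\textrm{L}}^{2})]=\frac{p^{2}}{p+q_{\textrm{in}}m}+1$. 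Finally, rewriting $q_{\textrm{in}}=n/t_{\textrm{in}}$ with $t_{\textrm{in}}=t-t/k$ in terms of $(k,t)$ puts the denominator in the stated form $p+\frac{kn}{kt-1}m$, and substituting this alongside $\mathbb{E}[\tau(\bm{\Sigma}^{2})]=1+p$ back into \eqref{eq:holdout_error_general_linear_app} reproduces the claimed expression.

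The main obstacle is bookkeeping rather than concept: one must specialise Lemma \ref{lemma_sample_cov_square} to the train partition correctly (it is $q_{\textrm{in}}$, not the full $q$, that enters) and decide which $O(1/t_{\textrm{in}})$ finite-size corrections are retained when passing to the high-dimension limit, since precisely those corrections pin down the exact denominator $kt-1$ as opposed to the leading-order $t(k-1)=kt-t$. A secondary point to verify is that the telescoping relies on the optimal $r$ \emph{adapted to the non-Gaussian noise}, so that the residual fourth-moment factor $m$ correctly survives inside the denominator instead of cancelling as it does in the Gaussian case $m=1$.
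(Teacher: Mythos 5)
Your route is, step for step, the paper's own proof: you expand the shrunk eigenvalues to get
\begin{equation}
\mathbb{E}\left[\tau\left(\diag(\bm{V}_{\textrm{in}}^{t}\bm{\Sigma}\bm{V}_{\textrm{in}})_{\textrm{L}}^{2}\right)\right]=r^{2}\left(\mathbb{E}\left[\tau(\bm{E}_{\textrm{in}}^{2})\right]-1\right)+1,
\end{equation}
(your cross-term bookkeeping, with the explicit factor $2r(1-r)$, is in fact cleaner than the paper's intermediate display, which drops the $2$ but lands on the same final line), you specialise Lemma~\ref{lemma_sample_cov_square} to the train block to get $\mathbb{E}[\tau(\bm{E}_{\textrm{in}}^{2})]-1=p+q_{\textrm{in}}\gamma$ with $\gamma=\frac{\mathbb{E}\left[\Tr((\bm{\xi}\bm{\xi}^{T})^{2})\right]}{n(n+2)}$, you take the noise-adapted coefficient $r=p/(p+q_{\textrm{in}}\gamma)$ so that the product telescopes to $p^{2}/(p+q_{\textrm{in}}\gamma)$, and you substitute back into Eq.~\eqref{eq:holdout_error_general_linear_app} together with $\mathbb{E}[\tau(\bm{\Sigma}^{2})]=1+p$. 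All of this matches the paper's argument, and your secondary check --- that $\gamma$ must survive inside the denominator rather than cancel as in the Gaussian case --- is correct and is exactly what the paper's $r_{\textrm{in}}$ encodes.

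The one genuine flaw is the final rewriting, which you sense but resolve incorrectly. From Definition~\ref{def:holdout_estimator}, $kt_{\textrm{out}}=t$ exactly, so $q_{\textrm{in}}=n/t_{\textrm{in}}=n/(t-t/k)=kn/(kt-t)$ with no approximation at all; there are no $\mathcal{O}(1/t_{\textrm{in}})$ finite-size terms left to ``pin down'' a denominator $kt-1$. Moreover the gap between the two forms is not a vanishing correction: at fixed $k$ in the high-dimension limit, $kn/(kt-1)\to q$ while $q_{\textrm{in}}=kn/(kt-t)\to qk/(k-1)$, which disagree for every finite $k$. So your proposed reconciliation cannot work as stated. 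Note that the paper's proof does not derive the $kt-1$ denominator either --- it simply asserts $r_{\textrm{in}}$ in that form --- whereas your computation is the honest one; everything points to $kt-1$ being a slip for $kt-t=t(k-1)$ (this is consistent with $r=p/(p+q_{\textrm{in}})$ stated earlier in the same section, and is numerically invisible near the optimum, where $k_{\textrm{opt}}\sim\sqrt{n}$ makes $k/(k-1)\to1$). The correct move is therefore to state $q_{\textrm{in}}=kn/(kt-t)$ and flag the discrepancy with the displayed formula, rather than attribute it to unretained finite-size corrections.
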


  \begin{proof}
  Let us first compute  $\mathbb{E}\left[\tau(\diag(\bm{V}_{\textrm{in}}^{t}\bm{\Sigma} \bm{V}_{\textrm{in}})_{\textrm{L}}^{2})\right]$ in the high dimension limit 
\begin{align}
    \mathbb{E}\left[\tau(\diag(\bm{V}_{\textrm{in}}^{t}\bm{\Sigma} \bm{V}_{\textrm{in}})_{{\textrm{L}}}^{2})\right]&=\mathbb{E}\left[(r_{in}\bm{E}_{\textrm{in}}+(1-r_{\textrm{in}})\mathds{1} )^{2}\right]\notag \\ 
    &=r_{\textrm{in}}^{2}\mathbb{E}\left[\tau(\bm{E}_{\textrm{in}}^{2})\right] + r_{\textrm{in}}(1-r_{\textrm{in}})\mathbb{E}\left[\tau(\bm{E}_{\textrm{in}}) \right] + (1-r_{\textrm{in}})^{2} \notag \\
    &= r_{\textrm{in}}^{2}(\mathbb{E}\left[\tau(\bm{E}_{\textrm{in}}^{2})\right]-1) + 1.
\end{align} 

with
\begin{equation}
    r_{\textrm{in}}= \frac{p}{p+\frac{kn}{kt-1}\frac{\mathbb{E}\left[ \Tr((\bm{\xi}\bm{\xi}^{T})^{2})\right]}{n(n+2)} }
\end{equation}

Then by using lemma\ \eqref{lemma_sample_cov_square} which derives the second order moment of the sample covariance of an inverse Wishart, we obtain

\begin{align}
    \label{eq:oracle_inv_wish}\mathbb{E}\left[\tau(\diag(\bm{V}_{\textrm{in}}^{t}\bm{\Sigma} \bm{V}_{\textrm{in}})_{{\textrm{L}}}^{2})\right]&=\frac{p^{2}}{p+\frac{kn}{kt-1}\frac{\mathbb{E}\left[ \Tr((\bm{\xi}\bm{\xi}^{T})^{2})\right]}{n(n+2)} } + 1.
\end{align}

By injecting Eq.\ \eqref{eq:oracle_inv_wish} to  Eq.\ \eqref{eq:general_holdout_error_formula}, we obtain the desired result, Eq. \eqref{eq:holdout_error_inv_wish}.
\iffalse
    \begin{equation}
    \label{eq:holdout_error_lin_app_explicit}
        \mathbb{E}\left(\Vert \bm{\Xi}^{H} - \bm{\Sigma} \Vert_{F,L}^{2}\right)= \left[ \frac{k}{t}\left(\frac{3 \mathbb{E}\left[ \Tr((\bm{\xi}\bm{\xi}^{T})^{2})\right]}{n(n+2)} - 1\right) - 1 \right] \left(\frac{p^{2}}{p+\frac{kn}{kt-1}\frac{\mathbb{E}\left[ \Tr((\bm{\xi}\bm{\xi}^{T})^{2})\right]}{n(n+2)} } + 1 \right)+ 1+p.
    \end{equation}
\fi
  \end{proof}

\begin{remark}
    Let us now comment the range of the ratio 

\begin{equation}
    \gamma:=\frac{\mathbb{E}\left[ \Tr((\bm{\xi}\bm{\xi}^{T})^{2})\right]}{n(n+2)}
\end{equation}
appearing in the holdout error formula given in Eq.\ \eqref{eq:general_holdout_error_formula}. Because of the assumption written in proposition\ \eqref{prop:general_holdout_error}
    \begin{equation}
        \mathbb{E}\left[ \bm{\xi}\bm{\xi}^{T}\right]=\bm{\mathds{1}}
    \end{equation}
    and by using Jensen inequality, we obtain that 
\begin{equation}
    \gamma\geq \frac{n^{2}}{n(n+2)}.
\end{equation}
Therefore, $\gamma$ is at least in $\mathcal{O}(1)$ in the high dimension limit. 
\end{remark}
\subsubsection{Optimal split}
\begin{corollary}

Under the assumptions of corollary\ \eqref{cor:holdout_error_inv_wishart}, there exists a $k$, the train-test ratio, defined from $t_{\textrm{out}}$ an integer between $1$ and $t-1$, that minimizes the expected approximated Frobenius holdout error of estimation written in Eq.\ \eqref{eq:holdout_error_inv_wish}, equal to
    \begin{align*}
        k_{\textrm{opt}}&= 
\frac{p \left( \gamma^2 n q^2 \cdot (3\gamma - 1) + n p^2 \cdot (3\gamma p + 3\gamma - p - 1)\right)}{n (\gamma q + p) \left( \gamma^2 q^2 \cdot (3\gamma - 1) + \gamma p q (3\gamma p + 6\gamma - p - 2) + p^2 \cdot (3\gamma p + 3\gamma - p - 1) \right)}\\
&+ \frac{p\left(q \left( 3\gamma^2 n p^2 + 6\gamma^2 n p - \gamma n p^2 - 2\gamma n p + \sqrt{\frac{\gamma n^2 \cdot (3\gamma - 1)(n p + q(\gamma n - 3\gamma p + p))(\gamma q + p^2 + p)}{q^2}} (\gamma q + p) \right) \right)}
{n (\gamma q + p) \left( \gamma^2 q^2 \cdot (3\gamma - 1) + \gamma p q (3\gamma p + 6\gamma - p - 2) + p^2 \cdot (3\gamma p + 3\gamma - p - 1) \right)},
    \end{align*}
\end{corollary}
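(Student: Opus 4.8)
The plan is to regard the approximated holdout error in Eq.~\eqref{eq:holdout_error_inv_wish} as a smooth function $f(k)$ of the single real variable $k$ and to locate its minimizer by one-variable calculus. Writing $\gamma$ for the ratio introduced above, setting $A = 3\gamma - 1$, and eliminating $t$ through $t = n/q$, the error reads
\begin{equation}
f(k) = \left(\frac{Ak}{t} - 1\right)\left(\frac{p^{2}}{u(k)} + 1\right) + 1 + p, \qquad u(k) = p + \frac{kn\gamma}{kt - 1}.
\end{equation}
On the range $k\in(1,t)$ considered in the preceding remark one has $kt-1>0$ and $u(k)>p>0$, so $f$ is $C^{\infty}$ there; as this interval is bounded, a minimizer exists by continuity (extending $f$ to the endpoints), and the claimed $k_{\textrm{opt}}$ is the critical point obtained from $f'(k)=0$.

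First I would differentiate. With $g(k)=\frac{Ak}{t}-1$ and $h(k)=\frac{p^{2}}{u(k)}+1$ one gets $g'(k)=A/t$ and, since $u'(k)=-\frac{n\gamma}{(kt-1)^{2}}$,
\begin{equation}
h'(k) = -\frac{p^{2}u'(k)}{u(k)^{2}} = \frac{p^{2}n\gamma}{(kt-1)^{2}\,u(k)^{2}}.
\end{equation}
The stationarity condition $f'(k)=g'(k)h(k)+g(k)h'(k)=0$, after clearing denominators by multiplying through by $t\,(kt-1)^{2}u(k)^{2}$, becomes
\begin{equation}
A\,u(k)\bigl(p^{2}+u(k)\bigr)(kt-1)^{2} + (Ak-t)\,p^{2}n\gamma = 0.
\end{equation}

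The key simplification is to introduce $v = u(k)(kt-1) = p(kt-1) + kn\gamma$, which is \emph{linear} in $k$. Then $u(k)(kt-1)^{2}=v(kt-1)$ and $u(k)^{2}(kt-1)^{2}=v^{2}$, so the condition collapses to
\begin{equation}
A\bigl[p^{2}v(kt-1) + v^{2}\bigr] + (Ak-t)\,p^{2}n\gamma = 0.
\end{equation}
Expanding $v$ and $kt-1$ shows every term has degree at most two in $k$, so this is a genuine quadratic $\mathcal{A}k^{2}+\mathcal{B}k+\mathcal{C}=0$. Solving it by the quadratic formula produces the two candidate stationary points; the non-radical part of $k_{\textrm{opt}}$ originates from $-\mathcal{B}/(2\mathcal{A})$ and the radical part from the discriminant $\sqrt{\mathcal{B}^{2}-4\mathcal{A}\mathcal{C}}$, matching the structure of the stated formula.

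The final two steps are bookkeeping. I would select the root lying in $(1,t)$ and verify it is a minimum (by checking $f''>0$ there, equivalently that $f'$ changes sign from negative to positive), discarding the spurious root; then I would substitute $t=n/q$ throughout and simplify. The main obstacle is entirely computational: expanding $\mathcal{A},\mathcal{B},\mathcal{C}$, applying the quadratic formula, and reducing the result to the displayed closed form in $(n,p,q,\gamma)$ is a long algebraic manipulation best carried out with a computer algebra system, and some care is required to fix the correct branch of the square root so that $k_{\textrm{opt}}$ is genuinely the minimizer rather than the other critical point.
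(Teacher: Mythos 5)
Your proposal is correct and takes essentially the same route as the paper, whose entire proof is the one-line remark that the formula "is obtained directly by canceling the derivative of the expected Frobenius error in Eq.~\eqref{eq:holdout_error_inv_wish}". Your elaboration --- differentiating $f(k)$, clearing denominators, observing via the substitution $v=u(k)(kt-1)$ that the stationarity condition is a quadratic in $k$, then selecting the root in $(1,t)$ where $f'$ changes sign from negative to positive and substituting $t=n/q$ --- is precisely the (unwritten) computation the paper delegates to the reader.
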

\begin{proof}
    This equality is obtained directly by canceling the derivative of the expected Frobenius error in Eq. \eqref{eq:holdout_error_inv_wish}. 
\end{proof}

\begin{remark}
    $k_{\textrm{opt}}$ exists only if $\gamma>\frac{1}{3}$, however, this condition is automatically verified because of the assumption 
    \begin{equation}
        \mathbb{E}\left[\bm{\xi}\bm{\xi}^{T} \right]=\bm{\mathds{1}}.
    \end{equation}

\end{remark}

When $n\to\infty$, one finds the following
\begin{equation}
\label{eq:kopt}
    k_{\textrm{opt}}\sim  \frac{\sqrt{\gamma (3\gamma - 1)(p+q\gamma(\gamma q +p^{2}+p))}}{\gamma^{2}q^{2}(3\gamma -1) + \gamma pq(3\gamma p + 6\gamma - p - 2)+p^{2} (3\gamma p + 3\gamma - p -1) }   p\sqrt{n}.
\end{equation}
Therefore, the optimal $k$, the train-test ratio, which minimizes the expected holdout error, is proportional to the square root of the dimension of the population covariance matrix.

\begin{figure}
\centering
        \includegraphics[width=0.6\columnwidth]{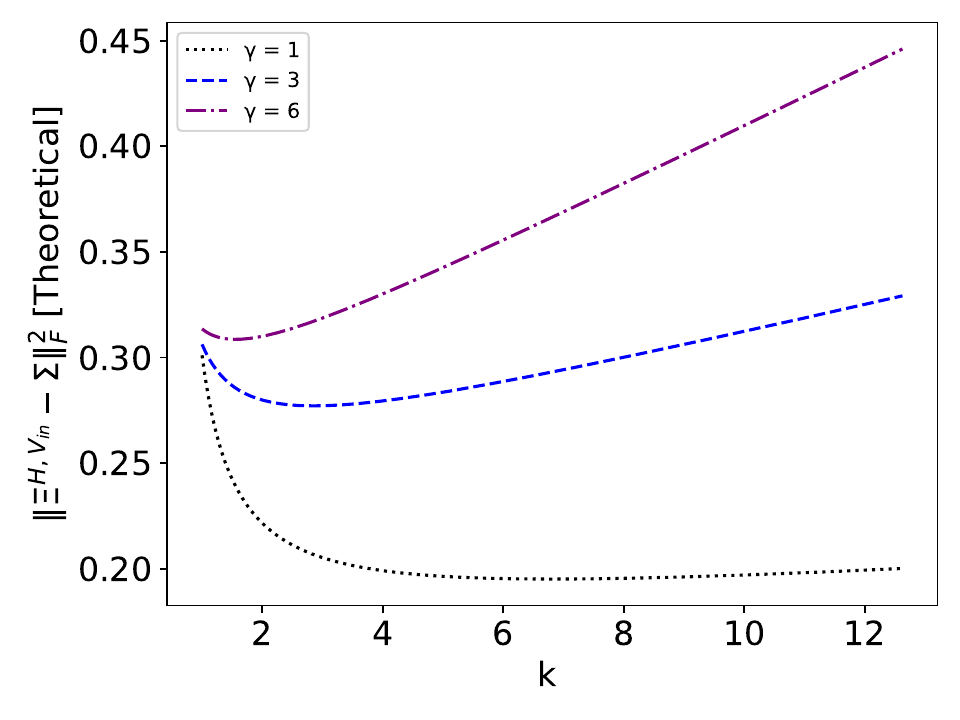}  
    \caption{Holdout theoretical error from equation~\eqref{eq:holdout_error_inv_wish} as a function of $k=\frac{t}{t_{\textrm{out}}}$ where $1\leq t_{\textrm{out}}\leq t-1$ for an inverse Wishart population matrix with parameters  $n=500$, $p=0.3$, $q=0.4$ where $\gamma=1$ corresponds to multivariate Gaussian noise vectors, $\gamma=3$ corresponds to the norm of the noise vector following a centered Gaussian of variance $\sqrt{n}$ and $\gamma=6$ corresponds to the norm of the noise following a centered Laplace distribution of scale parameter $\sqrt{\frac{n}{2}}$. } 
    \label{fig:theoretical_holdout_error_three_gammas}
\end{figure} 

Let us also consider the evolution of the holdout error as a function of $\gamma$: Figure \ref{fig:theoretical_holdout_error_three_gammas} plots the expected holdout error from Eq. \eqref{eq:holdout_error_inv_wish} as a function of $k$ for an inverse Wishart population matrix, fixed $n$, $q$ and $p$ and for $\gamma=\{1,3,6\}$. The values of $\gamma$ correspond to multivariate Gaussian data, a centered Gaussian scaling with a $\sqrt{n}$ variance, and a Laplace scaling distributions, respectively (see Section \ref{Section 4}). Note that the expected holdout error around $k_{\textrm{opt}}$ is shallow for $\gamma=1$ and becomes sharper as $\gamma$ increases. In practice, this implies that the choice of $k$ is more important for large values of $\gamma$.

\subsection{Quadratic shrinkage approximation}

In this section, we consider a polynomial of order two in the sample eigenvalues to approximate the oracle eigenvalues. In a similar manner as in the previous section, we will consider the optimal quadratic shrinkage that minimizes the Frobenius norm with respect to the population covariance $\bm{\Sigma}$ which amounts to \cite{ledoit2022quadratic}
    \begin{equation*}
\label{eq:quadratic_shrinkage_formulation}
    \underset{\alpha_{1},\alpha_{2},\alpha_{3}\in[0,1]}{\mathrm{argmin}}\Vert \alpha_{1}\bm{E}^{2}+\alpha_{2}\bm{E}+\alpha_{3}\mathds{1} - \bm{\Sigma} \Vert^{2} 
\end{equation*}
under the constraint
\begin{equation}
 \tau(\alpha_{1}\bm{E}^{2}+\alpha_{2}\bm{E}+\alpha_{3}\mathds{1})=1,
\end{equation}

The solution to this problem is given in the following proposition.

\begin{proposition}
Let us consider $\bm{\Sigma}$ a positive semi-definite matrix and $\bm{E}$ the sample covariance estimator of $\bm{\Sigma}$. We assume that both $\bm{\Sigma}$ and $\bm{E}$ are normalized such that $\mathbb{E}\left[\tau(\bm{\Sigma})\right]=\mathbb{E}\left[\tau(\bm{E})\right]$=1 and that $\bm{E}$ has finite tracial moments up to order $4$. Then the optimal quadratic shrinkage coefficients that minimize the expected Frobenius error of estimation given in Eq.\ \eqref{eq:quadratic_shrinkage_formulation} are
\cite{ledoit2022quadratic}
\begin{equation}
\label{eq:quad_shrinkage_1st_coeff}
    \alpha_{1}=\frac{\left( \mathbb{E}\left[\tau(\bm{E}^{2}) \right]- \mathbb{E}\left[ \tau(\bm{E}^{3}) \right]\right) \left( \mathbb{E}\left[\tau(\bm{E}\bm{\Sigma})\right]-1 \right) + \left(\mathbb{E}\left[\tau(\bm{E}^{2}\bm{\Sigma})\right] - \mathbb{E}\left[ \tau(\bm{E}^{2})\right] \right) \left(\mathbb{E}\left[ \tau(\bm{E}^{2})\right] - \mathbb{E}\left[ \tau(\bm{E})\right] \right)}{\left( \mathbb{E}\left[\tau(\bm{E}^{2}) \right]- \mathbb{E}\left[ \tau(\bm{E}) \right]\right) \left( \mathbb{E}\left[\tau(\bm{E}^{4})\right]-\mathbb{E}\left[\tau(\bm{E}^{2})^{2} \right] \right) - \left( \mathbb{E}\left[\tau(\bm{E}^{3}) \right]-\mathbb{E}\left[\tau(\bm{E}^{2})\right] \right)^{2}}
\end{equation}

\begin{equation}
 \label{eq:quad_shrinkage_2nd_coeff}   \alpha_{2}=\frac{\mathbb{E}\left[\tau(\bm{E}\bm{\Sigma}) \right]-1 -\alpha_{1}\left(\mathbb{E}\left[\tau(\bm{E}^{3})\right]- \mathbb{E}\left[\tau(\bm{E}^{2})\right]\right)}{\mathbb{E}\left[\tau(\bm{E}^{2})\right]- \mathbb{E}\left[\tau(\bm{E})\right]}
\end{equation}

\begin{equation}
\label{eq:quad_shrinkage_3rd_coeff}
    \alpha_{3}=1-\alpha_{1}\mathbb{E}\left[ \tau(\bm{E}^{2})\right]-\alpha_{2}\mathbb{E}\left[ \tau(\bm{E}) \right].
\end{equation}

\end{proposition}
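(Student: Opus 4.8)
The plan is to treat this as a constrained quadratic minimization in the three scalars $(\alpha_1,\alpha_2,\alpha_3)$ and to solve the resulting stationarity conditions by Cramer's rule, in direct analogy with the linear case of Proposition~\ref{prop:linear_shrinkage}. First I would expand the expected Frobenius objective $J(\alpha_1,\alpha_2,\alpha_3)=\mathbb{E}[\tau((\alpha_1\bm{E}^2+\alpha_2\bm{E}+\alpha_3\mathds{1}-\bm{\Sigma})^2)]$ into a quadratic form whose coefficients are the tracial moments $\mathbb{E}[\tau(\bm{E}^j)]$ for $0\le j\le 4$, the mixed moments $\mathbb{E}[\tau(\bm{E}\bm{\Sigma})]$ and $\mathbb{E}[\tau(\bm{E}^2\bm{\Sigma})]$, and $\mathbb{E}[\tau(\bm{\Sigma}^2)]$; the finite-fourth-moment hypothesis guarantees that every such quantity is finite. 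The affine constraint $\mathbb{E}[\tau(\alpha_1\bm{E}^2+\alpha_2\bm{E}+\alpha_3\mathds{1})]=1$ is then adjoined with a Lagrange multiplier $\lambda$.

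Next I would write the three stationarity equations $\partial_{\alpha_i}J=\lambda\,\partial_{\alpha_i}g$, where $g$ denotes the constraint functional. The key simplification is that differentiating in $\alpha_3$ gives $2(\alpha_3+\alpha_1\mathbb{E}[\tau(\bm{E}^2)]+\alpha_2\mathbb{E}[\tau(\bm{E})])-2\mathbb{E}[\tau(\bm{\Sigma})]=\lambda$; invoking the constraint together with the normalization $\mathbb{E}[\tau(\bm{\Sigma})]=\mathbb{E}[\tau(\bm{E})]=1$ collapses the left-hand side to zero, so that $\lambda=0$. Consequently the $\alpha_3$-equation reduces to the constraint itself, which yields $\alpha_3=1-\alpha_1\mathbb{E}[\tau(\bm{E}^2)]-\alpha_2\mathbb{E}[\tau(\bm{E})]$, that is, Eq.~\eqref{eq:quad_shrinkage_3rd_coeff}.

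Eliminating $\alpha_3$ through this relation in the two equations obtained by differentiating in $\alpha_1$ and $\alpha_2$ produces a $2\times2$ linear system $\bigl(\begin{smallmatrix}A&B\\B&C\end{smallmatrix}\bigr)(\alpha_1,\alpha_2)^{T}=(D,F)^{T}$, whose entries are the centred tracial moments $A=\mathbb{E}[\tau(\bm{E}^4)]-\mathbb{E}[\tau(\bm{E}^2)^2]$, $B=\mathbb{E}[\tau(\bm{E}^3)]-\mathbb{E}[\tau(\bm{E}^2)]$, $C=\mathbb{E}[\tau(\bm{E}^2)]-\mathbb{E}[\tau(\bm{E})]$, $D=\mathbb{E}[\tau(\bm{E}^2\bm{\Sigma})]-\mathbb{E}[\tau(\bm{E}^2)]$ and $F=\mathbb{E}[\tau(\bm{E}\bm{\Sigma})]-1$; here I use $\mathbb{E}[\tau(\bm{E})]=1$ and the high-dimensional concentration of $\tau(\bm{E}^2)$, which lets me identify $(\mathbb{E}[\tau(\bm{E}^2)])^2$ with $\mathbb{E}[\tau(\bm{E}^2)^2]$. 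Solving by Cramer's rule gives $\alpha_1=(DC-BF)/(AC-B^2)$, which is Eq.~\eqref{eq:quad_shrinkage_1st_coeff}, while back-substituting $\alpha_1$ into the second equation $B\alpha_1+C\alpha_2=F$ gives Eq.~\eqref{eq:quad_shrinkage_2nd_coeff}.

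The main obstacle is organizational rather than conceptual: it is the careful bookkeeping of the many cross terms when expanding $J$ and when eliminating $\alpha_3$, together with the systematic use of $\mathbb{E}[\tau(\bm{E})]=\mathbb{E}[\tau(\bm{\Sigma})]=1$ to collapse the cross moments into the compact centred forms above. To close the argument I would finally check well-posedness: the matrix $\bigl(\begin{smallmatrix}A&B\\B&C\end{smallmatrix}\bigr)$ is the Gram matrix, in the inner product $\langle\bm{M},\bm{N}\rangle=\mathbb{E}[\tau(\bm{M}\bm{N})]$, of the trace-centred matrices $\bm{E}^2-\mathbb{E}[\tau(\bm{E}^2)]\mathds{1}$ and $\bm{E}-\mathbb{E}[\tau(\bm{E})]\mathds{1}$, hence positive semidefinite with $AC-B^2\ge0$ by Cauchy--Schwarz, strictly positive whenever $\bm{E}^2$ is not affine in $\bm{E}$. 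This guarantees both that the critical point is the unique global minimizer of the convex quadratic $J$ and that the denominators in Eqs.~\eqref{eq:quad_shrinkage_1st_coeff}--\eqref{eq:quad_shrinkage_2nd_coeff} are nonzero.
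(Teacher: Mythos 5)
Your proposal is correct and follows essentially the same route as the paper: adjoin the trace constraint with a Lagrange multiplier, write the three stationarity equations in the tracial moments, and eliminate to obtain Eqs.~\eqref{eq:quad_shrinkage_1st_coeff}--\eqref{eq:quad_shrinkage_3rd_coeff} (your $DC-BF$ and $AC-B^2$ match the paper's numerator and denominator exactly). You in fact supply details the paper compresses into ``a few successive substitutions''---the observation that $\lambda=0$ under the normalizations, the identification of $(\mathbb{E}[\tau(\bm{E}^2)])^2$ with $\mathbb{E}[\tau(\bm{E}^2)^2]$ via concentration, and the Gram-matrix/Cauchy--Schwarz check that the $2\times 2$ system is well posed---which strengthens rather than departs from the paper's argument.
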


\begin{proof}

Let us note $\lambda$ the Lagrangian associated with the constraint $\tau(\alpha_{1}\bm{E}^{2}+\alpha_{2}\bm{E}+\alpha_{3}\mathds{1})=1$ and $E(\alpha_{1},\alpha_{2},\alpha_{3},\lambda)$ the following functional
\begin{align}
 E(\alpha_{1},\alpha_{2},\alpha_{3},\lambda)&=  \alpha_{1}^{2}\tau(\bm{E}^{4})+  \alpha_{2}^{2}\tau(\bm{E}^{3}) + \alpha_{3}^{2}\tau(\mathds{1})+\tau(\bm{\Sigma}^{2})+2\alpha_{1}\alpha_{2}\tau(\bm{E}^{3})+2\alpha_{1}\alpha_{3}\tau(\bm{E}^{2})\\ \notag
 &-2\alpha_{1}\tau(\bm{E}^{2}\bm{\Sigma})+2\alpha_{2}\alpha_{3}\tau(\bm{E})-2\alpha_{2}\tau(\bm{E\Sigma})-2\alpha_{3}\tau(\bm{\Sigma})\\ \notag
 &+\lambda(1-\alpha_{1}\tau(\bm{E}^{2})-\alpha_{2}\tau(\bm{E})-\alpha_{3}\tau(\mathds{1}))
\end{align}
The optimality conditions are the three following equations
    \begin{equation}
        \frac{\partial E}{\partial\alpha_{1}}=0 \Leftrightarrow 2\alpha_{1}\tau(\bm{E}^{4})+2\alpha_{2}\tau(\bm{E}^{3})+2\alpha_{3}\tau(\bm{E}^{2})-2\tau(\bm{E}^{2}\bm{\Sigma})-\lambda\tau(\bm{E}^{2})=0,
    \end{equation}

    \begin{equation}
        \frac{\partial E}{\partial\alpha_{2}}=0 \Leftrightarrow 2\alpha_{2}\tau(\bm{E}^{2})+2\alpha_{1}\tau(\bm{E}^{3})+2\alpha_{3}-2\tau(\bm{E\Sigma})-\lambda=0
    \end{equation}
and 
    \begin{equation}
        \frac{\partial E}{\partial\alpha_{3}}=0 \Leftrightarrow 2\alpha_{3}+2\alpha_{1}\tau(\bm{E}^{2})+2\alpha_{2}-2-\lambda=0,
    \end{equation}

and a few successive substitutions give the announced result.
\end{proof}

Now, let us define $\diag(\bm{V}_{\textrm{in}}^{t}\bm{\Sigma} \bm{V}_{\textrm{in}})_{\textrm{Q}}$ the diagonal matrix containing the optimal quadratic shrinkage eigenvalues, i.e. for $i$ an integer between $1$ and $n$

\begin{equation}
    (\diag(\bm{V}_{\textrm{in}}\bm{\Sigma}\bm{V}_{\textrm{in}}^{T}))_{\textrm{Q},ii}=\alpha_{1}(\lambda_{i}^{\bm{E}_{\textrm{in}}})^{2}+\alpha_{2}\lambda_{i}^{\bm{E}_{\textrm{in}}}+\alpha_{3},
\end{equation}
where $\lambda_{i}^{\bm{E}_{\textrm{in}}}$ is the i-th largest eigenvalue of $\bm{E}_{\textrm{in}}$ and $\alpha_{1}$, $\alpha_{2}$ and $\alpha_{3}$ are the coefficients written in Eq.\ \eqref{eq:quad_shrinkage_1st_coeff}, Eq.\ \eqref{eq:quad_shrinkage_2nd_coeff} and Eq.\ \eqref{eq:quad_shrinkage_3rd_coeff} respectively.

Let us also define the holdout error under the quadratic approximation as

 \begin{equation}
 \label{eq:holdout_error_general_quad_app}
        \mathbb{E}\left(\Vert \bm{\Xi}^{H} - \bm{\Sigma} \Vert_{\textrm{Q}}^{2}\right) = \left[ \frac{k}{t}\left(\frac{3 \mathbb{E}\left[ \Tr((\bm{\xi}\bm{\xi}^{T})^{2})\right]}{n(n+2)} - 1\right) - 1 \right] \mathbb{E}\left[\tau(\diag(\bm{V}_{\textrm{in}}^{t}\bm{\Sigma} \bm{V}_{\textrm{in}})_{\textrm{Q}}^{2})\right]+ \mathbb{E}\left[\tau(\bm{\Sigma}^{2})\right].
    \end{equation}
    
\begin{remark}
For the detailed computations of the third and fourth order tracial moments of $\bm{E}$, i.e. $\mathbb{E}\left[\tau(\bm{E}^{3}) \right]$ and $\mathbb{E}\left[\tau(\bm{E}^{4}) \right]$, that are needed to compute the optimal quadratic shrinkage coefficients and their error see the Appendix \ref{section:appendix}.
\end{remark}

\begin{corollary}
\label{cor:holdout_error_quad_inv_wishart}
When $\Sigma$ is an inverse Wishart of parameters $n$ and $p$ such that $p$ is negligible in front of $n$, under the assumptions of proposition\ \eqref{prop:general_holdout_error} and by assuming that 
\begin{equation}
    \mathbb{E}\left[\Vert\bm{\xi}\Vert^{6} \right]<\infty
\end{equation}
and 
\begin{equation}
    \mathbb{E}\left[\Vert\bm{\xi}\Vert^{8} \right]<\infty,
\end{equation}

 the expected holdout Frobenius error written in Eq.\ \eqref{eq:holdout_error_general_quad_app} in the high dimension limit equals
    \begin{equation}
    \label{eq:holdout_error_quad_inv_wish}
        \mathbb{E}\left(\Vert \bm{\Xi}^{H} - \bm{\Sigma} \Vert_{\textrm{Q}}^{2}\right) =\left[ \frac{k}{t}\left(\frac{3 \mathbb{E}\left[ \Tr((\bm{\xi}\bm{\xi}^{T})^{2})\right]}{n(n+2)} - 1\right) - 1 \right] \mathcal{E} +  1+p,
    \end{equation}

with
\begin{equation}
    \mathcal{E}=\alpha_{1}^{2}\left[\mathbb{E}\left[\tau(\bm{E}^{4})\right] - \mathbb{E}\left[\tau(\bm{E}^{2})^{2}\right] - r_{2}^{2}\left(\mathbb{E}\left[\tau(\bm{E}^{2})\right] - 1\right) \right] + r^{2}\left[\mathbb{E}\left[\tau(\bm{E}^{2})\right] -1\right] +1 ,
\end{equation}

\begin{equation}
    r=\frac{p}{p+q\frac{\mathbb{E}\left[\Tr((\bm{\xi}\bm{\xi}^{T})^{2}) \right]}{n(n+2)}}   \hspace{5mm} \hbox{and} \hspace{5mm} r_{2}=\frac{\mathbb{E}\left[\tau(\bm{E}^{3}) \right] - \mathbb{E}\left[\tau(\bm{E}^{2}) \right]}{p+q\frac{\mathbb{E}\left[\Tr((\bm{\xi}\bm{\xi}^{T})^{2}) \right]}{n(n+2)}}.
\end{equation}
\end{corollary}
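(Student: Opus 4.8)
The plan is to notice that, with the general quadratic holdout formula \eqref{eq:holdout_error_general_quad_app} already established, the only object left to evaluate is the second moment of the shrunk eigenvalues,
$$\mathcal{E}=\mathbb{E}\!\left[\tau\big(\diag(\bm{V}_{\textrm{in}}^{t}\bm{\Sigma}\bm{V}_{\textrm{in}})_{\textrm{Q}}^{2}\big)\right]=\mathbb{E}\!\left[\tau(\bm{P}^{2})\right],\qquad \bm{P}:=\alpha_{1}\bm{E}^{2}+\alpha_{2}\bm{E}+\alpha_{3}\bm{\mathds{1}},$$
where $\bm{E}$ denotes the train covariance $\bm{E}_{\textrm{in}}$. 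Since for the white inverse Wishart $\mathbb{E}[\tau(\bm{\Sigma}^{2})]=1+p$, substituting $\mathcal{E}$ and $1+p$ into \eqref{eq:holdout_error_general_quad_app} reproduces \eqref{eq:holdout_error_quad_inv_wish} verbatim. Thus essentially all the work is the evaluation of $\mathcal{E}$, and the target is to show it collapses to the stated compact expression.

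To compute $\mathcal{E}$ I would first specialize the optimal quadratic coefficients to the inverse Wishart case. Writing $\gamma=\mathbb{E}[\Tr((\bm{\xi}\bm{\xi}^{T})^{2})]/(n(n+2))$ and $m_{j}=\mathbb{E}[\tau(\bm{E}^{j})]$, I use $\tau(\bm{E}\bm{\Sigma})=\tau(\bm{\Sigma}^{2})=1+p$ (so $\tau(\bm{E}\bm{\Sigma})-1=p$) together with Lemma~\ref{lemma_sample_cov_square}, which gives $m_{2}-1=p+q\gamma$. Then \eqref{eq:quad_shrinkage_2nd_coeff} collapses to the clean relation $\alpha_{2}=r-\alpha_{1}r_{2}$, and the constraint \eqref{eq:quad_shrinkage_3rd_coeff} yields $\alpha_{3}=1-r-\alpha_{1}(m_{2}-r_{2})$, with $r$ and $r_{2}$ exactly as in the statement. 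The key algebraic step is then to reorganize
$$\bm{P}=\big(r\bm{E}+(1-r)\bm{\mathds{1}}\big)+\alpha_{1}\bm{Q},\qquad \bm{Q}:=\bm{E}^{2}-r_{2}\bm{E}-(m_{2}-r_{2})\bm{\mathds{1}}.$$
A direct check using $r_{2}=(m_{3}-m_{2})/(m_{2}-1)$ shows $\tau(\bm{Q})=0$ and $\tau(\bm{Q}\bm{E})=0$; that is, $\bm{Q}$ is the Gram--Schmidt residual of $\bm{E}^{2}$ against $\mathrm{span}\{\bm{\mathds{1}},\bm{E}\}$ in the $\tau$ inner product. Because the linear part lies in that span, the two pieces are $\tau$-orthogonal, and Pythagoras gives $\tau(\bm{P}^{2})=\tau\big((r\bm{E}+(1-r)\bm{\mathds{1}})^{2}\big)+\alpha_{1}^{2}\,\tau(\bm{Q}^{2})$.

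It then remains to evaluate the two pieces. Taking expectations and using $m_{1}=1$, the first reduces to $r^{2}(m_{2}-1)+1$; the orthogonality lets me replace $\tau(\bm{Q}^{2})$ by $\tau(\bm{Q}\bm{E}^{2})$, which after substituting $m_{3}=m_{2}+r_{2}(m_{2}-1)$ becomes $m_{4}-m_{2}^{2}-r_{2}^{2}(m_{2}-1)$. Assembling these two contributions gives exactly the claimed $\mathcal{E}$, and plugging into \eqref{eq:holdout_error_general_quad_app} finishes the proof.

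The main obstacle is twofold. First, $\mathcal{E}$ and the coefficients still carry the third and fourth tracial moments $m_{3}=\mathbb{E}[\tau(\bm{E}^{3})]$ and $m_{4}=\mathbb{E}[\tau(\bm{E}^{4})]$ of the non-Gaussian multiplicative model, whose evaluation requires the orthogonal Weingarten matrices on $M_{6}$ and $M_{8}$; this is markedly heavier bookkeeping than the $M_{4}$ computation of the main text, which is why it is deferred to the Appendix, and the existence of these moments is precisely what forces the extra hypotheses $\mathbb{E}[\Vert\bm{\xi}\Vert^{6}]<\infty$ and $\mathbb{E}[\Vert\bm{\xi}\Vert^{8}]<\infty$. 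Second, the coefficients $\alpha_{1},\alpha_{2},\alpha_{3}$ are themselves random functions of $\bm{E}_{\textrm{in}}$, so passing from $\mathbb{E}[\tau(\bm{P}^{2})]$ to an expression in the deterministic $m_{j}$ relies on the self-averaging of the tracial moments in the high-dimensional limit, which permits treating the $\alpha_{i}$ as constants and factoring the expectations; controlling this concentration rigorously, rather than heuristically, is the delicate point of the argument.
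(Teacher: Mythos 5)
Your proof is correct, and since the paper states this corollary without an explicit proof (it only defers the computation of $\mathbb{E}[\tau(\bm{E}^{3})]$ and $\mathbb{E}[\tau(\bm{E}^{4})]$ to the Appendix), your derivation actually supplies more structure than the implied argument, which would be a direct expansion of $\mathbb{E}\left[\tau\left((\alpha_{1}\bm{E}^{2}+\alpha_{2}\bm{E}+\alpha_{3}\bm{\mathds{1}})^{2}\right)\right]$ into its nine cross terms followed by simplification with the same moment identities. Your Gram--Schmidt reorganization $\bm{P}=(r\bm{E}+(1-r)\bm{\mathds{1}})+\alpha_{1}\bm{Q}$, with $\bm{Q}$ the residual of $\bm{E}^{2}$ against $\mathrm{span}\{\bm{\mathds{1}},\bm{E}\}$ in the $\tau$-inner product, checks out exactly: $\alpha_{2}=r-\alpha_{1}r_{2}$ and $\alpha_{3}=(1-r)-\alpha_{1}(m_{2}-r_{2})$ follow from $\mathbb{E}[\tau(\bm{E}\bm{\Sigma})]-1=p$ and Lemma~\ref{lemma_sample_cov_square}, the identity $r_{2}=(m_{3}-m_{2})/(m_{2}-1)$ kills both $\mathbb{E}[\tau(\bm{Q})]$ and $\mathbb{E}[\tau(\bm{Q}\bm{E})]$, and Pythagoras then yields $\mathcal{E}=r^{2}(m_{2}-1)+1+\alpha_{1}^{2}\left[m_{4}-m_{2}^{2}-r_{2}^{2}(m_{2}-1)\right]$, which is the stated expression (with $\mathbb{E}[\tau(\bm{E}^{2})^{2}]$ and $\mathbb{E}[\tau(\bm{E}^{2})]^{2}$ identified by self-averaging in the high-dimension limit, as the paper implicitly does throughout). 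What your route buys is an interpretable additive decomposition --- linear-shrinkage error plus $\alpha_{1}^{2}$ times the variance of the orthogonal quadratic residual --- making transparent why the quadratic correction can only help. Two minor remarks: in the paper's formulation the coefficients $\alpha_{1},\alpha_{2},\alpha_{3}$ are already deterministic (they are defined through \emph{expected} tracial moments), so the concentration issue you flag as the delicate point is largely dissolved by the paper's conventions rather than requiring a separate rigorous argument; and the Appendix does not invert the $M_{6}$ and $M_{8}$ orthogonal Weingarten matrices as you suggest, but instead writes $\bm{\xi}=s\bm{u}$ and compares with Gaussian moments via Wick's theorem, an equivalent but computationally lighter device --- your identification of the moment hypotheses $\mathbb{E}[\Vert\bm{\xi}\Vert^{6}]<\infty$ and $\mathbb{E}[\Vert\bm{\xi}\Vert^{8}]<\infty$ as exactly what guarantees the existence of $\mathbb{E}[\Tr((\bm{\xi}\bm{\xi}^{T})^{3})]$ and $\mathbb{E}[\Tr((\bm{\xi}\bm{\xi}^{T})^{4})]$ is nonetheless on target.
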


\section{Applications to rotationally invariant noise distributions}
\label{Section 4}

In this section, we compute the expected holdout error for different rotationally invariant noise distributions. We begin by considering uniform and Gaussian noise which have $\gamma=1$. Then, we consider distributions that lead to different norms of noise, namely Gaussian, Student and Laplace to illustrate the influence of the factor $\gamma$.

Let us first mention that any rotationally invariant vector $\bm{\xi}$ can be written as the following product

\begin{equation}
    \bm{\xi}=s\bm{u},
\end{equation}
where $\bm{u}\in\mathbb{R}^{n}$ is a uniform vector on the sphere $\mathcal{S}^{n-1}$ and $s$ is a random variable independent from $\bm{u}$.
\begin{equation}
    \mathbb{E}\left[\Tr(\bm{\xi\xi}^{T})^{2} \right]=\mathbb{E}[ s^{4}] \mathbb{E}\left[\Tr(\bm{uu}^{T})^{2} \right],
\end{equation}

with $\bm{u}\in\mathbb{R}^{n}$ a uniform vector in the sphere $\mathcal{S}^{n-1}$ independent of $x$ a real random variable.

\begin{equation}
\label{eq:uniform_distrib_alpha}
    \mathbb{E}\left[\Tr((\bm{u}\bm{u}^{T})^{2} \right]=n(n-1)\mathbb{E}\left[u_{1}^{2}u_{2}^{2} \right] + n \mathbb{E}\left[u_{1}^{4}\right].
\end{equation}

We have
\begin{equation}
  \mathbb{E}\left[u_{1}^{2}u_{2}^{2} \right] = \frac{n^{2}}{n(n-1)}  
\end{equation}
and
\begin{equation}
  \mathbb{E}\left[u_{1}^{4} \right] = \frac{3}{n(n-1)}.  
\end{equation}

Then by reinjecting these moments into Eq. \eqref{eq:uniform_distrib_alpha}, we obtain the following.

\begin{equation}
\label{eq:unif_alpha_1}
    \mathbb{E}\left[\Tr((\bm{u}\bm{u}^{T})^{2} \right]=1+\frac{3}{n+1}.
\end{equation}

To apply our findings on the holdout error written in Eq.\ \eqref{eq:holdout_error_inv_wish}, we therefore need
\begin{equation}
\label{eq:condition_on_s}
    \mathbb{E}\left[s^{2} \right]=n.
\end{equation}

\subsection{Gaussian multivariate noise}

The multivariate Gaussian distribution is rotationally invariant, so one can consider the following noise vector: $\bm{\xi}\sim \mathcal{N}(0,\bm{\mathds{1}})$ with $\bm{\mathds{1}}$ being the identity matrix of size $n$. This case corresponds to $s$ following the square root of a $\chi^{2}$ distribution with $n$ degrees of freedom.

In this setting
\begin{align*}
    \mathbb{E}\left[\Tr((\bm{\xi}\bm{\xi}^{T})^{2})\right]&=\mathbb{E}\left[\sum \xi_{i}^{2}\xi_{j}^{2}\right] \\
    & = n(n-1) + 3n\\
    & = n(n+2).
\end{align*}

Therefore Eq.\ \eqref{eq:general_holdout_error_formula} becomes
    \begin{equation}
        \mathbb{E}\left[\Vert \bm{\Xi}^{H} - \bm{\Sigma} \Vert_{F}^{2} \right] = \left[ \frac{2}{t_{\textrm{out}}} - 1 \right] \mathbb{E}\left[\tau(\diag(\bm{V}_{\textrm{in}}^{T}\bm{\Sigma} \bm{V}_{\textrm{in}})^{2}))\right] +  \mathbb{E}\left[\tau(\bm{\Sigma}^{2})\right],
    \end{equation}

which corresponds to the result found in the Gaussian case directly using the Wick theorem; see \cite{lam2024holdout} for more details.

\subsection{Uniform multivariate noise}

Let us consider a constant $\bm{\xi}$ a uniform vector on the sphere of radius $\sqrt{n}$, then
\begin{equation}
     \lim\limits_{n \to \infty}
   \frac{\mathbb{E}\left[ \Tr((\bm{\xi}\bm{\xi}^{T})^{2})\right]}{n(n+2)}=1,
\end{equation}

and we obtain the same formulae as in the Gaussian multiplicative noise case.

\subsection{Gaussian, Student and Laplace noise norms}

Let us now consider several examples for the distribution of the random variable $s$ which represents the norm of the noise.

\subsubsection*{Gaussian norm}
Let us assume $s\sim \mathcal{N}(0,\sqrt{n)}$, then
\begin{equation}
     \lim\limits_{n \to \infty}
   \frac{\mathbb{E}\left[ \Tr((\bm{\xi}\bm{\xi}^{T})^{2})\right]}{n(n+2)}=3.
\end{equation}

\subsubsection*{Student scaling}
Let us assume that $s=\sqrt{\frac{\nu-2}{\nu}n}\delta$ where $\delta$ follows a Student-t distribution with $\nu >4$ the degree of freedom, then
\begin{equation}
    \lim\limits_{n \to \infty}
   \frac{\mathbb{E}\left[ \Tr((\bm{\xi}\bm{\xi}^{T})^{2})\right]}{n(n+2)}=\frac{3(\nu-2)}{\nu-4}.
\end{equation}

\subsubsection*{Laplace norm}

Let us assume that $s$ follows a centered Laplace distribution with a scale of $\sqrt{\frac{n}{2}}$, then

\begin{equation}
    \lim\limits_{n \to \infty}
   \frac{\mathbb{E}\left[ \Tr((\bm{\xi}\bm{\xi}^{T})^{2})\right]}{n(n+2)}=6,
\end{equation}
which also corresponds to a Student-t distribution with $\nu=6$, but in this case the 6-th and 8-th moments of the distribution would diverge and the quadratic shrinkage cannot be applied.

\begin{figure} % 'p' forces it to a separate page
    \centering
    \subfigure{
        \includegraphics[width=0.3\textwidth]{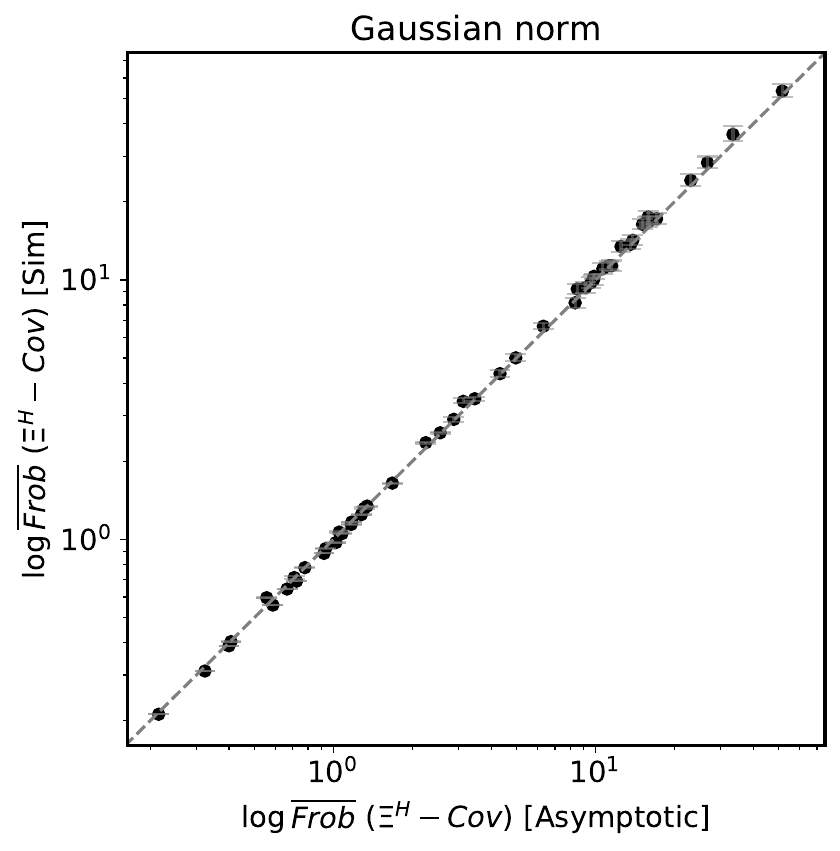}
    }
    \subfigure{
\includegraphics[width=0.3\textwidth]{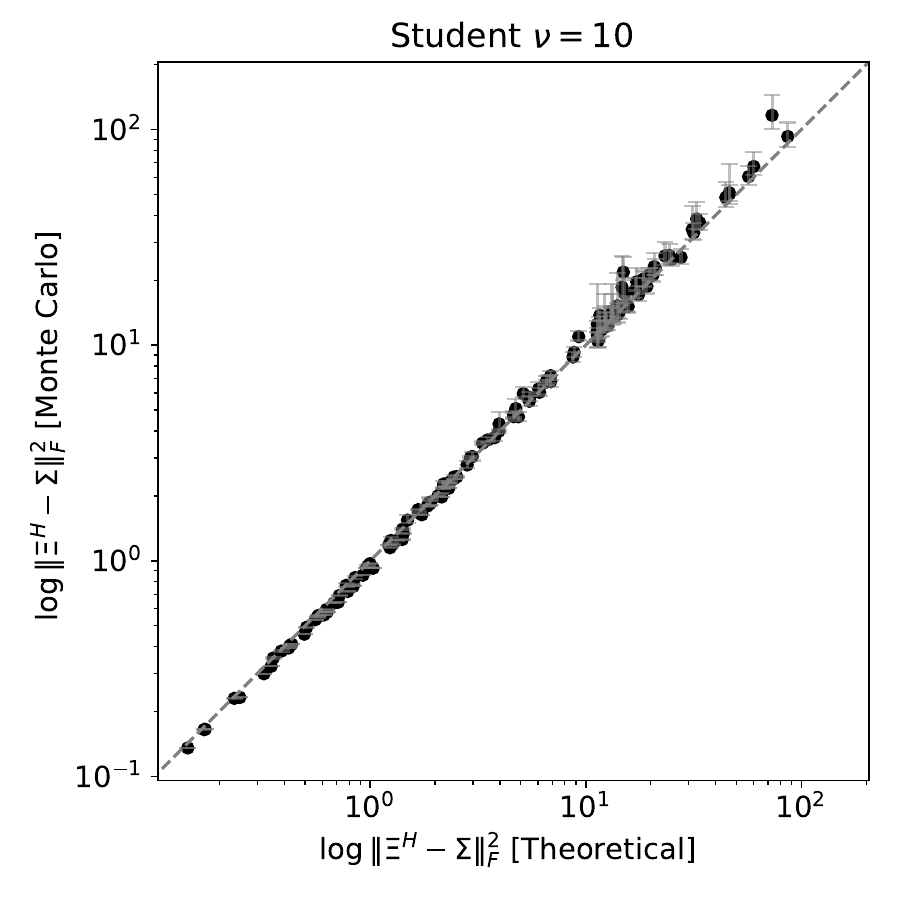}
    }
    
    \subfigure{
        \includegraphics[width=0.3\textwidth]{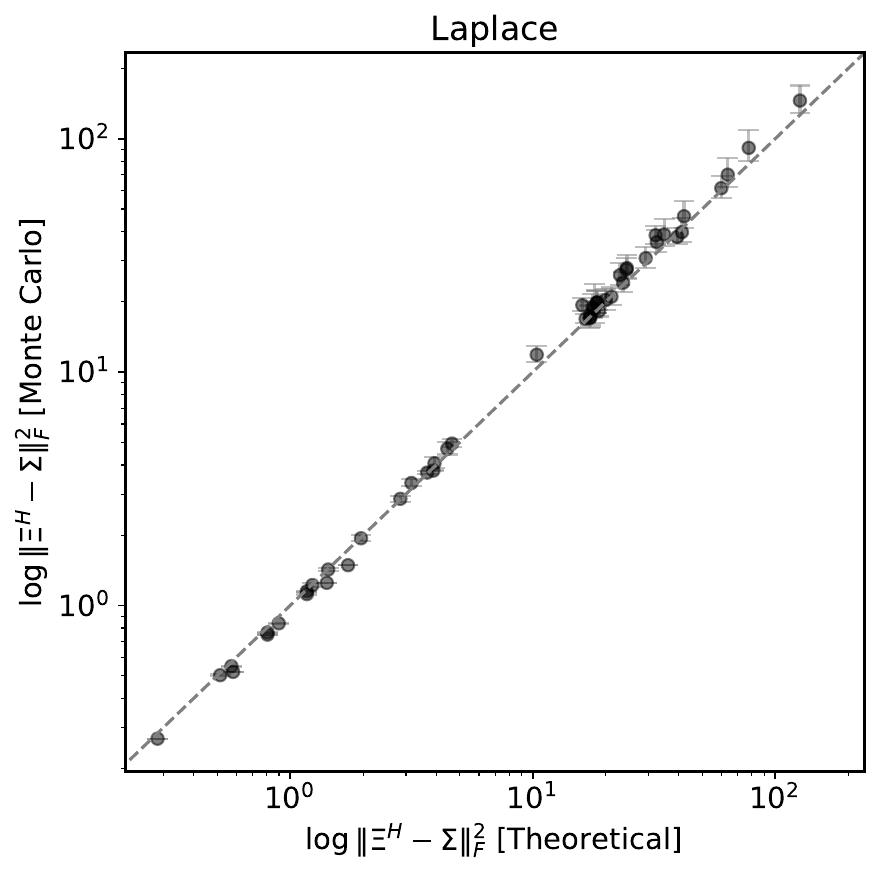}
    }
    \subfigure{
\includegraphics[width=0.3\textwidth]{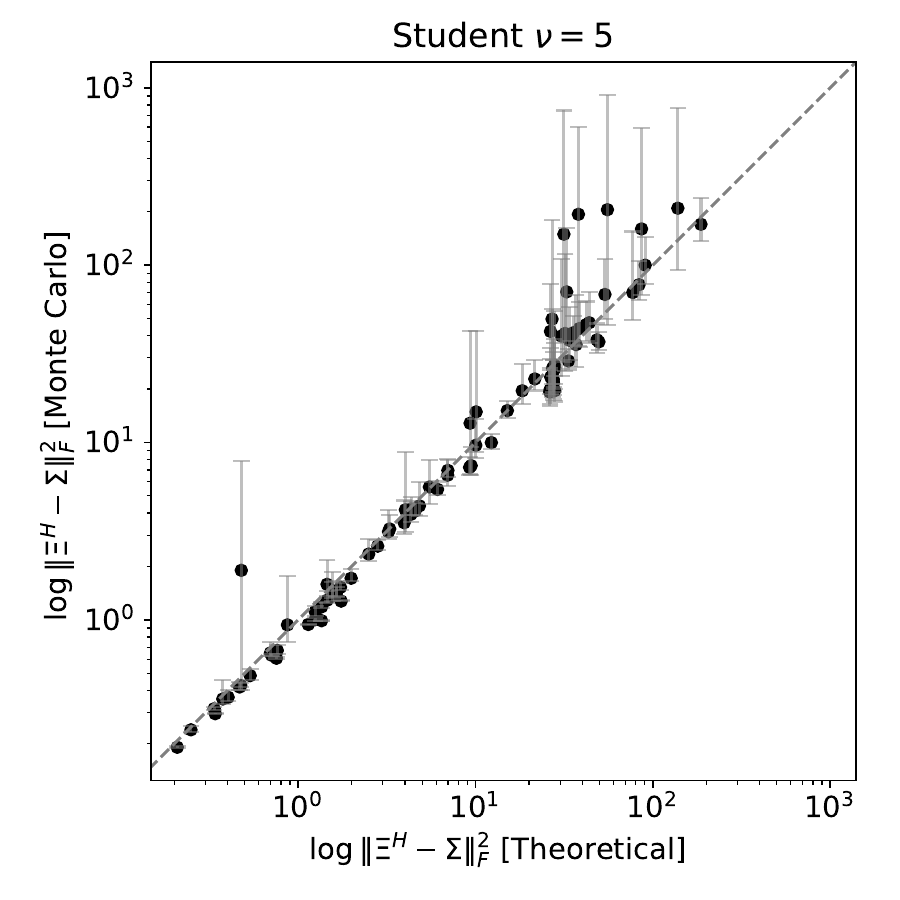}
    }
    \caption{The plots compare the the theoretical error of equation~\eqref{eq:holdout_error_inv_wish} with a Monte Carlo estimations for a random uniform selection of the parameters: $n \in [ 100,1000 ]$, $p\in [0.1,9 ]$, $q\in [0.1,0.9]$ and $k$ among the divisors of $t$. For stability, the Monte Carlo is averaged over $20000$ simulations for different distributions of the norm of the noise, namely Gaussian, Student and Laplace as defined in \ref{Section 4}. The confidence intervals displayed are obtained by a BCa bootstrap with a $95\%$ confidence level.}
    \label{fig:holdout_monte_carlo_vs_theoretical_gaussian_laplace_student_streched_exp}
\end{figure}

All distributions shown in Figure \ref{fig:holdout_monte_carlo_vs_theoretical_gaussian_laplace_student_streched_exp} have a $\gamma$ larger than $1$. In this case, the dispersion of the error around its expectation is very large, especially for higher values of $k$. To verify the pertinence of our approach numerically, we performed Monte Carlo simulations by choosing the values of $n$ uniformly in $\left[100,1000\right]$, $q\in\left[0.1,0.9\right]$ and $k$, the train-test ratio, among the divisors of $t$. For each set of parameters, the Monte Carlo holdout error is the average over $20000$ independent realizations. Then we added $95\%$ confidence intervals obtained by BCa bootstrap and compared the Monte Carlo holdout error to the theoretical holdout error of equation \eqref{eq:holdout_error_inv_wish}. 

We observe a small negative bias in the simulations which becomes more visible for higher values of $\gamma$. Indeed, the theoretical error we derived approximating the oracle eigenvalues by the linear shrinkage appears larger than the average realized holdout error. Furthermore, the higher the value of $\gamma$, the higher the dispersion of the error around its mean making the choice of the optimal $k$ more important.

\begin{figure} % 'p' forces it to a separate page
    \centering
    \subfigure{
        \includegraphics[width=0.3\textwidth]{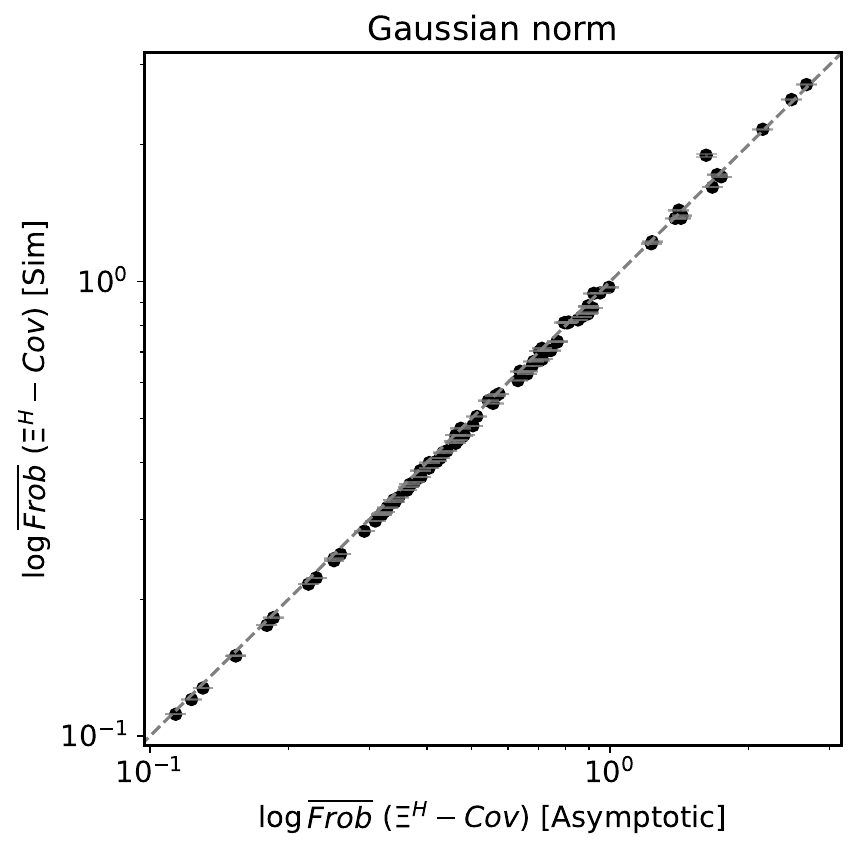}
    }
    \subfigure{
\includegraphics[width=0.3\textwidth]{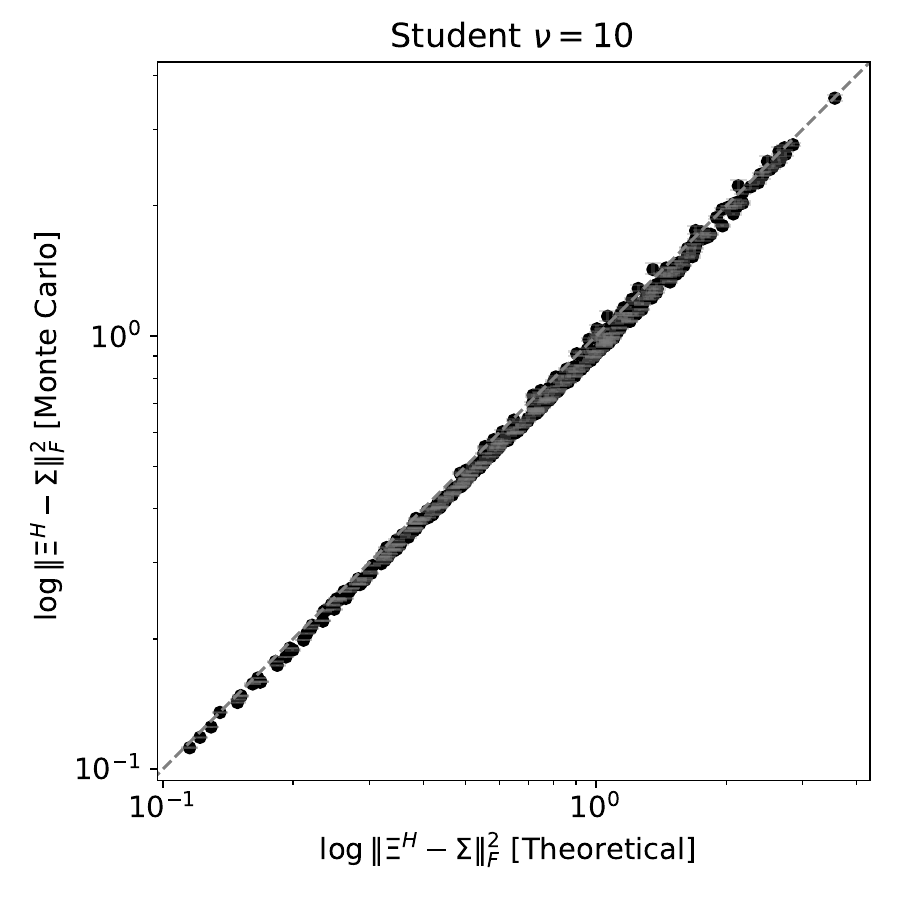}
    }

    \subfigure{
        \includegraphics[width=0.3\textwidth]{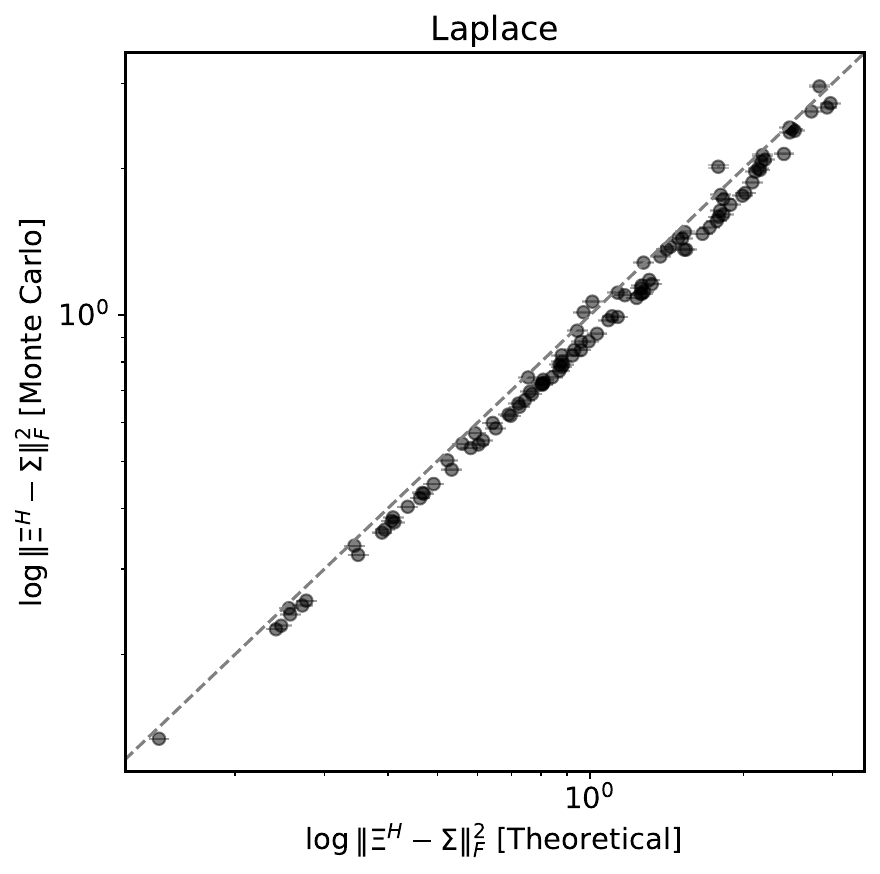}
    }
    \subfigure{
    \includegraphics[width=0.3\textwidth]{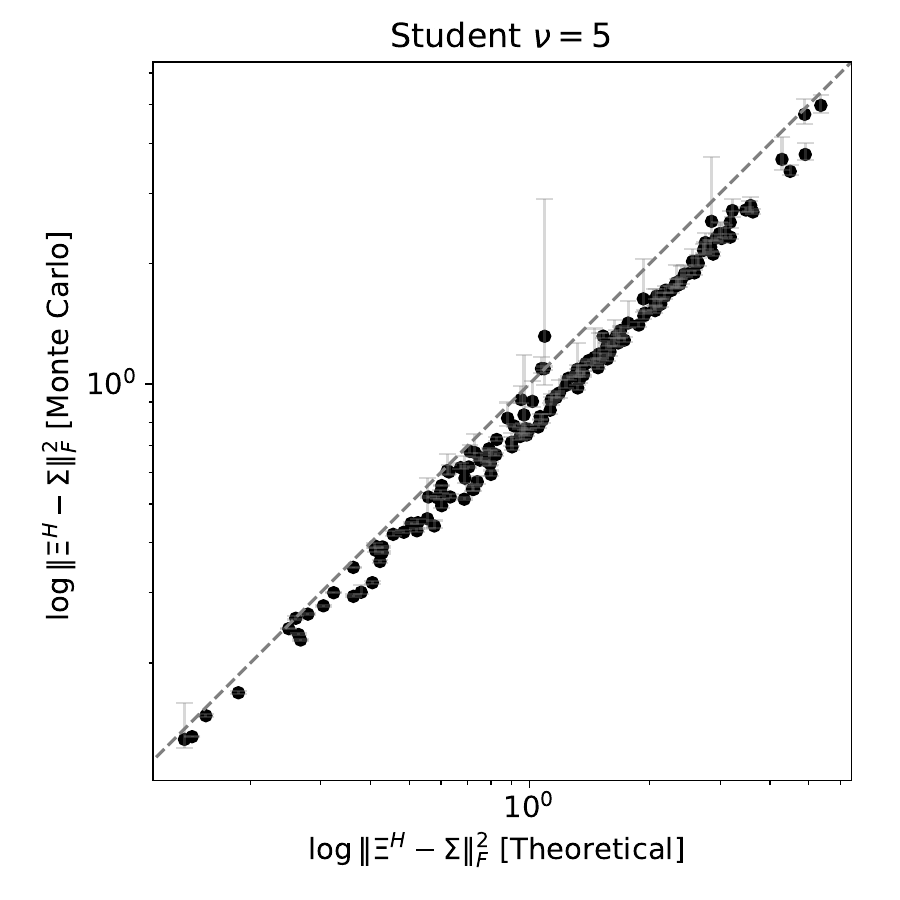}
    }
    
    \caption{The plots compare the theoretical error of equation~\eqref{eq:holdout_error_inv_wish} with a Monte Carlo estimates for a random uniform selection of the parameters: $n \in [ 100,1000 ]$, $p\in [0.1,9 ]$, $q\in [0.1,0.9]$ and $k$ chosen as the optimal value of Eq.\ \eqref{eq:kopt}. For stability, the Monte Carlo estimates is averaged over $20000$ simulations for different distributions of the norm of the noise, namely Gaussian, Student and Laplace as defined in \ref{Section 4}. The confidence intervals displayed are obtained by a BCa bootstrap with a $95\%$ confidence level.}
    \label{fig:holdout_monte_carlo_vs_theoretical_kopt}
\end{figure}

Figure \ref{fig:holdout_monte_carlo_vs_theoretical_kopt} compares the theoretical holdout error obtained with the linear approximation written in Eq.\ \eqref{eq:holdout_error_inv_wish} with a Monte Carlo holdout error for $k$ chosen as in Eq.\ \eqref{eq:kopt}, i.e. the optimal $k$ for the linear shrinkage approximation of the holdout error. We observe that the negative bias of estimation is higher for smaller values of $k$ as well as for higher gammas.

\begin{figure} % 'p' forces it to a separate page
    \centering
    \subfigure{
        \includegraphics[width=0.3\textwidth]{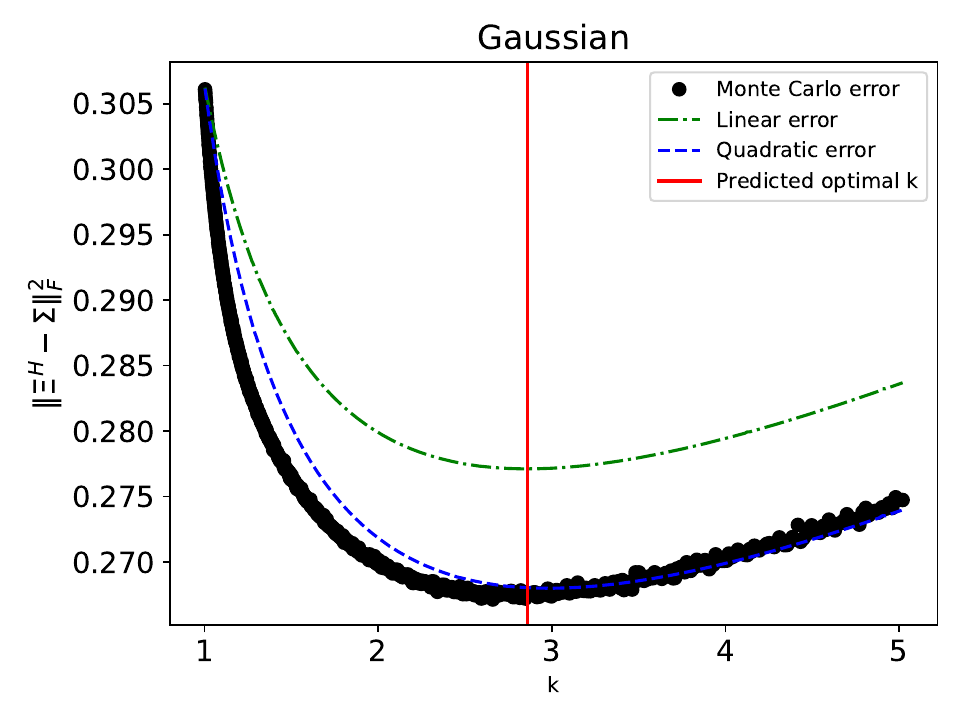}
    }
    \subfigure{
\includegraphics[width=0.3\textwidth]{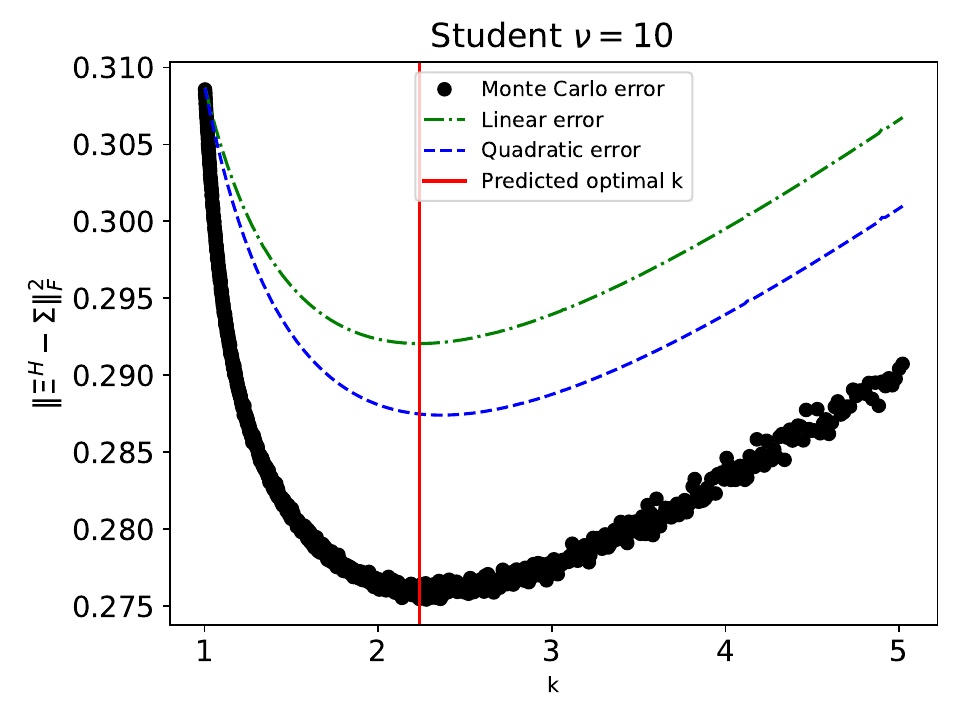}
    }
    
    \subfigure{
        \includegraphics[width=0.3\textwidth]{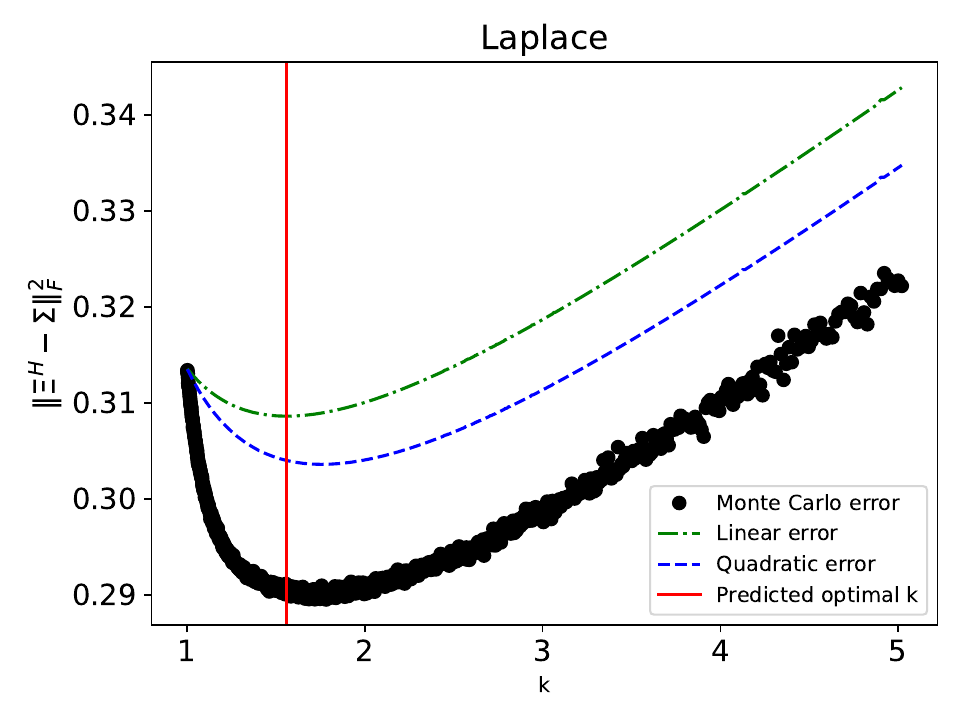}
    }
   
    \caption{The plots compare the holdout errors obtained by approximating the oracle eigenvalues with a linear and a quadratic polynomial of the sample eigenvalues to a Monte Carlo holdout error. The population covariance matrix is inverse Wishart with $n=500$ and $p=0.3$. The data are generated with a rotationally invariant noise where the norm follows a Gaussian, Student and Laplace distribution with $q=0.4$.}
    \label{fig:holdout_all_k_comparison_lin_quad_mc}
\end{figure}

In Figure \ref{fig:holdout_all_k_comparison_lin_quad_mc}, the plots compare the linear and quadratic shrinkage holdout errors to the realized holdout error. It seems that the quadratic shrinkage corrects a significant part of this negative bias, especially for values of $k$ around the realized optimal one and seems to be a particularly good approximation when the norm of the noise vector follows a gaussian distribution. Moreover, the optimal $k$ derived with the linear approximation written in Eq.\ \eqref{eq:kopt} seems to approximate well the realized optimal $k$, further strengthening the linear approximation. 

Let us also mention that the Student distribution with parameter $\nu=5$ could not be considered for performing the quadratic shrinkage as it does not have finite six and eight order moments. The lack of finite higher moments for certain distributions may lead one to consider other functions of the sample eigenvalues to approximate the oracle eigenvalues instead of using polynomial functions. One possibility would be to use rational functions.

\section{Conclusion}
In this work, we derive the expected Frobenius holdout error for rotationally invariant multiplicative noise models. Hereby extending to non-Gaussian distributions for the noise vector. Our approach involved using Weingarten calculus to link the moments of sample covariances to the moments of the population covariance matrix. This step enables us to write the expected Frobenius holdout error as a function of the oracle eigenvalues which can then be explicitly derived, in the high-dimension limit, using the Ledoit-Péché's formula when its assumptions are satisfied. 

For general population covariance matrices, approximations are necessary, for example, to estimate the limiting spectral distribution of the sample covariance, and such work is left for further research. When the population covariance matrix follows an inverse Wishart distribution, we approximate the oracle eigenvalues using a linear shrinkage considering that such an approximation becomes exact for Gaussian data and we obtain the expression of the optimal $k$, the ratio between the number of data observations and the size of the out-of-sample set, finding it to be proportional to the square root of the matrix dimension. Then, we approximate the oracle eigenvalues using a quadratic shrinkage for which numerical methods could be of use to approximate an optimal $k$. The quadratic approximation yields a better performance than the linear approximation, especially for values of $k$, the optimal train-test split ratio, around the optimal realized one.

Finally, we were able to test different noise distributions for the norm of the noise vector, namely Gaussian, Student, and Laplace for which $\gamma$ is larger than $1$ to verify the validity of our approach using Monte Carlo simulations. We observe that that our approximation yields a slightly larger theoretical error than the simulated one, and that the quadratic shrinkage corrects a significant part of this negative bias of estimation, especially around the realized optimal values of $k$. Moreover, we observe that the expected holdout error formula becomes sharper around the optimal $k$ as $\gamma$, the ratio between the fourth moment of the Euclidean norm of the noise vector and the square of the matrix dimension, increases, thereby making the choice of the optimal $k$ more important when running the holdout method in the large but finite dimension setting.

\subsection*{Acknowledgments}
We thank Christian Bongiorno and Damien Challet for the particularly useful discussions and advice that helped improve the article. 

This work was performed using HPC resources from the ``Mésocentre'' computing center of CentraleSup\'elec and \'Ecole Normale Sup\'erieure Paris-Saclay supported by CNRS and R\'egion \^Ile-de-France (\url{http://mesocentre.centralesupelec.fr/}). 

B. C. is supported by JSPS Grant-in-Aid Scientific Research (B) no. 21H00987, and Challenging Research (Exploratory) no.  23K17299.
This research was initiated during the second visit of L. L. to Kyoto University. Both authors acknowledge the hospitality of Kyoto University on that occasion.

\bibliographystyle{unsrt}  
\bibliography{references}  
\newpage
\section{Appendix}
\label{section:appendix}

\subsection{Computation of $\mathbb{E}\left[\tau(\bm{E}^{3}) \right]$ and $\mathbb{E}\left[\tau(\bm{E}^{4}) \right]$}

List of moments for $\bm{\Sigma}$ an inverse Wishart matrix of parameters $n$ and $p$ and its sample covariance matrix $\bm{E}$ derived from a rotationally invariant multiplicative noise
\begin{itemize}
    \item $\mathbb{E}\left[\tau(\bm{E})\right]=\mathbb{E}\left[\tau(\bm{\Sigma})\right]=1$
    \item $\mathbb{E}\left[\tau(\bm{\Sigma}^{2})\right]=\mathbb{E}\left[\tau(\bm{E\Sigma})\right]=1+p$
    \item $\mathbb{E}\left[\tau(\bm{E}^{2})\right]=1+p+q\gamma$ with $\gamma=\frac{\mathbb{E}\left[\Tr((\bm{\xi\xi}^{T})^{2}) \right]}{n(n+2)}$.
    \item 
    
    $\mathbb{E}\left[\tau(\bm{E}^{2}\bm{\Sigma}) \right]= \mathbb{E}\left[\tau(\bm{\Sigma}^{3})\right] + q\gamma\mathbb{E}\left[\tau(\bm{\Sigma}^{2}) \right]$
\end{itemize}

\begin{lemma}
\label{lemma_sample_cov_moments}
Let $\bm{\xi}\in\mathbb{R}^{n}$ be a rotationally invariant noise vector such that
\begin{equation}
    \mathbb{E}\left[\bm{\xi\xi}^{T} \right]=\bm{\mathds{1}},
\end{equation}
and let $\bm{E}$ the sample covariance of $\bm{\Sigma}$ generated over $t$ obervations of the noise $\bm{\xi}$. Then in the high dimension limit
\begin{equation}
        \mathbb{E}\left[\tau(\bm{E}^{3}) \right]= \mathbb{E}\left[\tau(\bm{\Sigma}^{3}) \right] +q\frac{\mathbb{E}\left[ \Tr((\bm{\xi\xi}^{T})^{2})\right]}{n(n+2)}\mathbb{E}\left[3\tau(\bm{\Sigma}^{2})\tau(\bm{\Sigma}) \right] + q^{2}\frac{\mathbb{E}\left[\Tr ((\bm{\xi}\bm{\xi}^{T})^{3}) \right]}{n(n+2)(n+4)}\mathbb{E}\left[\tau(\bm{\Sigma)^{3}} \right]
    \end{equation}
    and
\begin{align}
    \mathbb{E}\left[\tau(\bm{E}^{4}) \right]&=\mathbb{E}\left[\tau(\bm{\Sigma}^{4}) \right] + q\frac{\mathbb{E}\left[ \Tr((\bm{\xi}\bm{\xi}^{T})^{2})\right]}{n(n+2)}\mathbb{E}\left[4\tau(\bm{\Sigma}^{3})\tau(\bm{\Sigma}) + 2\tau(\bm{\Sigma})^{2} \right] \notag
\\
    &+ q^{2}\frac{\mathbb{E}\left[\Tr((\bm{\xi\xi}^{T})^{3}) \right]}{n(n+2)(n+4)}\mathbb{E}\left[4\tau(\bm{\Sigma}^{2})\tau(\bm{\Sigma}^{2}) \right] + q^{3}\frac{\mathbb{E}\left[\Tr((\bm{\xi\xi}^{T})^{4})\right]}{n(n+2)(n+4)(n+6)}\mathbb{E}\left[ (\tau(\bm{\Sigma}))^{4}\right].
    \end{align}    

Let us assume that $\bm{\Sigma}$ is an inverse Wishart of parameters $n$ and $p$ such that $p$ is negligible in front of $n$, then in the limit of high dimension the third and fourth order moments of the sample covariance matrix equal\begin{equation}
        \mathbb{E}\left[\tau(\bm{E}^{3}) \right]= 1+3p+2p^{2}+q\frac{\mathbb{E}\left[ \Tr((\bm{\xi\xi}^{T})^{2})\right]}{n(n+2)}(3(1+p)) + q^{2}\frac{\mathbb{E}\left[\Tr ((\bm{\xi}\bm{\xi}^{T})^{3}) \right]}{n(n+2)(n+4)}
    \end{equation}
    and

    \begin{align}
    \mathbb{E}\left[\tau(\bm{E}^{4}) \right]&=1+6p+11p^{2}+5p^{3} + 2q(3+6p+4p^{2})\frac{\mathbb{E}\left[ \Tr((\bm{\xi}\bm{\xi}^{T})^{2})\right]}{n(n+2)}\notag
 \\
    &+ 4q^{2}(1+p)^{2}\frac{\mathbb{E}\left[\Tr((\bm{\xi\xi}^{T})^{3}) \right]}{n(n+2)(n+4)} + q^{3}\frac{\mathbb{E}\left[\Tr((\bm{\xi\xi}^{T})^{4})\right]}{n(n+2)(n+4)(n+6)}.
    \end{align}    
\end{lemma}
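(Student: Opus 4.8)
The plan is to reduce both identities to the computation of the single tracial moment $\mathbb{E}[\Tr((\bm{\Sigma}\bm{X}\bm{X}^{T})^{m})]$ for $m=3,4$. Writing $\bm{E}=\frac{1}{t}\tilde{\bm{\Sigma}}\bm{X}\bm{X}^{T}\tilde{\bm{\Sigma}}$ and using $\tilde{\bm{\Sigma}}^{2}=\bm{\Sigma}$ together with the cyclicity of the trace gives $\Tr(\bm{E}^{m})=t^{-m}\Tr((\bm{\Sigma}\bm{X}\bm{X}^{T})^{m})$, exactly as in the derivation of Lemma~\ref{lemma_sample_cov_square}. The central tool is a single-column moment formula: for a rotationally invariant $\bm{\xi}$ the matrix $\bm{\xi}\bm{\xi}^{T}$ is rank one, so for every pair partition $\tau\in M_{2g}$ one has $\Tr_{\tau}'(\bm{\xi}\bm{\xi}^{T})=\Vert\bm{\xi}\Vert^{2g}$. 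Hence in Eq.~\eqref{eq:orthogonal_weingarten_formula} the factor $\mathbb{E}[\Tr_{\tau}'(\bm{\xi}\bm{\xi}^{T})]$ is the same for all $\tau$ and factors out, leaving the row-sum $\sum_{\tau}Wg^{O}(\sigma^{-1}\tau;n)$. This row-sum equals $1/(n(n+2)\cdots(n+2g-2))$, the normalisation of the uniform measure on $\mathcal{S}^{n-1}$, consistent with the identity $\alpha+2\beta=1/(n(n+2))$ read off the $M_{4}$ matrix of Eq.~\eqref{eq:weingarten_matrix}. I would therefore first record, for each $g$, the formula $\mathbb{E}[\xi_{i_{1}}\cdots\xi_{i_{2g}}]=\gamma_{g}\sum_{\sigma\in M_{2g}}\delta_{\sigma}'(i)$ with $\gamma_{g}=\mathbb{E}[\Tr((\bm{\xi}\bm{\xi}^{T})^{g})]/(n(n+2)\cdots(n+2g-2))$.

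Next I would expand $\Tr((\bm{\Sigma}\bm{X}\bm{X}^{T})^{m})$ in coordinates. Each of the $m$ factors $\bm{X}\bm{X}^{T}$ carries its own sum over a time index, so the expectation over the noise splits according to the coincidence pattern of these $m$ time indices, i.e. according to a set partition of $\{1,\dots,m\}$ into blocks. Because distinct columns of $\bm{X}$ are i.i.d., the expectation factorises into a product of single-column moments, one per block; a block of size $g$ contributes a $2g$-th moment, which I replace by the formula above. Contracting the resulting Kronecker deltas against the interleaving $\bm{\Sigma}$ factors turns each configuration into a product of traces $\Tr(\bm{\Sigma}^{a_{1}})\cdots\Tr(\bm{\Sigma}^{a_{r}})$. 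The bookkeeping is then a matter of weights: a partition with $b$ blocks comes with an index count of order $t^{b}$ against the prefactor $t^{-m}$, and each closed loop of free spatial indices produces a factor $n$, so that after dividing by $n$ and sending $n,t\to\infty$ with $q=n/t$ fixed one obtains a finite contribution at order $q^{\,m-b}$ carrying the factor $\prod_{\text{blocks}}\gamma_{g}$.

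For the inverse-Wishart specialisation I would then substitute the limiting moments $\mathbb{E}[\tau(\bm{\Sigma})]=1$, $\mathbb{E}[\tau(\bm{\Sigma}^{2})]=1+p$, $\mathbb{E}[\tau(\bm{\Sigma}^{3})]=1+3p+2p^{2}$ and $\mathbb{E}[\tau(\bm{\Sigma}^{4})]=1+6p+11p^{2}+5p^{3}$ (the standard white inverse-Wishart moments in the regime $p\ll n$), and use the self-averaging of the spectral distribution of $\bm{\Sigma}$ to factor $\mathbb{E}[\prod_{j}\tau(\bm{\Sigma}^{a_{j}})]\to\prod_{j}\mathbb{E}[\tau(\bm{\Sigma}^{a_{j}})]$ in the high-dimensional limit, which collapses the general formulae to the claimed polynomials in $p$ and $q$.

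The main obstacle is the power counting for $m=4$, where several set partitions contribute at the \emph{same} order in $q$ and one must retain exactly the surviving ones. Concretely, at order $q^{2}$ the single block of size three (carrying $\gamma_{3}$) is accompanied by the partitions into two blocks of size two (carrying $\gamma_{2}^{2}$); within each size-two block only the planar pairing, which separates into the maximal number of $\bm{\Sigma}$-traces, reaches the leading power of $n$, while the crossing pairing is suppressed by $1/n$. Getting the combinatorial multiplicities right is where an error is most likely to creep in, so I would pin them down by specialising to white noise ($\gamma_{g}=1$) with $\bm{\Sigma}=\mathds{1}$, where the two moments must reduce to the Marchenko–Pastur values $\mathbb{E}[\tau(\bm{E}^{3})]=1+3q+q^{2}$ and $\mathbb{E}[\tau(\bm{E}^{4})]=1+6q+6q^{2}+q^{3}$; this consistency check fixes every coefficient and, in particular, forces the full set of planar partitions to be accounted for at order $q^{2}$.
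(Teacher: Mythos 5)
Your plan coincides with the paper's own proof in every structural respect: expand $\Tr(\bm{E}^{m})$ over the coincidence patterns of the $m$ time indices, factorise over blocks using the i.i.d.\ columns, reduce each single-column moment to $\mathbb{E}\left[\xi_{i_{1}}\cdots\xi_{i_{2g}}\right]=\gamma_{g}\sum_{\sigma\in M_{2g}}\delta_{\sigma}'(i)$, contract the deltas against the $\bm{\Sigma}$-chain keeping the leading power of $n$ per pattern, and finally insert the inverse-Wishart moments $1$, $1+p$, $1+3p+2p^{2}$, $1+6p+11p^{2}+5p^{3}$ (which the paper obtains from the R-transform and the moment--cumulant recursion). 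The only genuinely different ingredient is your derivation of the moment formula: you exploit that $\bm{\xi}\bm{\xi}^{T}$ has rank one, so that all $\Tr_{\tau}'$ coincide with $\Vert\bm{\xi}\Vert^{2g}$ and only the Weingarten row sum $1/(n(n+2)\cdots(n+2g-2))$ survives, whereas the paper writes $\bm{\xi}=s\bm{u}$ and rescales Gaussian Wick moments by $\mathbb{E}[s^{2k}]/\mathbb{E}[s_{G}^{2k}]$. The two devices are identical in content (indeed $\mathbb{E}[s_{G}^{2g}]=n(n+2)\cdots(n+2g-2)$ is exactly your row-sum normalisation), but yours stays inside the Weingarten formalism of Section 2 and exhibits $\alpha+2\beta=1/(n(n+2))$ as a special case of a general identity, which is tidy.

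The place where you and the paper part ways is precisely the order-$q^{2}$ bookkeeping for $m=4$ that you flag as the main obstacle --- and your version is the correct one. The paper's expansion \eqref{eq:4th_moment_before_wick} enumerates only $12$ of the $15$ coincidence patterns: the three patterns of type $\{2,2\}$ (two blocks of two equal time indices) are absent. The two cyclically adjacent ones each contribute, at leading order, $\mathbb{E}\left[(\bm{\xi}^{T}\bm{\Sigma}\bm{\xi})(\bm{\xi}'^{T}\bm{\Sigma}\bm{\xi}')(\bm{\xi}^{T}\bm{\Sigma}\bm{\xi}')^{2}\right]=\gamma_{2}^{2}\left[(\Tr\bm{\Sigma})^{2}\Tr(\bm{\Sigma}^{2})+\mathcal{O}(n^{2})\right]$, i.e.\ $q^{2}\gamma_{2}^{2}\,\tau(\bm{\Sigma}^{2})\tau(\bm{\Sigma})^{2}$ apiece, while the crossing one is suppressed by $1/n$, exactly as you assert; hence the fourth moment acquires an extra term $2q^{2}\gamma_{2}^{2}\,\mathbb{E}\left[\tau(\bm{\Sigma}^{2})\tau(\bm{\Sigma})^{2}\right]$ missing from \eqref{eq:fourth_order_E} and from the lemma. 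Your Marchenko--Pastur check detects exactly this: at $\bm{\Sigma}=\bm{\mathds{1}}$, $\gamma_{g}=1$ the lemma's formula returns $1+6q+4q^{2}+q^{3}$ instead of $1+6q+6q^{2}+q^{3}$, so your plan, carried out faithfully, proves a corrected statement rather than the lemma verbatim --- the gap is in the paper, not in your proposal. (The statement also carries transcription slips relative to its own proof: $2\tau(\bm{\Sigma})^{2}$ should read $2\tau(\bm{\Sigma}^{2})^{2}$ in the $q$-term, cf.\ \eqref{eq:fourth_order_E}, and $4\tau(\bm{\Sigma}^{2})\tau(\bm{\Sigma}^{2})$ versus $4\tau(\bm{\Sigma}^{2})\tau(\bm{\Sigma})^{2}$ in the $q^{2}$-term, so the inverse-Wishart coefficients $2(3+6p+4p^{2})$ and $4(1+p)^{2}$ inherit these inconsistencies.) One caution on your write-up: a single white-noise check cannot by itself ``fix every coefficient,'' since it cannot distinguish, e.g., $\tau(\bm{\Sigma}^{2})\tau(\bm{\Sigma})^{2}$ from $\tau(\bm{\Sigma}^{2})^{2}$ at $\bm{\Sigma}=\bm{\mathds{1}}$; the direct pattern count ($4$ adjacent-pair versus $2$ opposite-pair patterns at order $q$, and $4$ triple-coincidence versus $2$ planar $\{2,2\}$ patterns at order $q^{2}$) still has to be carried out, as you in fact begin to do.
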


\begin{proof}

Let us first explain the main argument of the proof. Since $\bm{\xi}\in\mathbb{R}^{n}$ is rotationally invariant vector, it can be decomposed as 
\begin{equation}
    \bm{\xi}=s\bm{u},
\end{equation}
where $s$ is a random variable independent from $\bm{u}$ a uniform random vector on the sphere $\mathcal{S}^{n-1}$. Let us now consider $i=(i_{1},...,i_{2k})$ a sequence of integers between $1$ and $n$, then using the independence between $s$ and $\bm{u}$ we obtain

\begin{equation}
\label{eq:prod_components_ri}
\mathbb{E}\left[\xi_{i_{1}}\xi_{i_{2}}...\xi_{2k} \right]=\mathbb{E}\left[s^{2k} \right]\mathbb{E}\left[u_{i_{1}} u_{i_{2}}...u_{i_{2k}}\right].
\end{equation}

Then the difficult question is the computation of $\mathbb{E}\left[u_{i_{1}} u_{i_{2}}...u_{i_{2k}}\right]$. To compute this quantity, let us consider $\bm{y}\in\mathbb{R}^{n}$ a centered Gaussian random vector with an identity covariance matrix. $\bm{y}$ is rotationally invariant and can therefore be written $\bm{y}=s_{G}\bm{v}$ with $s_{G}$ a random variable independent from $\bm{v}$ a uniform random vector on the sphere $\mathcal{S}^{n-1}$.

Therefore Eq.\ \eqref{eq:prod_components_ri} can be rewritten as
\begin{equation}
\label{eq:argument_gaussian}
\mathbb{E}\left[\xi_{i_{1}}\xi_{i_{2}}...\xi_{i_{2k}} \right]=\frac{\mathbb{E}\left[s^{2k} \right]}{\mathbb{E}\left[s_{G}^{2k}\right]}\mathbb{E}\left[y_{i_{1}} y_{i_{2}}...y_{i_{2k}}\right] 
\end{equation}
where the term $\mathbb{E}\left[ y_{i_{1}} y_{i_{2}}...y_{i_{2k}}\right]$ can be computed using Wick theorem \cite{wick1950evaluation}.

Therefore, to compute the thirs and fourth order moments of $\bm{E}$, i.e. $\mathbb{E}\left[\tau(\bm{\Sigma}^{3})\right]$ and $\mathbb{E}\left[\tau(\bm{\Sigma}^{4})\right]$, the first step is to write them as a function of moments of elements $\bm{\xi}$ using the i.i.d. assumption on the columns of the noise matrix $\bm{X}$.

\begin{align}
    \mathbb{E}\left[\tau(\bm{E}^{3})\right]&= \frac{1}{nt^{3}}\sum \mathbb{E}\left[ \Tilde{\Sigma}_{ii_{1}}X_{i_{1}t_{1}}X_{i_{2}t_{1}}\Tilde{\bm{\Sigma}}_{i_{2}j}\Tilde{\bm{\Sigma}}_{ji_{3}}X_{i_{3}t_{2}}X_{i_{4}t_{2}}\Tilde{\bm{\Sigma}}_{i_{4}k} \Tilde{\bm{\Sigma}}_{ki_{5}}X_{i_{5}t_{3}}X_{i_{6}t_{3}}\Tilde{\bm{\Sigma}}_{i_{6}i} \right]\notag \\
    &=\frac{t(t-1)(t-2)}{nt^{3}}\mathbb{E}(\tau(\bm{\Sigma}^{3}))\notag \\
    &+\frac{t(t-1)}{nt^{3}}\sum\mathbb{E}\left[\Sigma_{ij} \Tilde{\bm{\Sigma}}_{ji_{3}}X_{i_{3}1}X_{i_{4}1}\Tilde{\bm{\Sigma}}_{i_{4}k} \Tilde{\bm{\Sigma}}_{ki_{5}}X_{i_{5}1}X_{i_{6}}\Tilde{\bm{\Sigma}}_{i_{6}1} \right]\notag \\
    &+\frac{t(t-1)}{nt^{3}}\mathbb{E}\left[ \Tilde{\Sigma}_{ii_{1}}X_{i_{1}t_{1}}X_{i_{2}t_{1}}\Tilde{\bm{\Sigma}}_{i_{2}j}\Sigma_{jk} \Tilde{\bm{\Sigma}}_{ki_{5}}X_{i_{5}t_{3}}X_{i_{6}t_{3}}\Tilde{\bm{\Sigma}}_{i_{6}i} \right]\notag \\
     &+\frac{t(t-1)}{nt^{3}}\mathbb{E}\left[ \Tilde{\Sigma}_{ii_{1}}X_{i_{1}t_{1}}X_{i_{2}t_{1}}\Tilde{\bm{\Sigma}}_{i_{2}j}\Tilde{\bm{\Sigma}}_{ji_{3}}X_{i_{3}t_{2}}X_{i_{4}t_{2}}\Tilde{\bm{\Sigma}}_{i_{4}k} \Sigma_{ki} \right]\notag \\
    &+\frac{1}{nt^{2}}\sum \mathbb{E}\left[ \Tilde{\bm{\Sigma}}_{ii_{1}}X_{i_{1}1}X_{i_{2}1}\Tilde{\bm{\Sigma}}_{i_{2}j}\Tilde{\bm{\Sigma}}_{ji_{3}}X_{i_{3}1}X_{i_{4}1}\Tilde{\bm{\Sigma}}_{i_{4}k} \Tilde{\bm{\Sigma}}_{ki_{5}}X_{i_{5}1}X_{i_{6}}\Tilde{\bm{\Sigma}}_{i_{6}i} \right]
\end{align}

By writing the previous equation as a function of $\bm{\xi}$ we obtain

\begin{align}
\label{eq:3rd_moment_before_wick}
\mathbb{E}\left[\tau(\bm{E}^{3}) \right]&=\frac{t(t-1)(t-2)}{t^{3}}\mathbb{E}\left[\tau(\bm{\Sigma}^{3}) \right] \notag \\
&+\frac{t(t-1)}{t^{3}}\sum_{i,j,k,i_{3},i_{4},i_{5},i_{6}=1}^{n}\mathbb{E}\left[\Sigma_{ij} \Tilde{\bm{\Sigma}}_{ji_{3}}\Tilde{\bm{\Sigma}}_{i_{4}k} \Tilde{\bm{\Sigma}}_{ki_{5}}\Tilde{\bm{\Sigma}}_{i_{6}1}\right]\mathbb{E}\left[\xi_{i_{3}} \xi_{i_{4}}\xi_{i_{5}}\xi_{i_{6}}\right]\notag\\
&+\frac{t(t-1)}{nt^{3}}\sum_{i,j,k,i_{1},i_{2},i_{5},i_{6}=1}^{n}\mathbb{E}\left[\Tilde{\Sigma}_{ii_{1}} \Tilde{\bm{\Sigma}}_{i_{2}j}\Sigma_{jk} \Tilde{\bm{\Sigma}}_{ki_{5}}\Tilde{\bm{\Sigma}}_{i_{6}i} \right]\mathbb{E}\left[ \xi_{i_{1}}\xi_{i_{2}}\xi_{i_{5}}\xi_{i_{6}}\right]\notag\\ 
&+\frac{t(t-1)}{nt^{3}}\sum_{i,j,k,i_{1},i_{2},i_{3},i_{4}=1}^{n} \mathbb{E}\left[ \Tilde{\Sigma}_{ii_{1}}\Tilde{\bm{\Sigma}}_{i_{2}j}\Tilde{\bm{\Sigma}}_{ji_{3}}\Tilde{\bm{\Sigma}}_{i_{4}k} \Sigma_{ki} \right] \mathbb{E}\left[ \xi_{i_{1}}\xi_{i_{2}}\xi_{i_{3}}\xi_{i_{4}}\right]\notag \\
&=\frac{1}{nt^{2}}\sum_{i,j,k}\mathbb{E}\left[\Tilde{\Sigma}_{ii_{1}}\Tilde{\Sigma}_{i_{2}j}\Tilde{\Sigma}_{ji_{3}}\Tilde{\Sigma}_{i_{4}k}\Tilde{\Sigma}_{ki_{5}}\Tilde{\Sigma}_{i_{6}i} \right]\mathbb{E}\left[ \xi_{i_{1}}\xi_{i_{2}}\xi_{i_{3}}\xi_{i_{4}}\xi_{i_{5}}\xi_{i_{6}}\right]
\end{align}

Then we link Eq.\ \eqref{eq:3rd_moment_before_wick} with Gaussian moments using the argument written in Eq.\ \eqref{eq:argument_gaussian} and after applying Wick theorem, we obtain 

\begin{align}
\label{eq:3rd_moment_before_hd}
\mathbb{E}\left[\tau(\bm{E}^{3}) \right]&=\frac{t(t-1)(t-2)}{t^{3}}\mathbb{E}\left[\tau(\bm{\Sigma}^{3}) \right]\notag
\\ 
&+\frac{3t(t-1)}{t^{3}}\frac{\mathbb{E}\left[\Tr((\bm{\xi}\bm{\xi}^{T})^{2} \right]}{n(n+2)} \mathbb{E}\left[\Tr(\bm{\Sigma}^{3})+ \Tr(\bm{\Sigma}^{3}) + \Tr(\bm{\Sigma}^{2})\Tr(\bm{\Sigma})\right]\notag\\
&=\frac{1}{nt^{2}}\frac{\mathbb{E}\left[\Tr((\bm{\xi}\bm{\xi}^{T})^{3} \right]}{n(n+2)(n+4)}\mathbb{E}\left[ 8\Tr(\bm{\Sigma}^{3})+6\Tr(\bm{\Sigma}^{2})\Tr(\bm{\Sigma}) + \Tr(\bm{\Sigma})^{3}\right], 
\end{align}
and finally by taking the high dimensional limit in Eq.\ \eqref{eq:3rd_moment_before_hd}, i.e. $n,t\to\infty$ with $\frac{n}{t}=q>0$, we finally obtain

\begin{equation}
\label{eq:3rd_moment_after_hd}    \mathbb{E}\left[\tau(\bm{E}^{3})\right]=\mathbb{E}\left[\tau(\bm{\Sigma}^{3})\right]+3q\frac{\mathbb{E}\left[\Tr((\bm{\xi}\bm{\xi}^{T})^{3})\right]}{n(n+2)}\mathbb{E}\left[ \Tr(\bm{\Sigma}^{2})\Tr(\bm{\Sigma})\right] + q^{2}\frac{\mathbb{E}\left[\Tr(\bm{\xi}\bm{\xi}^{T})^{3} \right]}{n(n+2)(n+4)}\mathbb{E}\left[\Tr(\bm{\Sigma}^{3})\right]
\end{equation}

Now let us focus on the fourth order moment
\begin{equation}
\label{eq:third_order_moment_E}
    \mathbb{E}\left[\tau(\bm{E}^{4}) \right]=\frac{1}{nt^{4}}\sum_{\substack{i,j,k,l, \\ i_{1},\dots,i_{8}=1}}^{n} \sum_{t_{1},...,t_{4}=1}^{t}\mathbb{E}\left[\Tilde{\Sigma}_{ii_{1}}X_{i_{1}t_{1}}X_{i_{2}t_{1}}\Tilde{\Sigma}_{i_{2}j}\Tilde{\Sigma}_{ji_{3}}X_{i_{3}t_{2}}X_{i_{4}t_{2}}\Tilde{\Sigma}_{i_{4}k}\Tilde{\Sigma}_{ki_{5}}X_{i_{5}t_{3}}X_{i_{6}t_{3}}\Tilde{\Sigma}_{i_{6}l}\Tilde{\Sigma}_{li_{7}}X_{i_{7}t_{4}}X_{i_{8}t_{4}}\Tilde{\Sigma}_{i_{8}i} \right].
\end{equation}

The first step is to write the previous equation as a function of columns of the noise vector $\bm{\xi}$ using the i.i.d. assumption on the columns of $\bm{X}$ and the independence between $\bm{\Sigma}$ and $\bm{X}$, we obtain

\begin{align}
\label{eq:4th_moment_before_wick}
\mathbb{E}\left[\tau(\bm{E}^{4}) \right] &=\frac{t(t-1)(t-2)(t-3)}{nt^{4}}\mathbb{E}\left[\tau(\bm{\Sigma^{4}}) \right] \notag \\
&+\frac{t(t-1)(t-2)}{nt^{4}}\sum_{i_{1},i_{2},i_{3},i_{4}=1}^{n}\mathbb{E}\left[\Tilde{\Sigma}_{ii_{1}}\Tilde{\Sigma}_{i_{2}j}\Tilde{\Sigma}_{ji_{3}}\Tilde{\Sigma}_{i_{4}k}\Sigma_{kl}\Sigma_{li} \right]\mathbb{E}\left[\xi_{i_{1}} \xi_{i_{2}}\xi_{i_{3}}\xi_{i_{4}}\right] \notag \\
&+\frac{t(t-1)(t-2)}{nt^{4}}\sum_{i_{1},i_{2},i_{5},i_{6}=1}^{n}\mathbb{E}\left[\Tilde{\Sigma}_{ii_{1}}\Tilde{\Sigma}_{i_{2}j}\Sigma_{jk}\Tilde{\Sigma}_{ki_{5}}\Tilde{\Sigma}_{i_{6}l}\Sigma_{li}\right]\mathbb{E}\left[\xi_{i_{1}} \xi_{i_{2}}\xi_{i_{5}}\xi_{i_{6}}\right] \notag \\
&+\frac{t(t-1)(t-2)}{nt^{4}}\sum_{i_{1},i_{2},i_{7},i_{8}=1}^{n}\mathbb{E}\left[\Tilde{\Sigma}_{ii_{1}}\Tilde{\Sigma}_{i_{2}j}\Sigma_{jk}\Sigma_{kl}\Tilde{\Sigma}_{li_{7}}\Tilde{\Sigma}_{i_{8}i}\right]\mathbb{E}\left[\xi_{i_{1}} \xi_{i_{2}}\xi_{i_{7}}\xi_{i_{8}}\right] \notag \\
&+\frac{t(t-1)(t-2)}{nt^{4}}\sum_{i_{3},i_{4},i_{5},i_{6}=1}^{n}\mathbb{E}\left[\Sigma_{ij}\Tilde{\Sigma}_{ji_{3}}\Tilde{\Sigma}_{i_{4}k}\Tilde{\Sigma}_{ki_{5}}\Tilde{\Sigma}_{i_{6}l}\Sigma_{li}\right]\mathbb{E}\left[\xi_{i_{3}} \xi_{i_{4}}\xi_{i_{5}}\xi_{i_{6}}\right] \notag \\
&+\frac{t(t-1)(t-2)}{nt^{4}}\sum_{i_{3},i_{4},i_{7},i_{8}=1}^{n}\mathbb{E}\left[\Sigma_{ij}\Tilde{\Sigma}_{ji_{3}}\Tilde{\Sigma}_{i_{4}k}\Sigma_{kl}\Tilde{\Sigma}_{li_{7}}\Tilde{\Sigma}_{i_{8}i}\right]\mathbb{E}\left[\xi_{i_{3}} \xi_{i_{4}}\xi_{i_{7}}\xi_{i_{8}}\right] \notag \\
&+\frac{t(t-1)(t-2)}{nt^{4}}\sum_{i_{5},i_{6},i_{7},i_{8}=1}^{n}\mathbb{E}\left[\Sigma_{ij}\Sigma_{jk}\Tilde{\Sigma}_{ki_{5}}\Tilde{\Sigma}_{i_{6}l}\Tilde{\Sigma}_{li_{7}}\Tilde{\Sigma}_{i_{8}i}\right]\mathbb{E}\left[\xi_{i_{5}} \xi_{i_{6}}\xi_{i_{7}}\xi_{i_{8}}\right] \notag \\
&+\frac{t(t-1)}{nt^{4}}\sum_{i_{3},i_{4},i_{5},i_{6},i_{7},i_{8}=1}^{n}
\mathbb{E}\left[ \Sigma_{ij} \Tilde{\Sigma}_{ji_{3}} \Tilde{\Sigma}_{i_{4}k} \Tilde{\Sigma}_{ki_{5}}\Tilde{\Sigma}_{i_{6}l}\Tilde{\Sigma}_{li_{7}}\Tilde{\Sigma}_{i_{8}i}\right]\mathbb{E}\left[ \xi_{i_{3}}\xi_{i_{4}}\xi_{i_{5}}\xi_{i_{6}}\xi_{i_{7}}\xi_{i_{8}}\right]\notag \\
&+\frac{t(t-1)}{nt^{4}}\sum_{i_{1},i_{2},i_{5},i_{6},i_{7},i_{8}=1}^{n}
\mathbb{E}\left[\Tilde{\Sigma}_{ii_{1}} \Tilde{\Sigma}_{i_{2}j}\Sigma_{jk} \Tilde{\Sigma}_{ki_{5}}\Tilde{\Sigma}_{i_{6}l}\Tilde{\Sigma}_{li_{7}}\Tilde{\Sigma}_{i_{8}i}\right]\mathbb{E}\left[ \xi_{i_{1}}\xi_{i_{2}}\xi_{i_{5}}\xi_{i_{6}}\xi_{i_{7}}\xi_{i_{8}}\right]\notag \\
&+\frac{t(t-1)}{nt^{4}}\sum_{i_{1},i_{2},i_{3},i_{4},i_{7},i_{8}=1}^{n}
\mathbb{E}\left[\Tilde{\Sigma}_{ii_{1}} \Tilde{\Sigma}_{i_{2}j}\Tilde{\Sigma}_{ji_{3}}\Tilde{\Sigma}_{i_{4}k}\Sigma_{kl}\Tilde{\Sigma}_{li_{7}}\Tilde{\Sigma}_{i_{8}i}\right]\mathbb{E}\left[ \xi_{i_{1}}\xi_{i_{2}}\xi_{i_{3}}\xi_{i_{4}}\xi_{i_{7}}\xi_{i_{8}}\right]\notag \\
&+\frac{t(t-1)}{nt^{4}}\sum_{i_{1},i_{2},i_{3},i_{4},i_{7},i_{8}=1}^{n}
\mathbb{E}\left[\Tilde{\Sigma}_{ii_{1}} \Tilde{\Sigma}_{i_{2}j}\Tilde{\Sigma}_{ji_{3}}\Tilde{\Sigma}_{i_{4}k}\Tilde{\Sigma}_{ki_{5}}\Tilde{\Sigma}_{i_{6}l}\Sigma_{li}\right]\mathbb{E}\left[ \xi_{i_{1}}\xi_{i_{2}}\xi_{i_{3}}\xi_{i_{4}}\xi_{i_{5}}\xi_{i_{6}}\right]\notag \\
&+\frac{1}{nt^{3}}\sum_{i_{1},i_{2},i_{3},i_{4},i_{5},i_{6},i_{7},i_{8}=1}^{n}\mathbb{E}\left[\Tilde{\Sigma}_{ii_{1}}\Tilde{\Sigma}_{i_{2}j}\Tilde{\Sigma}_{ji_{3}}\Tilde{\Sigma}_{i_{4}k}\Tilde{\Sigma}_{ki_{5}}\Tilde{\Sigma}_{i_{6}l}\Tilde{\Sigma}_{li_{7}}\Tilde{\Sigma}_{i_{8}i} \right] \mathbb{E}\left[\xi_{i_{1}} \xi_{i_{2}}\xi_{i_{3}}\xi_{i_{4}}\xi_{i_{5}}\xi_{i_{6}}\xi_{i_{7}}\xi_{i_{8}}       \right]
\end{align}

Then we link Eq.\ \eqref{eq:4th_moment_before_wick} with Gaussian moments using the argument written in Eq.\ \eqref{eq:argument_gaussian} and after applying Wick theorem, we obtain 

\begin{align}
\mathbb{E}\left[\tau(\bm{E}^{4}) \right] &= \frac{t(t-1)(t-2)(t-3)}{t^{4}}\mathbb{E}\left[ \tau(\bm{\Sigma}^{4})\right] \notag\\
 &+\frac{t(t-1)(t-2)}{nt^{4}}\frac{\mathbb{E}\left[\Tr((\bm{\xi}\bm{\xi}^{T})^{2})\right]}{n(n+2)}\mathbb{E}\left[12\Tr(\bm{\Sigma}^{4})+4\Tr(\bm{\Sigma}^{3})\Tr(\bm{\Sigma}) +2\Tr(\bm{\Sigma}^{2})^{2}\right]\notag \\
 &+\frac{t(t-1)}{nt^{4}}\frac{\mathbb{E}\left[\Tr((\bm{\xi}\bm{\xi}^{T})^{3})\right]}{n(n+2)(n+4)}\mathbb{E}\left[32 \Tr(\bm{\Sigma}^{4}) + 16\Tr(\bm{\Sigma}^{3})\Tr(\bm{\Sigma})+ 8\Tr(\bm{\Sigma}^{2})^{2} + 4\Tr(\bm{\Sigma}^{2})\Tr(\bm{\Sigma})^{2}\right]\notag\\
 &+\frac{1}{nt^{3}}\frac{\mathbb{E}\left[\Tr((\bm{\xi}\bm{\xi}^{T})^{4} \right]}{n(n+2)(n+4)(n+6)}\mathbb{E}\left[ 48\Tr(\bm{\Sigma}^{4}) + 32\Tr(\bm{\Sigma}^{3})\Tr(\bm{\Sigma})+ 14\Tr(\bm{\Sigma}^{2})^{2} + 13\Tr(\bm{\Sigma}^{2})\Tr(\bm{\Sigma})^{2} + \Tr(\bm{\Sigma})^{4}\right],
\end{align}

and be taking the high dimensional limit for $n,t\to\infty$ with $\frac{n}{t}\to q$, we obtain

\begin{align}
\label{eq:fourth_order_E}
 \mathbb{E}\left[\tau(\bm{E}^{4}) \right] &=\mathbb{E}\left[ \tau(\bm{\Sigma}^{4})\right]+q\frac{\mathbb{E}\left[\Tr((\bm{\xi}\bm{\xi}^{T})^{2})\right]}{n(n+2)}\mathbb{E}\left[4\tau(\bm{\Sigma}^{3})\tau(\bm{\Sigma}) +2\tau(\bm{\Sigma}^{2})^{2}\right]\notag\\
 &+q^{2}\frac{\mathbb{E}\left[\Tr((\bm{\xi}\bm{\xi}^{T})^{3})\right]}{n(n+2)(n+4)}\mathbb{E}\left[4\tau(\bm{\Sigma}^{2})\tau(\bm{\Sigma})^{2}\right]+q^{3}\frac{\mathbb{E}\left[\Tr((\bm{\xi}\bm{\xi}^{T})^{4} \right]}{n(n+2)(n+4)(n+6)}\mathbb{E}\left[\tau(\bm{\Sigma})^{4}\right]
\end{align}

Then to obtain the third and fourth order tracial moments of $\bm{E}$ when $\bm{\Sigma}$ is an inverse Wishart with parameters $n$ and $p$, we recover the cumulants of $\bm{\Sigma}$ by computing the Taylor expansion of the R-transform of $\bm{\Sigma}$ around $0$. Indeed, the R-transform of $\bm{\Sigma}$ equals \cite{potters2020first}
\begin{equation}
    R_{\bm{\Sigma}}(x)=\frac{1-\sqrt{1-4px}}{2px},
\end{equation}
which gives the following Taylor expansion around $0$ up to order $3$
\begin{equation}
    R_{\bm{\Sigma}}(x)\underset{x \to 0}{=}1+px+2p^{2}x^{2}+5p^{3}x^{3}+o(x^3)
\end{equation}

By identification, the four first cumulants are equal to

 \[
\begin{cases}
\kappa_{1}=1 \\
\kappa_{2}=p \\
\kappa_{3}=2p^{2} \\
\kappa_{4}=5p^{3}. 
\end{cases}
\]

Then using the recursive formula linking moments and cumulants \cite{kendall1948advanced}, we obtain 

\begin{equation}
\label{eq:list_moments_sigma_invW}
\begin{array}{r@{\;}l@{\;}l}
\left\{
\begin{array}{l}
\tau(\bm{\Sigma}) = \kappa_{1} \\
\tau(\bm{\Sigma}^{2}) = \kappa_{2} - \kappa_{1}^{2} \\
\tau(\bm{\Sigma}^{3}) = \kappa_{3} + 3\kappa_{2}\kappa_{1} + \kappa_{1}^{3} \\
\tau(\bm{\Sigma}^{4}) = \kappa_{4} + 4\kappa_{1}\kappa_{3} + 3\kappa_{2}^{2} + 6\kappa_{2}\kappa_{1}^{2} + \kappa_{1}^{4}
\end{array}\right.
& \implies \hspace{5mm} &
\left\{
\begin{array}{l}
\tau(\bm{\Sigma}) = 1 \\
\tau(\bm{\Sigma}^{2}) = 1+p\\
\tau(\bm{\Sigma}^{3}) = 1+3p+2p^{2} \\
\tau(\bm{\Sigma}^{4}) = 1+6p+11p^{2}+5p^{3}
\end{array}\right.
\end{array} 
\end{equation}

and by re-injecting the list of moments of Eq.\ \eqref{eq:list_moments_sigma_invW} in Eq.\ \eqref{eq:third_order_moment_E} and Eq.\ \eqref{eq:fourth_order_E}, we obtain the desired result.

\end{proof}
\end{document}